\renewcommand{\raggedright}{\leftskip=0pt \rightskip=0pt plus 0cm}
\newcommand{\E}{\mathcal{E}}
\newcommand{\myketbra}[1]{\ketbra{#1}{#1}}
\newcommand{\comp}[1]{\widebar{#1}}
\newcommand{\norm}[1]{\left\lVert#1\right\rVert}
\newcommand{\abs}[1]{|#1|}
\newcommand{\defeq}{\vcentcolon=}
\newcommand{\ketbra}[2]{\ket{#1}\!\bra{#2}}
\newcommand{\one}{\mathds{1}}
\newcommand{\R}{\mathbb{R}}
\newcommand{\Z}{\mathbb{Z}}
\newcommand{\N}{\mathbb{N}}
\newcommand{\Hilb}{\mathcal{H}}
\DeclareMathOperator{\Tr}{Tr}
\DeclareMathOperator{\rk}{Rank}
\DeclareMathOperator{\range}{rng}
\newtheorem{theorem}{Theorem}
\newtheorem{conjecture}{Conjecture}
\newtheorem{corollary}{Corollary}
\newtheorem{lemma}{Lemma}
\newtheorem{example}{Example}
\theoremstyle{definition}
\newtheorem{definition}{Definition}
\begin{document}

\title{Higher-form anomaly and long-range entanglement of mixed states}

\author{Leonardo A. Lessa}
\email{llessa@pitp.ca}
\affiliation{Perimeter Institute for Theoretical Physics, Waterloo, Ontario N2L 2Y5, Canada}
\affiliation{Department of Physics and Astronomy, University of Waterloo, Waterloo, Ontario N2L 3G1, Canada}

\author{Shengqi Sang}
\email{sangsq@stanford.edu}
\affiliation{Perimeter Institute for Theoretical Physics, Waterloo, Ontario N2L 2Y5, Canada}
\affiliation{Department of Physics and Astronomy, University of Waterloo, Waterloo, Ontario N2L 3G1, Canada}
\affiliation{Department of Physics, Stanford University, Stanford, CA 94305, USA}

\author{Tsung-Cheng Lu}
\email{tclu@umd.edu}
\affiliation{Perimeter Institute for Theoretical Physics, Waterloo, Ontario N2L 2Y5, Canada}
\affiliation{Joint Center for Quantum Information and Computer Science, University of Maryland, College Park, Maryland 20742, USA}

\author{Timothy H. Hsieh}
\email{thsieh@pitp.ca}
\affiliation{Perimeter Institute for Theoretical Physics, Waterloo, Ontario N2L 2Y5, Canada}

\author{Chong Wang}
\email{cwang4@pitp.ca}
\affiliation{Perimeter Institute for Theoretical Physics, Waterloo, Ontario N2L 2Y5, Canada}

\begin{abstract}
In open quantum systems, we directly relate anomalies of higher-form symmetries to the long-range entanglement of any mixed state with such symmetries.
First, we define equivalence classes of long-range entanglement in mixed states via stochastic local channels (SLCs), which effectively ``mod out'' any classical correlations and thus distinguish phases by  differences in long-range quantum correlations only. 
It is then shown that strong symmetries of a mixed state and their anomalies (non-trivial braiding and self-statistics) are intrinsic features of the entire phase of matter. For that, a general procedure of symmetry pullback for strong symmetries is introduced, whereby symmetries of the output state of an SLC are dressed into symmetries of the input state, with their anomaly relation preserved. This allows us to prove that states in (2+1)-D with anomalous strong 1-form symmetries exhibit long-range bipartite entanglement, and to establish a lower bound for their topological entanglement of formation, a mixed-state generalization of topological entanglement entropy. For concreteness, we apply this formalism to the toric code under Pauli-X and Z dephasing noise, as well as under ZX decoherence, which gives rise to the recently discovered intrinsically mixed-state topological order.
Finally, we conjecture a connection between higher-form anomalies and long-range multipartite entanglement for mixed states in higher dimensions.

\end{abstract}

\maketitle

\tableofcontents

\section{Introduction}

A remarkable feature of quantum many-body states is the ability to exhibit long-range entanglement (LRE), which cannot be destroyed by any local perturbations on their constituents \cite{zeng_quantum_2019}.
In the past decades, LRE has been prominently studied in states with topological order (TO) \cite{wen_topological_1990, chen_local_2010}, many of whom are realized as ground states of gapped Hamiltonians. Crucially, some of their defining features, such as the ground state degeneracy on a torus, are protected to not only local perturbations of the Hamiltonian, but also to sufficiently weak decoherence, arising from interactions with the environment. The robustness to the latter can be understood when TO is viewed as a quantum error correction code with distance scaling with the system size, as any encoded logical information can be recovered after decoherence noise with strength $p$, up to a threshold $p_c$ \cite{dennis_topological_2002}. In addition to this view, a novel perspective on noisy topological states has been forming, which describes the different regimes of recoverability as distinct mixed-state phases of matter, and several information-theoretic measures that probe the phase transition have been proposed \cite{sang_stability_2024, fan_diagnostics_2024, bao_mixedstate_2023}. However, these methods rely on the comparison to the noiseless pure-state TO, and no complete characterization of TO intrinsic to a single mixed-state $\rho$ has been accomplished. This is particularly relevant in light of the recent discovery of intrinsically mixed-state TO (imTO) phases, with no pure-state counterpart \cite{wang_intrinsic_2025, ellison_classification_2025, sohal_noisy_2025}.

The primary goal of this paper is, then, to understand topologically ordered mixed states as part of long-range entanglement mixed-state phases of matter.
For that, we employ the higher-form symmetry formalism, which extends Landau's paradigm to symmetry operators supported on manifolds of codimension $p$, with $p=0$ corresponding to ordinary global symmetries \cite{gaiotto_generalized_2014, mcgreevy_generalized_2023}. For TO in $(2+1)$-D, the braiding and self-statistics of anyonic excitations become nontrivial commutation relations of the operators that transport them around, which, when disposed in a loop, form \emph{anomalous} one-form symmetries. For technical simplicity, we only consider abelian topological orders in this work, since their higher form symmetries are invertible and can be implemented as finite-depth local unitary circuits.

Since the anomaly content is contained in the operator algebra of the higher-form symmetries, it can be readily generalized to symmetric mixed states $\rho$, which can be strongly or weakly symmetric \cite{buca_note_2012, groot_symmetry_2022}. An ensemble $\rho = \sum_i p_i \ketbra{\psi_i}{\psi_i}$ of pure states $\ket{\psi_i}$ is strongly symmetric under a symmetry operator $U$ if  each pure state in its decomposition is symmetric with the same symmetry charge $\lambda \in U(1)$, which is equivalent to $U \rho = \lambda \rho$. If the restriction on equal charge is dropped, $U$ remains only a \emph{weak} symmetry of $\rho$, i.e. $U \rho U^\dagger = \rho$ . The distinction between strong and weak symmetries is absent in pure states, so their interplay can reveal phenomena found only in open quantum systems. This has been the subject of many recent works, including investigations on spontaneous symmetry breaking from strong to weak symmetry \cite{sala_spontaneous_2024, lessa_strongweak_2024, zhang_strongweak_2024}, and strong-weak mixed symmetry protected topological phases \cite{ma_average_2023, ma_topological_2023, ma_symmetry_2024} and anomalies \cite{lessa_mixedstate_2024, wang_anomaly_2024}.

In this work, we consider states that are strongly symmetric under (a subset of) anomalous symmetries, as opposite to only weakly symmetric. For one, even featureless states, such as the maximally mixed state $\one / \dim{\mathcal{H}}$, can be weakly symmetric under (anomalous) symmetries; but, most importantly, we show that strong symmetries have good inheritance properties under action of a large class of ``local'' quantum operations, called stochastic local channel (SLC), which generalizes finite-depth quantum circuits by allowing for geometrically local interactions with an environment described by an arbitrary classical probability distribution (See Def. \ref{def:SLC} and Eq. \eqref{eq:SLC_decomposition}). More precisely, if $\rho$ is transformed via an SLC $\E$, resulting in a noisy state $\E(\rho)$, then any strong symmetry $g$ of $\E(\rho)$ gives rise to a strong symmetry $\tilde{g}$ of $\rho$, 
up to a product-state ancilla addition to $\rho$ and a locality-preserving modification of $g$. We call $\tilde{g}$ the \emph{symmetry pullback} of $g$ under $\E$ and denote it by $\tilde{g} = \E^*(g)$.

We take the symmetry pullback mechanism as the guiding principle to argue that mixed-states with a strong anomalous symmetries feature nontrivial patterns of long-range entanglement. It naturally leads to a notion of phase equivalence of mixed states $\rho$ and $\sigma$ that share the same strong symmetries by requiring them to be two-way connected via SLCs. Similar to the finite-depth local channels and to evolutions under local Lindbladians, both of which have been widely used to define mixed state phases \cite{coser_classification_2019a, ma_average_2023, ma_topological_2023, sang2024mixed, sang_stability_2024, wang_intrinsic_2025, wang_anomaly_2024, lessa_mixedstate_2024, lessa_strongweak_2024} and are themselves generalizations of the quasiadiabatic evolution of gapped ground states to mixed states, SLCs cannot create long-range entanglement. However, SLCs can create long-range classical correlations, so  the trivial phase defined via SLCs 
consists of arbitrary mixtures of short-range entangled states, which has also been deemed ``trivial'' or \textit{separable} in another set of works \cite{hastings2011topological, chen2024separability, chenUnconventionalTopologicalMixedstate2024, wang_analog_2024}.

We apply these novel techniques to TO states by proving that the anomaly of their one-form symmetries is invariant under symmetry pullback, and that no strongly symmetric anomalous state is bipartite separable, i.e. is not of the form $\rho_{\text{2-sep}} = \sum_i p_i \rho^A_i \otimes \rho_i^B$, for states $\rho^A_i$ and $\rho^B_i$ in regions $A$ and $B = A^c$. Together, these facts imply that, for example, every state in the toric code phase is long-range bipartite entangled. Furthermore, under sufficiently strong noise, the toric-code state may undergo a phase transition, which can be explained by certain changes in the strong one-form symmetries. One possibility, as occurs with $X$ or $Z$ dephasing, is that one of the strong symmetries associated with the $e$ and $m$ anyons becomes weak and the LRE is destroyed. However, another possibility is to keep the fermionic symmetry associated to $f=em$ strong, realizing the fermionic imTO \cite{wang_intrinsic_2025}, which is still long-range bipartite entangled, due to the non-bosonic self-statistics.

As described above, a mixed state with strong anomalous 1-form symmetries must be long-range bipartite entangled. Can it be diagnosed by certain information-theoretic quantity? For pure states, the topological entanglement entropy (TEE) \cite{kitaev_topological_2006, levinDetectingTopologicalOrder2006} provides one such diagnostic. Unfortunately, for mixed states, TEE can receive contributions from long-range classical correlations, so even fully classical mixed states (e.g. an ensemble of Z-basis product states) may have a non-zero TEE. Drawing inspiration from the entanglement of formation $E_F$ \cite{bennett_mixedstate_1996}, a mixed-state entanglement measure defined as $E_F= \min_{\{p_i, \rho_i \}}\sum_{i} p_i S_A(\rho_i)$ by minimizing the averaged entanglement entropy $S_A$ over all decompositions of the mixed state $\rho$, we propose the topological entanglement of formation $\gamma_F$ to diagnose mixed-state long-range entanglement via the same procedure applied to TEE. In particular, built on the approach in Refs. \cite{kim_universal_2023, levin_physical_2024}, which prove a lower bound for the TEE of topologically ordered pure states, we prove a lower bound for the TEF of mixed states with strong 1-form anomaly. Namely, $\gamma_F \geq \frac{1}{2} \log n$, where $n$ is related to the number of nontransparent strong symmetry anyons. For anomalies between strong and weak symmetries, a similar lower bound is described, but for the TEE instead of the TEF. We also prove that TEF must decrease monotonically under onsite noise channels. This result, together with the lower bound on TEF, shows that the toric code under Pauli-Z or X noise has $\gamma_F = \log 2 $ throughout the entire long-range entanglement phase. Finally, using a particular decomposition of the noisy toric-code state studied in \cite{chen2024separability, wang_analog_2024}, we show that the TEF is zero outside the toric-code phase.

The rest of the paper is organized as follows: in section \ref{sec:TO_mixed-state_phases}, we introduce a new definition of mixed-state phase of matter based on patterns of long-range entanglement (Sec. \ref{sec:LRE_vs_LRC}), and prove that strong symmetries of one state can be extended to other states in the same phase via symmetry pullback (Sec. \ref{sec:symmetry_pullback}).  In section \ref{sec:anomaly_def}, we define anomalies of one-form symmetries in $(2+1)$-D by local operatorial relations which implement braiding (Sec. \ref{sec:anomaly-braiding}) and particle exchange (Sec. \ref{sec:anomaly-top_twist}). Having defined anomaly, we argue in section \ref{sec:long-range} that it implies long-range bipartite entanglement (Sec. \ref{sec:long-range_bip_ent}) and provides a lower bound for the topological entanglement of formation (Sec. \ref{sec:long-range_TEF}), which generalizes the topological entanglement entropy to mixed states. In Sec. \ref{sec:example_toric_code}, we apply the preceding techniques to the toric code under several types of noise: Pauli-$Z$ and $X$ (Sec. \ref{sec:pauli-x_z_dephasing}) and $ZX$ dephasing (Sec. \ref{sec:zx_dephasing}), and discuss how they form quantum or classical memories. Finally, in section \ref{sec:higher-form_higher-dim}, we conjecture a generalized correspondence between anomaly of higher-form symmetries in higher dimensions and long-range multipartite entanglement, and provide an explicit example of mutual anomaly between 0-form and 1-form symmetries in section \ref{sec:set}.  

\subsection{Relation to previous works}

The results and discussions in this paper complements and partially overlaps with a number of recent works on mixed-state topological phases of matter that were published in parallel with the preparation of the current work. To help the reader navigate this rapidly evolving field, we clarify the differences and similarities between the present manuscript and a selection of five other works: 
\begin{itemize}
    \item Wang, Wu, Wang \cite{wang_intrinsic_2025}. The authors study the decoherence-induced proliferation of fermionic anyons $f$ in some exactly solvable models, and how it leads to an intrinsic mixed-state topological order. Two of their remarkable features are the long-range bipartite entanglement, which we generalize to any system with strong fermionic one-form symmetries (See Theorem \ref{thm:LRBE}), and the nonzero topological entanglement negativity, which we derive from an alternative method in Appendix \ref{appendix:negativity}. We also discuss the $ZX$-dephased toric code in greater depth in Sec. \ref{sec:zx_dephasing}.
    
    \item Ellison and Cheng \cite{ellison_classification_2025}, and Sohal and Prem \cite{sohal_noisy_2025}. Both works aim to classify mixed-state phases of matter, and to especially describe the intrinsically mixed-state topological ordered states, such as the one proposed by \cite{wang_intrinsic_2025}. By employing generalized ``gauging'' and ``incoherent proliferation'' mechanisms, they argued for a classification by pre-modular topological anyon theories, with possibly degenerate braiding. Although our analysis agrees with the view above, we do not attempt a full classification of mixed-state topological order. Furthermore, the focus on the interplay of strong and weak symmetries is present here as well. In particular, \cite{ellison_classification_2025} employed the same symmetry pullback mechanism of Lemma \ref{lemma:symmetry_pullback} to transfer the strong symmetries of one state to another in the same phase.
    
    \item Li, Lee, Yoshida \cite{li_how_2024}. The authors prove the long-range entanglement of states with emergent anyons and fermions within the stabilizer formalism, along with lower bounds to geometrical measures of long-range entanglement. Similarly, we prove the long-range \emph{bipartite} entanglement of states with anyons and fermions, but from the more general formalism of one-form symmetries developed in Sec. \ref{sec:anomaly_def}.
    
    \item Wang, Song, Meng, Grover \cite{wang_analog_2024}. The authors define a generalization of the topological entanglement entropy for mixed states by using the convex roof construction \cite{uhlmann_entropy_1998, horodecki_quantum_2009}, and argue that it captures the transition out of a topological quantum memory as the strength of Pauli noise increases. Here, we define a very similar quantity, denoted by topological entanglement of formation (See Def. \ref{def:TEF}), and prove a lower bound based on the braiding of strong one-form symmetries, thus generalizing analogous results for the TEE to mixed-states \cite{kim_universal_2023, levin_physical_2024}. This complements the discussion on \cite{wang_analog_2024}, since, together with their results, we can more confidently assert the value of the TEF for the toric code under Pauli-$Z$ and $X$ dephasing for all values of the noise strength (See Fig. \ref{fig:phase_diagram}).
\end{itemize}

\section{Topologically ordered mixed-state phases}\label{sec:TO_mixed-state_phases}

\subsection{Long-range entanglement vs. correlation}\label{sec:LRE_vs_LRC}

To define long-range entanglement, we first need to define which operations create only short-range entanglement. For pure states, finite-time evolutions under local Hamiltonians and their discrete counterparts, \textit{i.e.} the (finite-depth) local unitary (LU) circuits, are the paradigmatic choices, as exponentially small or no correlation is built up, respectively. Accordingly, two pure states are in the same phase if one can be transformed into the other by a $U_{\rm LU}$. If we think of long-range entanglement as a quantum resource, the LUs are the free operations, which don't create any resource, and the phases of matter are the states with the same resource content.

For open quantum systems, a natural generalization of the LU is the local channel (LC) defined as follows~\cite{hastings2011topological}: 
\begin{definition}[Local channels (LC)]\label{def:LC}
A quantum channel $\E_{\text{LC}}$ is a (finite-depth) LC if it can be Stinespring dilated to a (finite-depth) local unitary $U_{\text{LU}}$ acting on the original system and an ancilla $a$ with the same geometry, which is initialized in a product state $\ket{\bf 0}_a$:  
\begin{equation}\label{eq:LC_def}
    \E_{\rm LC}(\sigma) = \Tr_a\left[U_{\rm LU} \left(\sigma \otimes \ket{\bf 0}\bra{\bf 0}_a\right) U_{\rm LU}^\dagger \right].
\end{equation}    
\end{definition}
In other words, an LC is defined by the composition of three operations. First, an ancillary Hilbert space is added to each site, collectively referred to as $a$, and initialized in a product state $\ket{\bf 0}$. Then, an LU is applied to the joint system. Finally, the ancilla $a$ is traced out.

With this construction, we can define an equivalence relation\cite{coser_classification_2019a} between mixed states, $\rho$ and $\sigma$, via two-way convertibility under LCs $\E_{\rm LC}$ and $\mathcal{F}_{\rm LC}$, i.e. $\E_{\rm LC}(\rho) = \sigma$ and $\mathcal{F}_{\rm LC}(\sigma) = \rho$. The equivalence classes are characterized by patterns of long-range correlation and have been used in the recent literature to define mixed-state phases of matter, with many successful uses that extend the pure state definition.

However, mixed states can have long-range correlations of classical or quantum nature. The definition above does not discriminate between the two, as LCs can create neither. This is exemplified by the (fully separable) classical ferromagnetic state $\frac{1}{2}[00\cdots 0] + \frac{1}{2}[11\cdots 1]$, where $[\psi]$ is shorthand for $\ketbra{\psi}{\psi}$. The state has no entanglement whatsoever, as it is a mixture of product states, yet belongs to a different phase than the product state $[00\cdots0]$ according to the LC definition since no LCs can create the long-range correlation $\lim_{|i-j|\rightarrow \infty}\langle Z_i Z_j \rangle - \langle Z_i \rangle \langle Z_j \rangle = 1$.

In order to define genuine long-range entangled mixed-state phases of matter beyond just correlations, we need to enlarge the set of free operations to allow free supply of long-range classical correlations. To this end, we define:
\begin{definition}[Stochastic local channels (SLC)]\label{def:SLC}
    A quantum channel $\mathcal{E_{\text{SLC}}}$ is a (finite-depth) SLC if it can be written as a convex combination of (finite-depth) local channels $\E_i$:\footnote{For infinite convex combinations of LCs, we further require uniformly bounded range $\sup_i \range(\E_{{\rm LC}, i}) < \infty$. For the definition of range of a channel, see Def. \ref{def:CP} in Appendix \ref{appendix:locality_SLC}.} 
    \begin{equation}
        \E_{\rm SLC} = \sum_i p_i \E_{{\rm LC}, i}.
    \end{equation}
\end{definition}

\begin{figure}[t]
    \centering
    \includegraphics[width=0.7\linewidth]{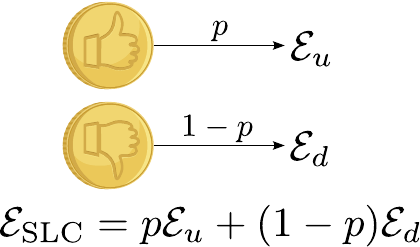}  
    \caption{Example of how a stochastic local channel (Def. \ref{def:SLC}) can be operationally realized by flipping a (biased) coin and having access to local channels $\E_u$ and $\E_d$. If the coin lands upwards, with probability $p$, the experimenter applies the LC $\E_u$, otherwise they apply $\E_d$, resulting in $\E_{\text{SLC}} = p \E_u + (1-p) \E_d$.
    }
    \label{fig:SLC_coin_toss}
\end{figure}

Operationally, an SLC can be realized by first sampling from the classical distribution $p$ and applying the LC $\E_i$ if the result $i$ is chosen. See Fig. \ref{fig:SLC_coin_toss} for a simple example with a coin toss. 

Since all quantum operations, \textit{i.e.} unitary gates and ancilla additions, are still local, SLCs cannot generate long-range entanglement. 
On the other hand, we observe that by acting SLCs on a product state, one can obtain any short-range entangled (SRE) mixed-states~\cite{hastings2011topological, chen2024separability}, \textit{i.e.} states of the form:
\begin{equation}\label{eq:mixed_SRE_states}
    \rho= \sum_i p_i \ketbra{\rm SRE_i}{\rm SRE_i},
\end{equation}
where each $\ket{\rm SRE_i}=U_{{\rm LU}, i}\ket{\bf 0}$ is a short-range entangled pure state. In particular, it contains any classical density matrix in the 0-1 basis. We thus see that SLCs generate any classical correlation. Borrowing nomenclature from the quantum resource theory literature~\cite{chitambar_quantum_2019}, allowing for any classical mixtures of LCs convexifies the long-range entanglement resource theory, as the set of SLCs is convex. This implies that the set of free states -- the ones that do not possess any resource, i.e. long-range entanglement -- is also convex, and thus exactly corresponds to the SRE states in Eq. \eqref{eq:mixed_SRE_states}.

Another way of realizing SLCs is to consider an LU acting on an extended space, similarly to an LC of Eq. \ref{eq:LC_def}, but starting with classically correlated ancillae. In this way, we introduce classical randomness as a resource not on the choice of channel but on the initial state. 
More explicitly, an SLC channel $\E_{\rm SLC} = \sum_{i=1}^n p_i \E_{{\rm LC}, i}$ always admits the following decomposition:
\begin{equation}\label{eq:SLC_decomposition}
    \E_{\rm SLC}[\sigma] = \Tr_a\left[U \left(\sigma \otimes \rho^{C}_a\right) U^\dagger \right]
\end{equation}
where $\rho^C_a$ is a classical distribution (i.e. being diagonal in the 0-1 basis) on the ancilla $a$, and $U$ is an LU circuit determined by $\{\E_{\rm LC, i}\}$. See Appendix \ref{appendix:SLC_decomp_derivation} for the explicit construction and the proof of the claim. 

From this perspective, in the same way that the finite-depth local channels (unitaries) generalize on-site local operations (unitaries) by allowing quantum gates to interact with their close neighbors, the stochastic local channels generalize the set of local operations and shared randomness (LOSR)~\cite{buscemi_all_2012,schmid_understanding_2023}, which is an alternative to the more widely used local operations and classical communications (LOCC) as a set of non-entangling operations\footnote{The geometrically local counterpart of LOCC, on the other hand, are the local adaptive circuits, which further allow for local measurements and feedforward via non-local classical communication. They are known to be much more powerful than LCs (and SLCs), and can create long-range entanglement in short depth~\cite{piroli_quantum_2021, friedman_locality_2022, lee_measurementprepared_2022, lu_measurement_2022, tantivasadakarn_hierarchy_2023, foss-feig_experimental_2023, tantivasadakarn_longrange_2024, iqbal_nonabelian_2024}
.}. 

Naturally, SLCs lead to a definition of mixed-state phases of matter based on long-range entanglement:
\begin{definition}[Mixed-state long-range entanglement phase of matter]\label{def:phase_of_matter}
    Two mixed-states, $\rho$ and $\sigma$, are in the same \textit{long-range entanglement phase of matter} if there exists two SLCs, $\E_{\rm SLC}$ and $\mathcal{F}_{\rm SLC}$, such that
    \begin{equation}
        \sigma = \E_{\rm SLC} (\rho) \quad \text{and} \quad \rho = \mathcal{F}_{\rm SLC} (\sigma).
    \end{equation}
\end{definition}
From now on, unless explicitly stated otherwise, we refer to these LRE phases of matter as just ``phases of matter'' for brevity.

For our purposes here, this phase equivalence via SCLs is more adequate than the one via LCs because it still differentiates topologically ordered states from trivial states, as will be argued below, while being indifferent to classical correlations. Further properties of SLCs, such as how the pure state phases are (un)changed, and how to characterize the locality properties of SLCs and other channels, are discussed in Appendix \ref{appendix:further_props_SLC}.

\subsection{Symmetry pullback}
\label{sec:symmetry_pullback}

Symmetries play central roles in understanding pure state quantum phases. Here we show that this remains to be the case when studying mixed-state entanglement phases.

For mixed-states, symmetry comes with two types: strong symmetry and weak symmetry~\cite{buca_note_2012, groot_symmetry_2022}. For a unitary $U$, we say $U$ is a strong (weak) symmetry of the state $\rho$, if $U\rho = \lambda \rho$, for $|\lambda| = 1$ ($U\rho U^\dagger=\rho$). If we view the state as an ensemble of pure states, e.g. $\rho=\sum_i p_i\myketbra{\psi_i}$, then $U$ being a strong symmetry means each individual $\ket{\psi_i}$ in the ensemble is symmetric under $U$ with equal charge $\lambda$, irrespective of how we decompose $\rho$; while $U$ being a weak symmetry means even though the total ensemble is unchanged under $U$, certain $\ket{\psi_i}$  might change non-trivially under $U$, or have different charges with respect to other states in the ensemble.

In this work we focus primarily on strong symmetries. For a mixed-state $\rho$, we define its strong symmetry group $G_\rho$ as follows:
\begin{equation}
    G_\rho = \{g \in U(\Hilb) \mid \exists \lambda \in U(1),\ g\rho = \lambda \rho\},
\end{equation}
Where we call the $g$-dependent $\lambda \in U(1)$ the charge of the state $\rho$ under the symmetry operator $g \in G_\rho$. We remark that, even though we will later specialize to subgroups of $G_\rho$, namely $k$-form symmetries, we keep full generality of discussion in this section. 

An important property of strong symmetry is that it is inherited by all states in the ensemble:
\begin{lemma}[Inheritance of strong symmetry]\label{lemma:decomposition_inheritance}
    If $\rho = \sum_i p_i \rho_i$, $p_i > 0$, is strongly symmetric under $g \in G_\rho$, i.e. $g \rho = \lambda \rho$, with charge $\lambda \in U(1)$, then each state $\rho_i$ in its ensemble \textit{inherits} the strong symmetry with the same charge: $\forall i, g \rho_i = \lambda \rho_i$.
\end{lemma}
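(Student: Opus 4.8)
The plan is to recognize that the strong-symmetry condition $g\rho = \lambda\rho$ is really a statement about the \emph{support} of $\rho$: it forces $\supp(\rho)$ (the range of $\rho$, which equals $(\ker\rho)^\perp$ since $\rho$ is Hermitian) to lie entirely inside the eigenspace $V_\lambda := \ker(g - \lambda\one)$ of the unitary $g$. Once this is known, the lemma follows from the elementary fact that in a nonnegative combination of positive semidefinite operators the support of each summand is contained in the support of the total.

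First I would diagonalize $\rho = \sum_a \mu_a \myketbra{a}$ with $\mu_a > 0$ (using $\rho \geq 0$), and apply $g\rho = \lambda\rho$ to an eigenvector $\ket{b}$: this gives $\mu_b\, g\ket{b} = \lambda\mu_b\ket{b}$, and since $\mu_b \neq 0$ we get $g\ket{b} = \lambda\ket{b}$. Hence every eigenvector of $\rho$ with nonzero eigenvalue lies in $V_\lambda$, i.e. $\supp(\rho) \subseteq V_\lambda$; equivalently $P_\lambda \rho = \rho$, where $P_\lambda$ is the orthogonal projector onto $V_\lambda$.

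Next I would carry out the support-monotonicity step. For any $\ket{v} \in \ker\rho$ we have $0 = \bra{v}\rho\ket{v} = \sum_i p_i \bra{v}\rho_i\ket{v}$, a sum of nonnegative terms (each $p_i > 0$, each $\rho_i \geq 0$), so $\bra{v}\rho_i\ket{v} = 0$ for all $i$; this forces $\rho_i^{1/2}\ket{v} = 0$ and hence $\rho_i\ket{v} = 0$. Therefore $\ker\rho \subseteq \ker\rho_i$, i.e. $\supp(\rho_i) \subseteq \supp(\rho) \subseteq V_\lambda$ for every $i$. Combining the two steps yields $g\rho_i = \lambda\rho_i$, which is the claim. (For an infinite ensemble the identity $\bra{v}\rho\ket{v} = \sum_i p_i \bra{v}\rho_i\ket{v}$ still holds with nonnegative summands, so the argument is unchanged.)

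I do not expect a genuine obstacle: the content is the operator-algebraic observation that a strong symmetry is a support constraint. The only points needing minor care are the two reductions above — the eigenvalue condition on $\supp(\rho)$ and the monotonicity of supports under PSD sums — and, for full rigor with infinite convex combinations, the termwise interchange of the sum with the expectation value, which is justified because all terms are nonnegative.
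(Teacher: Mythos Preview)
Your proposal is correct and follows essentially the same approach as the paper: both recast the strong-symmetry condition as the statement that $\supp(\rho)$ lies in the $\lambda$-eigenspace of $g$, and then use the nonnegative-sum argument $0 = \braket{v|\rho|v} = \sum_i p_i \braket{v|\rho_i|v}$ to deduce $\ker\rho \subseteq \ker\rho_i$, hence $\supp(\rho_i)\subseteq\supp(\rho)$. Your write-up is slightly more explicit (spelling out the $\rho_i^{1/2}\ket{v}=0$ step and the infinite-ensemble remark), but the logic is the same.
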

\begin{proof}
    The strong symmetry condition $g \rho = \lambda \rho$ is equivalent to $\forall \ket{v} \in \mathrm{Im}(\rho),\ g \ket{v} = \lambda \ket{v}$. 
    Thus, it suffices to prove that $\mathrm{Im}(\rho_i) \subseteq \mathrm{Im}(\rho)$ for any $\rho_i$ in the ensemble, or equivalently, that $\mathrm{ker}(\rho) \subseteq \mathrm{ker}(\rho_i)$. Indeed, for any $\ket{w} \in \mathrm{ker}(\rho)$, we have $0 = \braket{w|\rho|w} = \sum_i p_i \braket{w|\rho_i|w},$
    which, from the positivity of $p_i>0$ and $\rho_i \geq 0$, implies $\ket{w} \in \mathrm{ker}(\rho_i)$.
\end{proof}
A useful corollary is that extensions of $\rho$ also inherit strong symmetries:
\begin{corollary}\label{cor:extension_inheritance}
    If $\tilde{\rho} \in Q(\Hilb \otimes \Hilb')$ is an extension of $\rho \in Q(\Hilb)$, i.e. $\Tr_{\Hilb'}[\tilde{\rho}] = \rho$, and $\rho$ is strongly symmetric under $g$, then $\tilde{\rho}$ is strongly symmetric under $g \otimes \one_{\Hilb'}$.
\end{corollary}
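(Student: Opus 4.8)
The plan is to reduce the claim to the characterization of strong symmetry already used in the proof of Lemma~\ref{lemma:decomposition_inheritance}: the condition $g\rho=\lambda\rho$ is equivalent to $g\ket{v}=\lambda\ket{v}$ for all $\ket{v}\in\mathrm{Im}(\rho)$, i.e.\ to $g$ acting as $\lambda\,\id$ on $\mathrm{Im}(\rho)$. So it suffices to show that $g\otimes\one_{\Hilb'}$ acts as $\lambda\,\id$ on $\mathrm{Im}(\tilde\rho)$. The one geometric input I would isolate and prove is the support inclusion $\mathrm{Im}(\tilde\rho)\subseteq\mathrm{Im}(\rho)\otimes\Hilb'$, equivalently $\ker(\rho)\otimes\Hilb'\subseteq\ker(\tilde\rho)$; this says that taking a partial trace cannot enlarge the ``$\Hilb$-support'' of a positive operator.

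To prove that inclusion, I would argue on kernels. Fix $\ket{v}\in\ker(\rho)$ and $\ket{w}\in\Hilb'$, and extend $\ket{w}$ to an orthonormal basis $\{\ket{k}\}$ of $\Hilb'$. Then $0=\braket{v|\rho|v}=\braket{v|\Tr_{\Hilb'}[\tilde\rho]|v}=\sum_k \braket{v\otimes k|\tilde\rho|v\otimes k}$, a sum of nonnegative terms since $\tilde\rho\ge 0$; hence every term vanishes, in particular $\braket{v\otimes w|\tilde\rho|v\otimes w}=0$. Writing $\tilde\rho=\sqrt{\tilde\rho}\,\sqrt{\tilde\rho}$, this forces $\sqrt{\tilde\rho}\,(\ket{v}\otimes\ket{w})=0$ and therefore $\tilde\rho\,(\ket{v}\otimes\ket{w})=0$. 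Since $\ket{v}\in\ker(\rho)$ and $\ket{w}\in\Hilb'$ were arbitrary, $\ker(\rho)\otimes\Hilb'\subseteq\ker(\tilde\rho)$; taking orthogonal complements and using that $\rho$ and $\tilde\rho$ are Hermitian gives $\mathrm{Im}(\tilde\rho)\subseteq\mathrm{Im}(\rho)\otimes\Hilb'$.

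Finally, since $\rho$ is strongly symmetric under $g$ with charge $\lambda$, the operator $g$ restricts to $\lambda\,\id$ on $\mathrm{Im}(\rho)$, so $g\otimes\one_{\Hilb'}$ restricts to $\lambda\,\id$ on $\mathrm{Im}(\rho)\otimes\Hilb'$ and hence, by the inclusion just proved, on the subspace $\mathrm{Im}(\tilde\rho)$ as well. By the equivalence recalled at the outset, this is precisely $(g\otimes\one_{\Hilb'})\tilde\rho=\lambda\tilde\rho$, so $\tilde\rho$ is strongly symmetric under $g\otimes\one_{\Hilb'}$ with the same charge $\lambda$. I do not expect a real obstacle here; the only step worth writing out carefully is the support inclusion above. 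An equivalent, slightly different route avoids images altogether: diagonalize $\tilde\rho=\sum_k q_k\myketbra{\phi_k}$, so that $\rho=\sum_k q_k\Tr_{\Hilb'}\myketbra{\phi_k}$ is an ensemble decomposition and Lemma~\ref{lemma:decomposition_inheritance} yields $g\,\Tr_{\Hilb'}\myketbra{\phi_k}=\lambda\,\Tr_{\Hilb'}\myketbra{\phi_k}$ for each $k$; since each $\ket{\phi_k}$ is a purification of its own reduced state, its Schmidt vectors on $\Hilb$ lie in that state's image, whence $(g\otimes\one_{\Hilb'})\ket{\phi_k}=\lambda\ket{\phi_k}$, and summing over $k$ gives the result.
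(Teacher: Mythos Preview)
Your proof is correct. Interestingly, the \emph{alternative} route you sketch in the final sentence---decompose $\tilde\rho$ into pure states, Schmidt-decompose each, observe that the Schmidt vectors on $\Hilb$ appear in an ensemble for $\rho$ and hence inherit the charge via Lemma~\ref{lemma:decomposition_inheritance}---is precisely the paper's argument. Your \emph{main} route is different: you prove the support inclusion $\mathrm{Im}(\tilde\rho)\subseteq\mathrm{Im}(\rho)\otimes\Hilb'$ directly via the kernel computation, and then read off the result from the image characterization of strong symmetry without ever invoking Lemma~\ref{lemma:decomposition_inheritance}. This is arguably cleaner and more self-contained, since it isolates the one structural fact (partial trace cannot enlarge the $\Hilb$-support) and avoids the detour through pure-state decompositions and Schmidt bases. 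The paper's approach, on the other hand, keeps the argument uniform with the rest of the section by reusing Lemma~\ref{lemma:decomposition_inheritance} as the single workhorse.
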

\begin{proof}
    Let $\tilde{\rho} = \sum_i p_i \ketbra{\tilde{\psi}_i}{\tilde{\psi}_i}$ be a decomposition of $\tilde{\rho}$ and $\ket{\tilde{\psi}_i} = \sum_j \sqrt{q_{i,j}} \ket{\psi_{i,j}}\ket{j}_{\Hilb'}$ be the Schmidt decompositions of each of its pure states. Since $\rho = \sum_{i,j} p_i q_{i, j} \ketbra{\psi_{i,j}}{\psi_{i,j}}$, each $\ket{\psi_{i,j}}$ inherits the strong symmetry $g$ from $\rho$, by Lemma \ref{lemma:decomposition_inheritance}. Thus $\ket{\tilde{\psi}_i}$, and ultimately $\tilde{\rho}$, are strongly symmetric under $g \otimes \one_{\Hilb'}$.
\end{proof}
In the language of the strong symmetry group $G_\rho$, the preceding two results imply that
\begin{align}
    \forall \rho \in Q(\Hilb),\ \rho = \sum_i p_i \rho_i\ \Rightarrow\ & G_{\rho} \subseteq G_{\rho_i} \\
    \forall \rho \in Q(\Hilb),\ \rho = \Tr_{\Hilb'}[\tilde{\rho}]\ \Rightarrow\ & G_{\rho} \subseteq G_{\tilde{\rho}}.
\end{align}

With the inheritance property in mind, the following Lemma relates strong symmetries of the output state to that of the input state through an SLC\footnote{
We note that the symmetry pullback mechanism was also proposed and used in \cite{ellison_classification_2025} for a similar purpose, although not with this name.
}: 

\begin{lemma}[Symmetry pullback]\label{lemma:symmetry_pullback}
Suppose $\E=\sum_i p_i \E_i$ is an SLC. Then there exists $U$ being an LU and some ancillary qubits $A$ added locally on the lattice, such that for any state $\rho$, the map
\begin{equation}
g \mapsto \E_i^*(g) \defeq U^\dagger(g\otimes\mathbb{I}_A)U
\end{equation}
is an injective homomorphism from $G_{\E(\rho)}$ to $G_{\rho\otimes\ket{\bf 0}\bra{\bf 0}_A}$ that preserves the strong charge. Moreover, $U$ can be taken as the dilated LU of any $\E_i$ in the ensemble.
\end{lemma}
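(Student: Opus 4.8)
The plan is to pass everything through the Stinespring dilation of a \emph{single} term $\E_i$ in the convex decomposition $\E=\sum_j p_j\E_j$ (assume all $p_j>0$, discarding zero-weight terms), and let Lemma~\ref{lemma:decomposition_inheritance} and Corollary~\ref{cor:extension_inheritance} do the work of transporting the strong symmetry. Fix $i$. Since $\E_i$ is a local channel, Definition~\ref{def:LC} provides a finite-depth LU $U_i$ and a locally-added ancilla $A:=a_i$, initialized in $\ket{\bf 0}\bra{\bf 0}_{A}$, with $\E_i(\sigma)=\Tr_{A}[U_i(\sigma\otimes\ket{\bf 0}\bra{\bf 0}_{A})U_i^\dagger]$ for every $\sigma$. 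I would take $U:=U_i$ in the statement; this choice is independent of $\rho$, and any $i$ works, which is exactly the last sentence of the lemma.

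The homomorphism and injectivity are formal and need no symmetry input. The assignment $g\mapsto \E_i^*(g)=U^\dagger(g\otimes\mathbb{I}_A)U$ is multiplicative because $UU^\dagger=\mathbb{I}$, sends $\mathbb{I}\mapsto\mathbb{I}$, and is injective since $U^\dagger(g\otimes\mathbb{I}_A)U=\mathbb{I}$ forces $g\otimes\mathbb{I}_A=\mathbb{I}$, hence $g=\mathbb{I}$. The only thing left for this to be a well-defined map \emph{between strong symmetry groups} is that it sends $G_{\E(\rho)}$ into $G_{\rho\otimes\ket{\bf 0}\bra{\bf 0}_A}$ while preserving the charge; this is the substantive step.

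For that, take $g\in G_{\E(\rho)}$ with $g\,\E(\rho)=\lambda\,\E(\rho)$. Applying Lemma~\ref{lemma:decomposition_inheritance} to $\E(\rho)=\sum_j p_j\,\E_j(\rho)$ gives $g\,\E_i(\rho)=\lambda\,\E_i(\rho)$. The dilated state $\tilde\rho:=U_i(\rho\otimes\ket{\bf 0}\bra{\bf 0}_{A})U_i^\dagger$ is an extension of $\E_i(\rho)$ (trace out $A$), so Corollary~\ref{cor:extension_inheritance} yields $(g\otimes\mathbb{I}_A)\,\tilde\rho=\lambda\,\tilde\rho$; conjugating by $U_i^\dagger$ on the left and $U_i$ on the right gives $U_i^\dagger(g\otimes\mathbb{I}_A)U_i\,(\rho\otimes\ket{\bf 0}\bra{\bf 0}_A)=\lambda\,(\rho\otimes\ket{\bf 0}\bra{\bf 0}_A)$, i.e. $\E_i^*(g)\in G_{\rho\otimes\ket{\bf 0}\bra{\bf 0}_A}$ with the same charge $\lambda$. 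I would close by noting that $\E_i^*(g)$ is the advertised ``locality-preserving dressing'' of $g$: conjugation by the finite-depth LU $U_i$ spreads $g$ by at most its (finite) light cone, and $A$ was added locally on the lattice. The one point requiring care — more bookkeeping than difficulty — is to dilate only the individual $\E_i$ rather than trying to dilate the whole SLC at once, so that the inheritance through the classical mixture (Lemma~\ref{lemma:decomposition_inheritance}) and the inheritance through the dilation isometry (Corollary~\ref{cor:extension_inheritance}) compose to land precisely on $\rho\otimes\ket{\bf 0}\bra{\bf 0}_A$ with the original $\lambda$.
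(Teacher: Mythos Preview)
Your proof is correct and follows essentially the same route as the paper: dilate a single $\E_i$, use Lemma~\ref{lemma:decomposition_inheritance} to pass from $G_{\E(\rho)}$ to $G_{\E_i(\rho)}$, then Corollary~\ref{cor:extension_inheritance} to lift to the dilated state, and conjugate by $U^\dagger$. You are simply more explicit than the paper about the formal points (injectivity, homomorphism, charge preservation), which is fine.
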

\begin{proof}
We call $\rho$'s Hilbert space $\Hilb_Q$. Consider any $\E_i$'s Stinespring dilation form:
\begin{equation}
    \E_i[\rho] = \Tr_A[U(\rho\otimes\ketbra{\bf 0}{\bf 0}_A)U^\dagger]
\end{equation}
where $U$ is an LU on $\Hilb_Q\otimes\Hilb_A$. By the inheritance property of strong symmetry (Lemma \ref{lemma:decomposition_inheritance}), $g\in G_{\E(\rho)}$ implies $g \in G_{\E_i(\rho)}$, which by Corollary \ref{cor:extension_inheritance} further implies
\begin{equation}
    g\otimes \mathbb{I}_A \in G_{U(\rho\otimes\ketbra{\bf 0}{\bf 0}_A)U^\dagger}
\end{equation}
which is equivalent to:
\begin{equation}
    U^\dagger(g\otimes\mathbb{I}_A)U \in G_{\rho\otimes\ketbra{\bf 0}{\bf 0}_A}.
\end{equation}
\end{proof}
Thus, all of the strong symmetries of the output state $\E(\rho)$ must have a counterpart for the input state $\rho$ plus a trivial ancillary state $\ket{\mathbf 0}$
\footnote{We remark that the trivial ancilla state $\ketbra{\bf 0}{\bf 0}_A$ is in general necessary for the symmetry pullback to be valid. Otherwise, the maximally mixed state $\rho_{\rm MMS} \propto \one$ would inherit all strong symmetries from pure product states in the same phase, contradicting the fact that it has no strong symmetries: $G_{\rho} = \{\mathbb{I}\}$.}. 
Moreover, the mapping is locality-preserving, due to $\E$ being an SLC. We name this mechanism \emph{symmetry pullback}, and denote the resulting symmetry operator by $\E_i^*(g)$ or simply $\E^*(g)$, if the LC $\E_i$ is implicit. 

\begin{figure}[t]
    \centering
    \includegraphics[width=\linewidth]{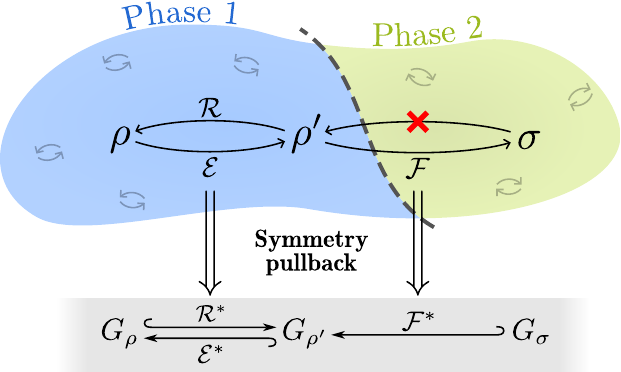}
    \caption{Illustration of the use of symmetry pullback to characterize the strong symmetries within and between phases. Suppose a state $\rho$ in phase 1 is acted upon by an ``error'' channel $\E$. If the resulting state $\rho' \defeq \E(\rho)$ is in the same phase, then there exists a recovery SLC $\mathcal{R}$, such that $\mathcal{R}(\rho') = \rho$. With these two channels, the strong symmetry group $G_\rho$ of $\rho$ can be pulled back to $G_{\rho'}$, and vice-versa. States in different phases are not two-way connected. Hence, if there is a channel $\mathcal{F}$ that takes $\rho'$ out of phase 1 to $\sigma = \mathcal{F} (\rho')$ in phase 2, it cannot be reversed. In this case, the symmetry pullback can only guarantee the inclusion $G_{\sigma} \xhookrightarrow{\mathcal{F}^{*}} G_{\rho'}$, and it may happen that $\mathcal{F}^*(G_\sigma)$ is strictly smaller than $G_{\rho'}$.
    }
    \label{fig:symmetry_pullback}
\end{figure}

Importantly, under Definition~\ref{def:phase_of_matter} of phases of matter based on two-way connectivity by SLCs, the strong symmetries of a state are pulled back to strong symmetries of all other states in the same phase, up to ancilla addition and local unitary conjugation (See Fig. \ref{fig:symmetry_pullback}). 
In this sense, strong symmetries are universal properties of the phase. This is the organizing principle by which the long-range entanglement of anomalous states will be argued in the following sections.

We end by noting that all the preceding results (Lemmas \ref{lemma:decomposition_inheritance} and \ref{lemma:symmetry_pullback}, and Corollary \ref{cor:extension_inheritance}) are not valid in general for weakly symmetric states. For example, the maximally mixed state $\rho_{\rm MMS} \propto \one$ is weakly symmetric with respect to any unitary operator, which may not be inherited by states in its decomposition (which are all states) or by symmetry pullback. However, in a future work \cite{sang_mixedstate_2025}, it will be shown that the weak symmetry still survives under pullback $\E^*$ in a generalized sense for a refined class of channels $\E$ (See also Sec. \ref{sec:classical_memory}).
.

\section{Anomaly of abelian 1-form symmetries in (2+1)-D} \label{sec:anomaly_def}

Having discussed the general properties of strong and weak symmetries, we now restrict our attention to abelian 1-form symmetries in (2+1)-D by first defining them and their anomaly in this section. Here, we do not attempt to provide a complete treatment of 1-form symmetries or topological order in 2d, and how our definitions compare to alternative approaches. Instead, we present a minimal framework that enables us to prove long-range entanglement results coming from anomalous 1-form symmetries, and is applicable to the prototypical cases of interest, such as the toric code state under different types of noise.

To define 1-form symmetries microscopically, we assume our system is defined on a regular 2D lattice (e.g. square or honeycomb lattice).
A sequence of links, each ending where the next starts, is a path $\gamma$. If the curve $\gamma$ has no endpoints, then it is a loop.

We define a string operator $W$ to be an assignment of a finite-depth local unitary $W(\gamma)$ to every curve $\gamma$ such that $W(\gamma)$ is supported around $\gamma$ and that
\begin{align}
    W(\gamma \circ \tau) & = W(\gamma) W(\tau) \\
    W(\emptyset) & = \one,
\end{align}
where $\gamma \circ \tau$ is the juxtaposition of the curve $\tau$ followed by $\gamma$, and $\emptyset$ is the empty curve. In particular, the relations above imply that $W(\bar\gamma) = W(\gamma)^{-1}$, where $\bar\gamma$ is the curve $\gamma$ with link directions reversed. 

A string operator $W$ is a 1-form (weak or strong) symmetry of $\rho$ if $\rho$ is (weakly or strongly) symmetric under $W(\ell)$ for all contractible loops $\ell$. For the strong symmetry case, the charge $\lambda_\ell$ of $W(\ell)$ in $W(\ell) \rho = \lambda_\ell \rho$ can be arbitrary, but we will usually take it to be trivial $\lambda_\ell = 1$. This will not incur any loss of generality in our forthcoming results, since they rely only on the change of eigenvalue $\lambda_\ell$ after the application of a charged operator on the state $\rho$.   

Importantly, when $\rho$ has a 1-form symmetry $W$, the action of a open string operator $W(\gamma)$ in $\rho$ is invariant under deformations of $\gamma$ that maintain the endpoints fixed. For example, suppose $\rho$ is strongly symmetric under $W$ and $\gamma'$ is a deformation of $\gamma$. Then, by definition, $\bar\gamma \circ \gamma'$ is a contractible loop, and so
\begin{equation}
    W(\gamma') \rho = W(\gamma) W(\bar\gamma \circ \gamma')\rho = W(\gamma) \rho
\end{equation}
The same is true for a weakly symmetric state if we replace left multiplication by conjugation.

For one-form symmetries pertaining to topologically ordered systems, we should interpret that the string operators transport anyonic quasiparticles along its path.
Moreover, the deformability property above is crucial to interpret the action of such string operators on symmetric states as creating \emph{localized} excitations at its endpoints, since their in-between paths can continuously change. Thus, to each anyon $a$ we assign a string operator $W_a(\gamma)$ that creates $a$ quasiparticles localized at the end of $\gamma$ and $\bar{a}$ quasiparticles at the start of $\gamma$. In particular, we have that $W_1(\gamma) = \one$, where $1$ is the trivial anyon, and $W_a(\gamma) = W_{\bar{a}}(\gamma)^\dagger$.  For non-Abelian anyons, such string operator has necessarily long-depth if it acts unitarily \cite{bravyi_adaptive_2022}. 
Because of this, we restrict our attention to Abelian topological order and FDLU string operators.

In the following two sections, we define the two basic anyonic transport operations, braiding and twisting, using the microscopical formalism of 1-form symmetries described above. When string operators braid or twist nontrivially, we say they are \emph{anomalous}, which will be shown in Sec. \ref{sec:long-range} to imply long-range entanglement for their strongly symmetric states.

\captionsetup{justification=centering,singlelinecheck=false}

\begin{figure*}[t]
    \centering
    \begin{subfigure}[t]{0.43\textwidth}
        \centering
        {\def\svgwidth{\linewidth}
        %% Creator: Inkscape 1.3.2 (1:1.3.2+202311252150+091e20ef0f), www.inkscape.org
%% PDF/EPS/PS + LaTeX output extension by Johan Engelen, 2010
%% Accompanies image file '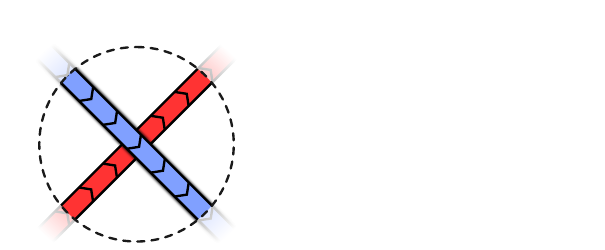' (pdf, eps, ps)
%%
%% To include the image in your LaTeX document, write
%%   \input{<filename>.pdf_tex}
%%  instead of
%%   \includegraphics{<filename>.pdf}
%% To scale the image, write
%%   \def\svgwidth{<desired width>}
%%   \input{<filename>.pdf_tex}
%%  instead of
%%   \includegraphics[width=<desired width>]{<filename>.pdf}
%%
%% Images with a different path to the parent latex file can
%% be accessed with the `import' package (which may need to be
%% installed) using
%%   \usepackage{import}
%% in the preamble, and then including the image with
%%   \import{<path to file>}{<filename>.pdf_tex}
%% Alternatively, one can specify
%%   \graphicspath{{<path to file>/}}
%% 
%% For more information, please see info/svg-inkscape on CTAN:
%%   http://tug.ctan.org/tex-archive/info/svg-inkscape
%%
\begingroup%
  \makeatletter%
  \providecommand\color[2][]{%
    \errmessage{(Inkscape) Color is used for the text in Inkscape, but the package 'color.sty' is not loaded}%
    \renewcommand\color[2][]{}%
  }%
  \providecommand\transparent[1]{%
    \errmessage{(Inkscape) Transparency is used (non-zero) for the text in Inkscape, but the package 'transparent.sty' is not loaded}%
    \renewcommand\transparent[1]{}%
  }%
  \providecommand\rotatebox[2]{#2}%
  \newcommand*\fsize{\dimexpr\f@size pt\relax}%
  \newcommand*\lineheight[1]{\fontsize{\fsize}{#1\fsize}\selectfont}%
  \ifx\svgwidth\undefined%
    \setlength{\unitlength}{290.0828348bp}%
    \ifx\svgscale\undefined%
      \relax%
    \else%
      \setlength{\unitlength}{\unitlength * \real{\svgscale}}%
    \fi%
  \else%
    \setlength{\unitlength}{\svgwidth}%
  \fi%
  \global\let\svgwidth\undefined%
  \global\let\svgscale\undefined%
  \makeatother%
  \begin{picture}(1,0.41298125)%
    \lineheight{1}%
    \setlength\tabcolsep{0pt}%
    \put(0,0){\includegraphics[width=\unitlength,page=1]{braiding_def.pdf}}%
    \put(0.28053867,0.28718792){\makebox(0,0)[t]{\lineheight{1.25}\smash{\begin{tabular}[t]{c}B\end{tabular}}}}%
    \put(0.31560026,0.33208019){\makebox(0,0)[t]{\lineheight{1.25}\smash{\begin{tabular}[t]{c}A\end{tabular}}}}%
    \put(0,0){\includegraphics[width=\unitlength,page=2]{braiding_def.pdf}}%
    \put(0.49973504,0.16645565){\makebox(0,0)[t]{\lineheight{1.25}\smash{\begin{tabular}[t]{c}{\Large $= S_{{\color{red}a}{\color{blue}b}}$}\end{tabular}}}}%
    \put(0.1739615,0.16799965){\makebox(0,0)[t]{\lineheight{1.25}\smash{\begin{tabular}[t]{c}$p$\end{tabular}}}}%
    \put(0.06143049,0.01013089){\makebox(0,0)[t]{\lineheight{1.25}\smash{\begin{tabular}[t]{c}$W_{\color{red}a}(\gamma)$\end{tabular}}}}%
    \put(0.06143049,0.3384911){\makebox(0,0)[t]{\lineheight{1.25}\smash{\begin{tabular}[t]{c}$W_{\color{blue}b}(\tau)$\end{tabular}}}}%
  \end{picture}%
\endgroup%
}
        
        \vspace{2em}
        
        {\def\svgwidth{\linewidth}
        %% Creator: Inkscape 1.3.2 (1:1.3.2+202311252150+091e20ef0f), www.inkscape.org
%% PDF/EPS/PS + LaTeX output extension by Johan Engelen, 2010
%% Accompanies image file '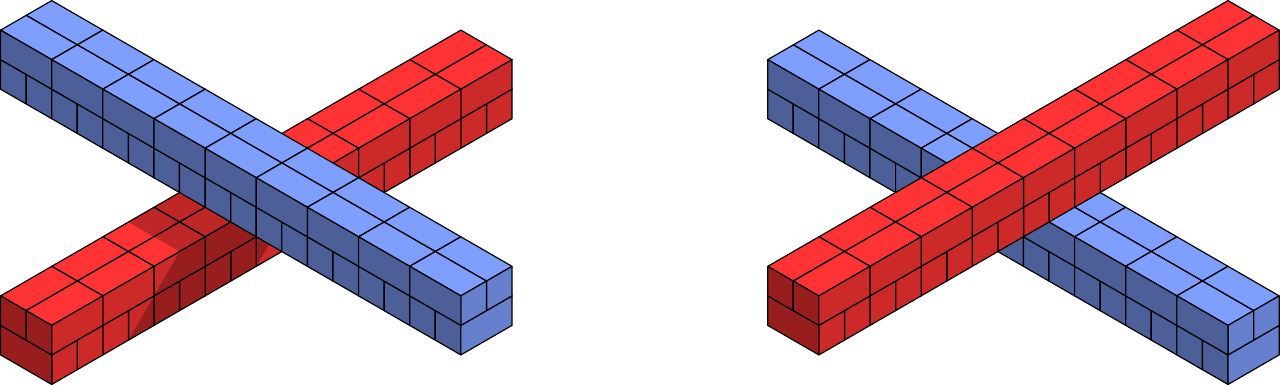' (pdf, eps, ps)
%%
%% To include the image in your LaTeX document, write
%%   \input{<filename>.pdf_tex}
%%  instead of
%%   \includegraphics{<filename>.pdf}
%% To scale the image, write
%%   \def\svgwidth{<desired width>}
%%   \input{<filename>.pdf_tex}
%%  instead of
%%   \includegraphics[width=<desired width>]{<filename>.pdf}
%%
%% Images with a different path to the parent latex file can
%% be accessed with the `import' package (which may need to be
%% installed) using
%%   \usepackage{import}
%% in the preamble, and then including the image with
%%   \import{<path to file>}{<filename>.pdf_tex}
%% Alternatively, one can specify
%%   \graphicspath{{<path to file>/}}
%% 
%% For more information, please see info/svg-inkscape on CTAN:
%%   http://tug.ctan.org/tex-archive/info/svg-inkscape
%%
\begingroup%
  \makeatletter%
  \providecommand\color[2][]{%
    \errmessage{(Inkscape) Color is used for the text in Inkscape, but the package 'color.sty' is not loaded}%
    \renewcommand\color[2][]{}%
  }%
  \providecommand\transparent[1]{%
    \errmessage{(Inkscape) Transparency is used (non-zero) for the text in Inkscape, but the package 'transparent.sty' is not loaded}%
    \renewcommand\transparent[1]{}%
  }%
  \providecommand\rotatebox[2]{#2}%
  \newcommand*\fsize{\dimexpr\f@size pt\relax}%
  \newcommand*\lineheight[1]{\fontsize{\fsize}{#1\fsize}\selectfont}%
  \ifx\svgwidth\undefined%
    \setlength{\unitlength}{614.28570124bp}%
    \ifx\svgscale\undefined%
      \relax%
    \else%
      \setlength{\unitlength}{\unitlength * \real{\svgscale}}%
    \fi%
  \else%
    \setlength{\unitlength}{\svgwidth}%
  \fi%
  \global\let\svgwidth\undefined%
  \global\let\svgscale\undefined%
  \makeatother%
  \begin{picture}(1,0.30086789)%
    \lineheight{1}%
    \setlength\tabcolsep{0pt}%
    \put(0,0){\includegraphics[width=\unitlength,page=1]{braiding_def_3d.pdf}}%
    \put(0.50000003,0.15043397){\makebox(0,0)[t]{\lineheight{1.25}\smash{\begin{tabular}[t]{c}\Large $= S_{{\color{red}a}{\color{blue}b}}$\end{tabular}}}}%
  \end{picture}%
\endgroup%
}
        \caption{ }
        \label{fig:braiding_def}
    \end{subfigure}
    \hfill
    \begin{subfigure}[t]{0.43\textwidth}
        \centering
        {\def\svgwidth{\linewidth}
        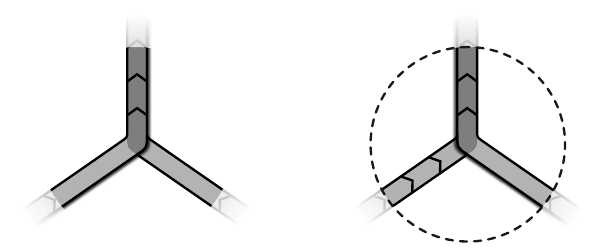}
        
        \vspace{2em}
        
        {\def\svgwidth{\linewidth}
        %% Creator: Inkscape 1.3.2 (1:1.3.2+202311252150+091e20ef0f), www.inkscape.org
%% PDF/EPS/PS + LaTeX output extension by Johan Engelen, 2010
%% Accompanies image file '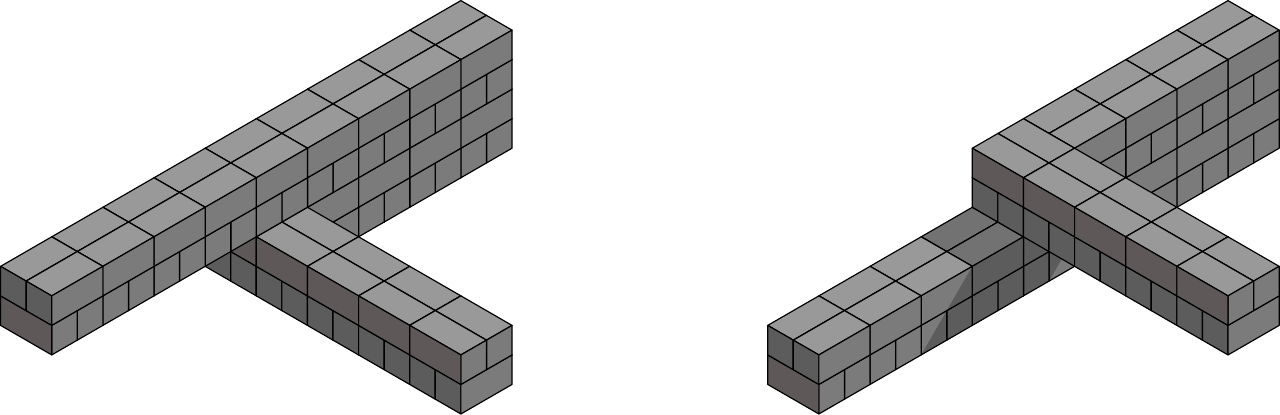' (pdf, eps, ps)
%%
%% To include the image in your LaTeX document, write
%%   \input{<filename>.pdf_tex}
%%  instead of
%%   \includegraphics{<filename>.pdf}
%% To scale the image, write
%%   \def\svgwidth{<desired width>}
%%   \input{<filename>.pdf_tex}
%%  instead of
%%   \includegraphics[width=<desired width>]{<filename>.pdf}
%%
%% Images with a different path to the parent latex file can
%% be accessed with the `import' package (which may need to be
%% installed) using
%%   \usepackage{import}
%% in the preamble, and then including the image with
%%   \import{<path to file>}{<filename>.pdf_tex}
%% Alternatively, one can specify
%%   \graphicspath{{<path to file>/}}
%% 
%% For more information, please see info/svg-inkscape on CTAN:
%%   http://tug.ctan.org/tex-archive/info/svg-inkscape
%%
\begingroup%
  \makeatletter%
  \providecommand\color[2][]{%
    \errmessage{(Inkscape) Color is used for the text in Inkscape, but the package 'color.sty' is not loaded}%
    \renewcommand\color[2][]{}%
  }%
  \providecommand\transparent[1]{%
    \errmessage{(Inkscape) Transparency is used (non-zero) for the text in Inkscape, but the package 'transparent.sty' is not loaded}%
    \renewcommand\transparent[1]{}%
  }%
  \providecommand\rotatebox[2]{#2}%
  \newcommand*\fsize{\dimexpr\f@size pt\relax}%
  \newcommand*\lineheight[1]{\fontsize{\fsize}{#1\fsize}\selectfont}%
  \ifx\svgwidth\undefined%
    \setlength{\unitlength}{614.28574449bp}%
    \ifx\svgscale\undefined%
      \relax%
    \else%
      \setlength{\unitlength}{\unitlength * \real{\svgscale}}%
    \fi%
  \else%
    \setlength{\unitlength}{\svgwidth}%
  \fi%
  \global\let\svgwidth\undefined%
  \global\let\svgscale\undefined%
  \makeatother%
  \begin{picture}(1,0.32394068)%
    \lineheight{1}%
    \setlength\tabcolsep{0pt}%
    \put(0,0){\includegraphics[width=\unitlength,page=1]{twist_def_3d.pdf}}%
    \put(0.49999996,0.12713377){\makebox(0,0)[t]{\lineheight{1.25}\smash{\begin{tabular}[t]{c}\Large $= \theta_a$\end{tabular}}}}%
  \end{picture}%
\endgroup%
}
        \caption{ }
        \label{fig:twist_def}
    \end{subfigure}
    \caption{Diagrammatic definitions of (a) braiding and (b) self-statistics along with their operator counterparts.
    }
    \label{fig:top_defs}
\end{figure*}

\captionsetup{justification=raggedright,singlelinecheck=false}

\subsection{Braiding}
\label{sec:anomaly-braiding}
The braiding of two string operators, $W_a(\gamma)$ and $W_b(\tau)$, that intersect at a point $p$ is the interchange of their order of operation only around point $p$. When the exchange amounts to a phase factor, $S_{ab} \neq 1$, we say it is anomalous with scattering phase $S_{ab} \in U(1)$. 

To define the local operator order change more precisely, we require $\gamma$ and $\tau$ to intersect at only one point $p$ within a radius large compared to the depth of $W_a(\gamma)$ and $W_b(\tau)$. Then, $W_b(\tau) W_a(\gamma)$ will look like an X crossing in a large neighborhood $B$ around $p$, as shown in Fig. \ref{fig:braiding_def}. Since both $W_b(\tau)$ and $W_a(\gamma)$ are quantum circuits, we can \emph{locally} change the order of the gates of both operators in the product $W_b(\tau)W_a(\gamma)$. Specifically, we move all the gates of $W_a(\gamma)$ strictly inside region $B$ to act after all gates of $W_b(\tau)$ (See bottom of Fig. \ref{fig:braiding_def}), maintaining the order between the layers of each string operator. This is a well-defined operation given a FDLU representation because the gates near the boundary of $B$ pertain to either $W_a(\gamma)$ or $W_b(\tau)$. This means that the action of the two string operators is altered solely near the intersection point $p$ after braiding, regardless of their paths far from $p$.

Furthermore, this procedure coincides with the twist product between $W_a(\gamma)$ and $W_b(\tau)$, defined as~\cite{haah_invariant_2016}
\begin{equation}\label{eq:twist_product}
    W_b(\tau) \infty W_a(\gamma) \defeq \sum_{i,j} A^{(b)}_j A^{(a)}_i \otimes B^{(a)}_i B^{(b)}_j,
\end{equation}
where we have decomposed $W_a(\gamma) = \sum_i A^{(a)}_i \otimes B^{(a)}_i$, and $W_b(\tau) = \sum_j A^{(b)}_j \otimes B^{(b)}_j$ into sums of tensor product operators in $A$ and $B$ regions. Indeed, the twist product also inverts the order of operation of the product $W_{b}(\tau) W_a(\gamma)$ only in the region $B$, so $W_b(\tau) \infty W_a(\gamma) = S_{ab} W_b(\tau) W_a(\gamma)$. This equivalence shows that the braiding phase is independent of a particular FDLU representation of the string operators.

\subsection{Self-statistics}
\label{sec:anomaly-top_twist}
A single anyon type may acquire a nontrivial global phase when two exchange positions. This is the self-statistics, or topological twist phase, and is another significant piece of topological data. The braiding of an anyon $a$ with itself only probes the self-statistics up to a sign, as $S_{aa} = \theta_a^2$. In particular, it cannot discern if a anyon with no self-braiding, $S_{aa} = 1$, is a boson $\theta_a = 1$ or a fermion $\theta_a = -1$. However, we wish to identify fermionic quasiparticles at the end of string operators as sufficient condition for anomalous symmetry, because it still guarantees long-range entanglement (See Sec. \ref{sec:long-range_bip_ent}).

To that end, we detect the self-statistics phase $\theta_a$ via another exchange of operator order implementing a ``half-braiding''. More precisely, we consider two curves, $\gamma$ and $\tau$, that meet at a point $p$ inside a region $B$ and later take the exact same path from $p$ to a boundary point $s$, as shown in Fig. \ref{fig:twist_def}. To exchange the operators $W_a(\gamma)$ and $W_a(\tau)$ inside $B$, we proceed in the same way as the braiding exchange: each gate of $W_a(\tau)$ acting on $B$ is moved to act after $W_b(\gamma)$, preserving their internal order. Importantly, this is well-defined because the gates straddling the boundary of $B$ are either of only one of the string operators (near $q$ and $r$), or is the same in both (near $s$). Thus, again, the change is local to the point $p$, and does not modify the product $W_a(\gamma) W_a(\tau)$ outside it, even in the region where $\gamma$ and $\tau$ coincide. 

Unlike braiding, however, it is not clear if the twist product of $W_a(\gamma)$ with $W_a(\tau)$ will match the procedure described above, since the local operators $B_i^{(a)}$ of Eq.~\eqref{eq:twist_product} might not commute near the boundary point $s$. Despite this, we conjecture that $\theta_a$ depends only on the string operators themselves, instead of their FDLU representations.

To justify why the topological twist defined above matches the one from TQFT, we note that the diagram of Fig.~\ref{fig:twist_def} is equivalent to the widely used hopping equation~\cite{levin_fermions_2003, zeng_quantum_2019}
\begin{equation}\label{eq:T_diagram}
    t_{pr} t_{sp} t_{pq} = \theta_a t_{pq} t_{sp} t_{pr},
\end{equation}
where $t_{ji} = W_a(i \to j)$ is a hopping operator that transports the $a$ anyon from $i$ to $j$. When acted on a two-particle state at $q$ and $r$, both sides of the equations transport the anyons to $p$ and $s$ in two different ways that differ by a particle exchange, and hence the phase difference $\theta_a$. It is easy to see that our definition is the same, up to another hopping operator $t_{sp}$.

\section{Long-range entanglement}
\label{sec:long-range}

Having defined braiding and self-statistics based on operatorial commutation relations, we now argue that any mixed state strongly symmetric under anomalous one-form symmetries feature nontrivial patterns of long-range entanglement. These come in two forms. In Section~\ref{sec:long-range_bip_ent}, we prove that such anomalous state is necessarily long-range bipartite entangled. That is, it is not bipartite separable, nor connected to any bipartite separable state via SLCs. In Section~\ref{sec:long-range_TEF}, we define topological entanglement of formation, a mixed-state generalization of topological entanglement entropy that has a lower bound depending on the number of non-transparent anyons. A lower bound for anomalies of strong and weak symmetries is also proven.

\subsection{Long-range bipartite entanglement proof from anomaly}
\label{sec:long-range_bip_ent}

To prove long-range bipartite entanglement of strongly symmetric anomalous states, we divide the argument into two parts. First, we argue in Lemma \ref{lemma:anomaly_invariance} that the braiding and self-statistics phases defined above are invariant under symmetry pullback, meaning that they are topological invariants across the whole phase of matter. Then, in Lemma \ref{lemma:BE}, we prove that no bipartite separable mixed state is strongly symmetric under an anomalous one-form symmetry. By itself, this latter result is fragile to local perturbations on the boundary of the bipartition, which can entangle the state. However, with the former anomaly invariance statement, we further conclude \emph{long-range} bipartite entanglement in Theorem \ref{thm:LRBE}. This strategy showcases the practical utility of the symmetry pullback, as it provides a way to extend results that are valid for a single state to its entire phase of matter.

\begin{figure*}
    \def\svgwidth{0.8\linewidth}
    %% Creator: Inkscape 1.3.2 (1:1.3.2+202311252150+091e20ef0f), www.inkscape.org
%% PDF/EPS/PS + LaTeX output extension by Johan Engelen, 2010
%% Accompanies image file '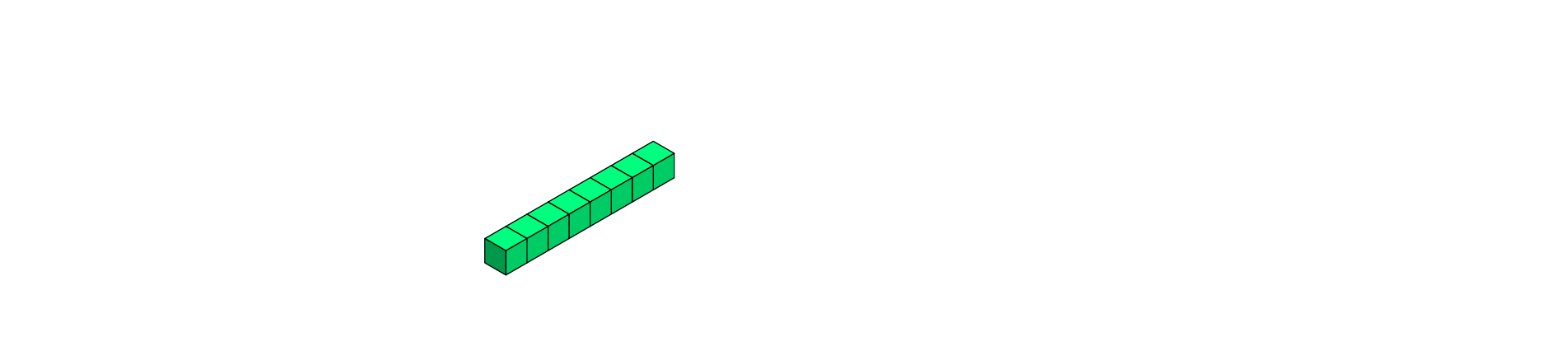' (pdf, eps, ps)
%%
%% To include the image in your LaTeX document, write
%%   \input{<filename>.pdf_tex}
%%  instead of
%%   \includegraphics{<filename>.pdf}
%% To scale the image, write
%%   \def\svgwidth{<desired width>}
%%   \input{<filename>.pdf_tex}
%%  instead of
%%   \includegraphics[width=<desired width>]{<filename>.pdf}
%%
%% Images with a different path to the parent latex file can
%% be accessed with the `import' package (which may need to be
%% installed) using
%%   \usepackage{import}
%% in the preamble, and then including the image with
%%   \import{<path to file>}{<filename>.pdf_tex}
%% Alternatively, one can specify
%%   \graphicspath{{<path to file>/}}
%% 
%% For more information, please see info/svg-inkscape on CTAN:
%%   http://tug.ctan.org/tex-archive/info/svg-inkscape
%%
\begingroup%
  \makeatletter%
  \providecommand\color[2][]{%
    \errmessage{(Inkscape) Color is used for the text in Inkscape, but the package 'color.sty' is not loaded}%
    \renewcommand\color[2][]{}%
  }%
  \providecommand\transparent[1]{%
    \errmessage{(Inkscape) Transparency is used (non-zero) for the text in Inkscape, but the package 'transparent.sty' is not loaded}%
    \renewcommand\transparent[1]{}%
  }%
  \providecommand\rotatebox[2]{#2}%
  \newcommand*\fsize{\dimexpr\f@size pt\relax}%
  \newcommand*\lineheight[1]{\fontsize{\fsize}{#1\fsize}\selectfont}%
  \ifx\svgwidth\undefined%
    \setlength{\unitlength}{913.67936262bp}%
    \ifx\svgscale\undefined%
      \relax%
    \else%
      \setlength{\unitlength}{\unitlength * \real{\svgscale}}%
    \fi%
  \else%
    \setlength{\unitlength}{\svgwidth}%
  \fi%
  \global\let\svgwidth\undefined%
  \global\let\svgscale\undefined%
  \makeatother%
  \begin{picture}(1,0.21444511)%
    \lineheight{1}%
    \setlength\tabcolsep{0pt}%
    \put(0,0){\includegraphics[width=\unitlength,page=1]{pullback_3d.pdf}}%
    \put(0.2692631,0.06059455){\makebox(0,0)[t]{\lineheight{1.25}\smash{\begin{tabular}[t]{c}$\circ$\end{tabular}}}}%
    \put(0,0){\includegraphics[width=\unitlength,page=2]{pullback_3d.pdf}}%
    \put(0.47046314,0.06059455){\makebox(0,0)[t]{\lineheight{1.25}\smash{\begin{tabular}[t]{c}$\circ$\end{tabular}}}}%
    \put(0.10778226,0.1924487){\makebox(0,0)[t]{\lineheight{1.25}\smash{\begin{tabular}[t]{c}$U^\dagger$\end{tabular}}}}%
    \put(0.3626808,0.1924487){\makebox(0,0)[t]{\lineheight{1.25}\smash{\begin{tabular}[t]{c}$W(\gamma) \otimes \one$\end{tabular}}}}%
    \put(0.61827457,0.1924487){\makebox(0,0)[t]{\lineheight{1.25}\smash{\begin{tabular}[t]{c}$U$\end{tabular}}}}%
    \put(0.77948276,0.06059455){\makebox(0,0)[t]{\lineheight{1.25}\smash{\begin{tabular}[t]{c}$=$\end{tabular}}}}%
    \put(0.8735207,0.1924487){\makebox(0,0)[t]{\lineheight{1.25}\smash{\begin{tabular}[t]{c}$\E^*(W(\gamma))$\end{tabular}}}}%
  \end{picture}%
\endgroup%

    \caption{The pullback $\E^*(W(\gamma)) = U^\dagger (W(\gamma) \otimes \one) U$ of an on-site string operator $W(\gamma)$ under an SLC $\E$, where each unitary gate is represented by a block. The unitary $U$ is a FDLU that dilates the action of a local channel of $\E$ to an extended space. Thus, the support of $W(\gamma)$ is locally enlarged after conjugation by $U$, as gates far for it cancel to identity.}
    \label{fig:one-form_pullback}
\end{figure*}

\begin{lemma}\label{lemma:anomaly_invariance}
    The braiding matrix $S_{ab}$ and self-statistics phase $\theta_a$ as defined in Secs.~\ref{sec:anomaly-braiding} and \ref{sec:anomaly-top_twist} are invariant under symmetry pullback $W(\gamma) \mapsto \E^*(W(\gamma))$ of an SLC $\E$.
\end{lemma}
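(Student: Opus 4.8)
The plan is to exploit that the symmetry pullback of a string operator is conjugation by a \emph{single fixed} finite-depth local unitary, $\E^*(W(\gamma)) = U^\dagger(W(\gamma)\otimes\one)U$, where $U$ is the dilated LU of some $\E_i$ in the ensemble $\E = \sum_i p_i\E_i$, and that both $S_{ab}$ and $\theta_a$ are defined through a reordering of circuit gates \emph{localized near the intersection point $p$}. Conjugation by a fixed unitary commutes with operator products, so it should also commute with this local reordering, provided one enlarges the region $B$ to an $\ell$-neighborhood $B'$, where $\ell$ is the (bounded) distance by which conjugation by $U$ can spread the support of a local operator. Before that, I would record that $\E^*(W_a(\gamma))$ is itself a legitimate FDLU string operator: it has the same depth as $W_a(\gamma)$ with each gate thickened by $\ell$, it is supported around $\gamma$, and since $\E^* = U^\dagger(-\otimes\one)U$ is an algebra homomorphism it inherits $\E^*(W_a(\gamma\circ\tau)) = \E^*(W_a(\gamma))\E^*(W_a(\tau))$ and $\E^*(W_a(\emptyset)) = \one$. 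Hence the braiding and twist phases of $\{\E^*(W_a)\}$ are well-defined and there is something to compare.

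For braiding, choose nested balls $B_{\rm in}\subset B_{\rm out}$ centered at $p$, with $B_{\rm in}$ large compared to the depths of $W_a(\gamma)$ and $W_b(\tau)$, the annulus $B_{\rm out}\setminus B_{\rm in}$ of width larger than $2\ell$, and the curves $\gamma,\tau$ — which meet only near $p$ — separated outside a small ball about $p$ by more than all relevant support sizes. Computing $S_{ab}$ for $\{W_a\}$ using $B_{\rm in}$ identifies the set $\mathcal S$ of gates of $W_a(\gamma)$ strictly inside $B_{\rm in}$ and moves them past every gate of $W_b(\tau)$, yielding $W_b(\tau)\infty W_a(\gamma) = S_{ab}\,W_b(\tau)W_a(\gamma)$. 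Conjugating the product by $U$ and using $U^\dagger(XY\otimes\one)U = \E^*(X)\E^*(Y)$ gives $U^\dagger((W_b(\tau)\infty W_a(\gamma))\otimes\one)U = S_{ab}\,\E^*(W_b(\tau))\E^*(W_a(\gamma))$. The left-hand side is obtained from $\E^*(W_b(\tau))\E^*(W_a(\gamma))$ by moving the gates $\{U^\dagger g U : g\in\mathcal S\}$, each supported inside $B_{\rm in}^{+\ell}\subset B_{\rm out}$, past all gates of $\E^*(W_b(\tau))$; the only other gates of $\E^*(W_a(\gamma))$ lying strictly inside $B_{\rm out}$ are pullbacks of gates of $W_a(\gamma)$ in the annulus, which are supported away from $p$ and hence commute with every gate of $\E^*(W_b(\tau))$ by the assumed separation of the curves, so including them in the reordering changes nothing. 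Therefore this is precisely the defining reordering for the braiding of $\E^*(W_a),\E^*(W_b)$ relative to $B_{\rm out}$, and the pulled-back braiding phase equals $S_{ab}$. One can also bypass the gate bookkeeping here by invoking the representation-independence of the twist product noted below Eq.~\eqref{eq:twist_product}: it implies $\E^*(W_b(\tau))\infty\E^*(W_a(\gamma)) = U^\dagger((W_b(\tau)\infty W_a(\gamma))\otimes\one)U$ whenever $U$ factorizes across the bipartition defining the twist product, the general case following from the same locality argument.

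The self-statistics case is identical in structure, with $W_b(\tau)$ replaced by a second copy $W_a(\tau)$ and $B$ chosen to contain both the meeting point $p$ and the shared exit point $s$; the reordering of the gates of $W_a(\tau)$ inside $B$ past those of $W_a(\gamma)$ is well-defined and local to $p$ as explained in Sec.~\ref{sec:anomaly-top_twist}, and conjugating by $U$ produces the corresponding reordering for $\E^*(W_a(\gamma)),\E^*(W_a(\tau))$ relative to the enlarged region, so $\theta_a$ is preserved as well. I expect the main obstacle to be exactly this last identification: showing that after conjugation the extra gates one is forced to move are supported away from $p$ — and, for $\theta_a$, away from the coincident segment $ps$ — so that moving them is innocuous and the conjugated reordering coincides with the genuine defining reordering of the pulled-back operators. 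This is a light-cone estimate that uses only the bounded spreading length $\ell$ of $U$ and the fact that the curves are disjoint or coincident except near $p$ (and $ps$), which is the same geometric input already built into Figs.~\ref{fig:braiding_def} and \ref{fig:twist_def}; no estimate beyond the standard locality of finite-depth circuits is required.
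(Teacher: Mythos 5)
Your proposal is correct and follows essentially the same route as the paper: the pullback is conjugation by the dilated FDLU $U$, which only thickens the support of each string operator by a bounded length, so the local gate reordering near the intersection point that defines $S_{ab}$ and $\theta_a$ commutes with the conjugation and yields the same phases. The paper states this more briefly (far-away gates of $U$ and $U^\dagger$ cancel, and the reordering remains local to $p$), while you supply the explicit nested-region bookkeeping that makes the identification of the two reorderings precise.
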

\begin{proof}
    Up to an ancilla, the symmetry pullback of a one-form symmetry $W(\gamma)$ dresses it with gates from the Stinespring dilated FDLU $U$ of the SLC $\E$ as $\E^*(W(\gamma)) = U^\dagger\ W(\gamma) \otimes \one\ U$ (See Fig. \ref{fig:one-form_pullback}). Importantly, gates of $U$ and $U^\dagger$ far from the support of $W(\gamma)$ cancel each other. In both the braiding and self-statistics equations, further cancellation occurs in the product of two string operators near their intersection point $p$ (See Fig. \ref{fig:top_defs}). Their order of operation can thus be interchanged locally, resulting in the same phase factors.
\end{proof}

\begin{lemma}\label{lemma:BE}
    A state $\rho$ with an anomalous strong 1-form symmetry is bipartite entangled, i.e. $\rho$ is not bipartite separable with respect to a sufficiently large bipartition $A|B$,
    \begin{equation}\label{eq:bipartite_entanglement}
        \rho \neq \rho_{\text{2-sep}} = \sum_{k} p_k \ketbra{A_k}{A_k} \otimes \ketbra{B_k}{B_k},
    \end{equation}
\end{lemma}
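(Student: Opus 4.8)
The plan is to reduce the statement to a purely pure-state question and then exploit the anomaly through a strategically placed pair of string operators. First, suppose for contradiction that $\rho = \sum_k p_k \ketbra{A_k}{A_k}\otimes\ketbra{B_k}{B_k}$ is bipartite separable. By the inheritance of strong symmetry (Lemma~\ref{lemma:decomposition_inheritance}), every $\ket{A_k}\otimes\ket{B_k}$ is itself strongly symmetric under the anomalous $1$-form symmetry, with the same charges. Hence it suffices to show that no \emph{pure product state} across a sufficiently large bipartition $A|B$ can carry an anomalous strong $1$-form symmetry; this is the step that isolates the genuinely quantum obstruction, since a product state has zero $A|B$ entanglement, whereas the anomaly will force a nonzero amount.

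Second, I would fix the bipartition and the witnessing operators. Take $A$ to be an annular region with width much larger than the range of the string operators, so that $B = B_{\rm in}\sqcup B_{\rm out}$. Using the braiding anomaly, pick anyons $a,b$ with $S_{ab}\neq 1$ and let $W_b(\ell)$ be a loop operator around the core of $A$, supported inside $A$; since $\ell$ is contractible, $W_b(\ell)$ is a strong symmetry of $\rho$ and therefore $W_b(\ell)\ket{A_k} = \lambda_\ell \ket{A_k}$. Let $W_a(\gamma)$ be an open string running from deep inside $B_{\rm in}$, across the annulus crossing $\ell$ exactly once, to deep inside $B_{\rm out}$; by the local order-exchange of Sec.~\ref{sec:anomaly-braiding} one has $W_b(\ell)W_a(\gamma) = S_{ab}\, W_a(\gamma)\, W_b(\ell)$. (For the self-statistics anomaly $\theta_a=-1$ with $S_{aa}=1$, replace this configuration by the half-braiding/hopping diagram of Sec.~\ref{sec:anomaly-top_twist}, again arranged so that one of the participating operators is a contractible loop living inside $A$.) Because $W_b(\ell)$ acts only on $A$ and fixes $\ket{A_k}$, the braiding relation forces $W_a(\gamma)\big(\ket{A_k}\otimes\ket{B_k}\big)$ to lie in the $S_{ab}\lambda_\ell$-eigenspace of $W_b(\ell)$, while $\ket{A_k}\otimes\ket{B_k}$ lies in the $\lambda_\ell$-eigenspace.

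The contradiction should then be extracted by confronting this eigenvalue shift with the product structure. On the product state, the strong $1$-form symmetry makes the portions of $\gamma$ lying in $A$, $B_{\rm in}$, and $B_{\rm out}$ individually deformable; contracting the two $B$-portions toward $\partial(A,B)$, $W_a(\gamma)$ collapses (up to its unchanged $A$-part) to a bounded-range operator dressed only near the two points where $\gamma$ meets $\partial(A,B)$, with Schmidt rank across $A|B$ bounded by a constant. One wants to conclude that such a bounded-rank dressing cannot carry the braiding phase $S_{ab}$ past the cut, i.e. that $\ket{A_k}\otimes\ket{B_k}$ must in fact have $A|B$-entanglement entropy at least $\log|\mathrm{ord}(S_{ab})|$ (resp. $\log 2$ for a fermion), which is impossible for a product state. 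I expect this last step to be the main obstacle: the FDLU string operators do not factorize exactly across $\partial(A,B)$, so the ``contract and track the braiding phase'' manipulation has to be carried out at the level of the twist product / Schmidt decomposition, keeping the single order-exchange localized in a bounded neighborhood while the remaining dressing is peeled off, and one must verify (as guaranteed in Sec.~\ref{sec:anomaly-braiding}) that the phase is insensitive to the chosen FDLU representative. Finally, I would note that the conclusion is, as stated, fragile to perturbations localized near $\partial(A,B)$ — which is precisely why Theorem~\ref{thm:LRBE} combines Lemma~\ref{lemma:BE} with the pullback-invariance of the anomaly (Lemma~\ref{lemma:anomaly_invariance}) to obtain genuine long-range bipartite entanglement.
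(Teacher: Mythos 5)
Your reduction to pure product states via strong-symmetry inheritance matches the paper, but the second half of your argument has a genuine gap that you yourself flag: after observing that the open string $W_a(\gamma)$ shifts the $W_b(\ell)$-eigenvalue from $\lambda_\ell$ to $S_{ab}\lambda_\ell$, you have only detected a charge, not derived a contradiction --- exactly the same eigenvalue shift occurs in the genuinely entangled toric-code ground state, so it is perfectly consistent with the symmetry algebra. The missing step, that a product state cannot support this because a ``bounded-Schmidt-rank dressing cannot carry the braiding phase past the cut,'' is an unproven claim of roughly the same difficulty as the entropic lower bound of Theorem~\ref{thm:CMI_bound-sw}, and your proposed route to it (deforming segments of an \emph{open} string within each region, or contracting them toward $\partial(A,B)$, which moves the endpoints and hence changes the operator's action) is not justified by the deformability property, which only covers endpoint-fixing deformations of paths.

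The paper's proof sidesteps all of this by never using an open string. It takes \emph{two closed contractible loops} $\gamma,\tau$ intersecting at a single point deep inside one part $B$ of the bipartition; both $W_a(\gamma)$ and $W_b(\tau)$ are then strong symmetries fixing $\ket{A}\ket{B}$ with eigenvalue $+1$. Pairing with a product state $\ket{a}\in\Hilb_A$ of nonzero overlap with $\ket{A}$ yields operators $K_a(\gamma)\defeq\braket{a|A}^{-1}\braket{a|W_a(\gamma)|A}$ and $K_b(\tau)$ acting on $B$ alone that still satisfy $K_a\ket{B}=K_b\ket{B}=\ket{B}$ and, because the intersection point lies far from $\partial B$ where the restricted operators agree with the originals, still obey $K_aK_b=S_{ab}K_bK_a$. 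The contradiction $\ket{B}=S_{ab}\ket{B}$ is then immediate. The essential idea you are missing is this restriction-by-inner-product trick, which converts the anomaly into two operators on $B$ that simultaneously stabilize $\ket{B}$ yet differ by the phase $S_{ab}\neq 1$; no entropic or Schmidt-rank estimate is needed. (The annulus-plus-open-string configuration you set up is the one the paper reserves for the TEE/TEF lower bound, not for this lemma.)
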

\begin{proof}
    By contradiction, let us assume a bipartite separable state $\rho_{\text{2-sep}} = \sum_{k} p_k \ketbra{A_k}{A_k} \otimes \ketbra{B_k}{B_k}$ is strongly symmetric under an anomalous 1-form symmetry. Then, by strong symmetry inheritance (Lemma \ref{lemma:decomposition_inheritance}), each of the pure states in the mixture is also strongly symmetric. Let $\ket{A}\ket{B}$ be one of the them.
    
    For the moment, we assume that there exists two anyons, $a$ and $b$, with nontrivial braiding $S_{ab} \neq 1$, with the all-fermion case treated similarly later. Then, let $\gamma$ and $\tau$ be loops that intersect at a point inside $B$ far away from its boundary, and don't intersect at any other point in $B$. Schematically, one can think of the curves illustrated in Fig. \ref{fig:braiding_def}. The corresponding string operators are symmetries of $\ket{A}\ket{B}$:
    \begin{equation}
        W_a(\gamma) \ket{A}\ket{B} = W_b(\tau) \ket{A}\ket{B} = \ket{A}\ket{B}.
    \end{equation}
    We reduce the equation above to region $B$ by taking the inner product with any product state $\ket{a} \in \Hilb_A$ in $A$ that has nonzero overlap with $\ket{A}$:
    \begin{equation}
        K_a(\gamma) \ket{B} = K_b(\tau) \ket{B} = \ket{B},
    \end{equation}
    where $K_a(\gamma) \defeq \braket{a | A}^{-1} \braket{a| W_a(\gamma)| A}$ and similarly to $K_b(\tau)$. Importantly, $K_a(\gamma)$ and $K_b(\tau)$ act in the same way as $W_a(\gamma)$ and $W_b(\tau)$, respectively, far from the boundary of $B$, due to the LU nature of the string operators and the absence of entanglement in $\ket{a}$. Hence, they acquire a braiding factor $S_{ab}$ when commuted:
    \begin{equation}
        K_a(\gamma) K_b(\tau) = S_{ab} K_b(\tau) K_a(\gamma),
    \end{equation}
    coming from the exchange of the gates of $W_a(\gamma)$ with the ones of $W_b(\tau)$ near the intersection point. 
    
    However, $K_a(\gamma) K_b(\tau)$ and $K_b(\tau) K_a(\gamma)$ are restrictions of the operators $W_a(\gamma) W_b(\tau)$ and $W_b(\gamma) W_a(\tau)$, respectively, which are both symmetries of $\ket{A}\ket{B}$ with eigenvalue $+1$. Accordingly, $K_a(\gamma) K_b(\tau)$ and $K_b(\tau) K_a(\gamma)$ must have $\ket{B}$ as an eigenvector with same eigenvalue $+1$, which contradicts $S_{ab} \neq 1$.

    For the fermionic case, one proceeds similarly by considering loops $\gamma$ and $\tau$ that intersect at a point inside $B$ and become the same curve until they go outside $B$ again, as considered in the definition of topological twist $\theta_a$ in Sec. \ref{sec:anomaly-top_twist} and shown in Fig. \ref{fig:twist_def}. Then, the same argument as above can be employed and reach a contradiction with $\theta_a \neq 1$. 
\end{proof}

\begin{theorem}\label{thm:LRBE}
    A state $\rho$ with an anomalous strong 1-form symmetry is long-range bipartite entangled, i.e. $\rho$ cannot be prepared from a bipartite separable state $\rho_{\text{2-sep}}$ via a stochastic local channel $\E = \sum_i p_i \E_i$, with respect to a sufficiently large bipartition $A|B$,
    \begin{equation}\label{eq:long-range_bipartite_entanglement}
        \rho \neq \sum_i p_i \E_i(\rho_{\text{2-sep}}).
    \end{equation}
\end{theorem}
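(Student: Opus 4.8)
The plan is to chain the three preceding lemmas in the manner announced before the theorem: Lemma~\ref{lemma:BE} forbids an anomalous strong 1-form symmetry on any single bipartite separable state, while the symmetry pullback (Lemma~\ref{lemma:symmetry_pullback}) together with the anomaly invariance (Lemma~\ref{lemma:anomaly_invariance}) transports this obstruction across the entire SLC-orbit of separable states. I would argue by contradiction: suppose $\rho = \sum_i p_i \E_i(\rho_{\text{2-sep}})$ with $\E = \sum_i p_i \E_i$ an SLC, $\rho_{\text{2-sep}}$ bipartite separable across a sufficiently large bipartition $A|B$, and suppose $\rho$ carries an anomalous strong 1-form symmetry $W$, i.e.\ $W(\ell)\rho = \rho$ for all contractible loops $\ell$ and $S_{ab}\neq 1$ for some anyons $a,b$ (or $\theta_a\neq 1$ for some $a$).

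\textbf{Pulling back the 1-form symmetry.} Fix one local channel $\E_{i_0}$ in the ensemble and let $U$ be its Stinespring-dilating FDLU on $\Hilb_Q\otimes\Hilb_A$ with product ancilla $\ket{\mathbf 0}_A$. Define $\tilde W(\gamma)\defeq\E^*(W(\gamma)) = U^\dagger\,(W(\gamma)\otimes\one_A)\,U$ for every curve $\gamma$. The first step is to check that $\tilde W$ is again a string operator in the sense of Sec.~\ref{sec:anomaly_def}: each $\tilde W(\gamma)$ is an FDLU supported on a constant-radius thickening of $\gamma$, since gates of $U$ and $U^\dagger$ away from $\mathrm{supp}\,W(\gamma)$ cancel (cf.\ Fig.~\ref{fig:one-form_pullback}); and conjugation by the fixed unitary $U$ is a homomorphism, so $\tilde W(\gamma\circ\tau)=\tilde W(\gamma)\tilde W(\tau)$ and $\tilde W(\emptyset)=\one$. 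Applying Lemma~\ref{lemma:symmetry_pullback} to $g=W(\ell)$ for each contractible loop $\ell$ shows that $\tilde W$ is a strong 1-form symmetry of $\rho_{\text{2-sep}}\otimes\ketbra{\mathbf 0}{\mathbf 0}_A$, and by Lemma~\ref{lemma:anomaly_invariance} its braiding matrix and self-statistics phases equal those of $W$, so $\tilde W$ is anomalous.

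\textbf{Reaching the contradiction.} The state $\rho_{\text{2-sep}}\otimes\ketbra{\mathbf 0}{\mathbf 0}_A$ is still bipartite separable: the ancilla shares the lattice geometry, so it splits into an $A$-part and a $B$-part and $\rho_{\text{2-sep}}\otimes\ketbra{\mathbf 0}{\mathbf 0}_A = \sum_k p_k\,\ketbra{A_k\mathbf 0}{A_k\mathbf 0}\otimes\ketbra{B_k\mathbf 0}{B_k\mathbf 0}$ with respect to the enlarged bipartition. Because SLCs have uniformly bounded range (footnote to Def.~\ref{def:SLC}), the depth of $U$ — hence the thickening of the pulled-back string supports — is bounded by a constant independent of the bipartition; so for a sufficiently large bipartition the loops used in the proof of Lemma~\ref{lemma:BE} (meeting transversally once, deep inside the $B$-region, far from the cut) still satisfy its geometric hypotheses after thickening. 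Lemma~\ref{lemma:BE} then applies to the bipartite separable state $\rho_{\text{2-sep}}\otimes\ketbra{\mathbf 0}{\mathbf 0}_A$ carrying the anomalous strong 1-form symmetry $\tilde W$, a contradiction. Hence $\rho\neq\sum_i p_i\E_i(\rho_{\text{2-sep}})$.

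\textbf{Main obstacle.} Everything reduces to routine chaining of the three lemmas except the last bookkeeping step: making precise that ``sufficiently large bipartition'' can be chosen uniformly, i.e.\ that the constant enlargement of string-operator supports under the pullback, together with the ancilla geometry, never spoils the conditions required by Lemma~\ref{lemma:BE} (single transverse intersection deep inside $B$, controlled behaviour near the cut). This is exactly where the uniform range bound on SLCs is invoked; once that is granted the argument is immediate.
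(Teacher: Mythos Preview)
Your proof is correct and follows essentially the same approach as the paper: argue by contradiction, pull back the anomalous strong 1-form symmetry to $\rho_{\text{2-sep}}\otimes\ketbra{\mathbf 0}{\mathbf 0}_A$ via Lemma~\ref{lemma:symmetry_pullback}, invoke Lemma~\ref{lemma:anomaly_invariance} to retain the anomaly, and contradict Lemma~\ref{lemma:BE}. The paper compresses this into two sentences and leaves implicit the bookkeeping you spell out (that $\tilde W$ is again a string operator, that the ancilla-augmented state remains bipartite separable, and that the bounded range of the SLC keeps the bipartition ``sufficiently large''); your version is a faithful expansion of the same argument.
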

\begin{proof}
    By contradiction, let us assume $\rho = \sum_i p_i \E_i(\rho_{\text{2-sep}})$. In this case, the bipartite separable state $\rho_{2-sep} \otimes \ketbra{\mathbf 0}{\mathbf 0}_a$ would inherit the anomalous symmetries of $\rho$ by symmetry pullback (Lemma \ref{lemma:anomaly_invariance}), which is prohibited by Lemma \ref{lemma:BE}.
\end{proof}

We end this section by noting that the theorem above, when applied to systems with strong fermionic one-form symmetries, solves an open question posed in \cite{wang_intrinsic_2025}. There, the authors proved that the ZX decohered toric code (See Sec. \ref{sec:zx_dephasing}) exhibits long-range bipartite entanglement, and conjectured that it generalizes to other systems with deconfined fermions. By Theorem \ref{thm:LRBE}, we see that it does.

\subsection{Topological entanglement of formation}
\label{sec:long-range_TEF}

Above, we have proved that a mixed state with strong anomalous 1-form symmetries must be long-range entangled. Now, we propose an information-theoretic measure to directly diagnose such long-range entanglement, namely the \textit{topological entanglement of formation}. As the name suggests, the TEF is a generalization to mixed states of the topological entanglement entropy for pure states \cite{kitaev_topological_2006, levinDetectingTopologicalOrder2006}, in a similar fashion to how the entanglement of formation \cite{bennett_mixedstate_1996}, a faithful mixed-state entanglement measure, generalizes the von Neumann entanglement entropy of pure states. The main result of this section is a lower bound to the TEF for states strongly symmetric under certain anomalous 1-form symmetries.

For topologically ordered fixed-point pure states, the topological entanglement entropy $\gamma$ measures the total quantum dimension $\mathcal{D} = \sqrt{\sum_a d_a^2}$ of the associated topological field theory, where $d_a$ is the quantum dimension of the anyon $a$ ($d_a = 1$ for Abelian anyons), as $\gamma = \log \mathcal{D}$ \cite{kitaev_topological_2006, levinDetectingTopologicalOrder2006}. The TEE can be calculated in multiple ways via linear combinations of entropic quantities: It is the subleading coefficient of the entanglement entropy of a large enough subregion $A$, $S_A = \alpha |\partial A| - \gamma$, but we will use the expression by Levin and Wen \cite{levinDetectingTopologicalOrder2006}, which relates the conditional mutual information (CMI) $I_{A:C|B} = S_{AB} + S_{BC} - S_B - S_{ABC}$ between a particular set of regions $A$ and $C$, conditioned on $B$, to the TEE via $I_{A:C|B} = 2 \gamma$. In the Levin-Wen prescription, $ABC$ forms an annulus, with $A$ and $C$ simply connected subregions separated by $B$, as in Fig.~\ref{fig:levin_wen}. 
The choice of the Levin-Wen scheme is because the CMI enjoys enough good properties to warrant a generalization to mixed-states useful to our purposes, with the most important being the strong subadditivity~\cite{lieb_fundamental_1973, lieb_proof_1973, nielsen_quantum_2010}, which says that the CMI is non-negative.

\begin{figure}[h]
    \centering
    \begin{tikzpicture}
    \definecolor{colorA}{HTML}{55AFE0}
    \definecolor{colorB}{HTML}{FFAE00}
    \definecolor{colorC}{HTML}{009E73}

    % Background rectangle with rounded corners
    \fill[gray!30, rounded corners] (-1.5,-1.5) rectangle (1.5,1.5);

    \def\R{1}
    \def\r{0.3}
    \def\Rmid{0.5*\R+0.5*\r}
    \def\Rprime{0.66*\R+0.33*\r}
    \def\rprime{0.66*\r+0.33*\R}
    \def\RA{(0.85*\R+0.15*\r)}
    \def\centerarc[#1](#2)(#3:#4:#5)% Syntax: [draw options] (center) (initial angle:final angle:radius)
    { \fill[#1] (0,0) -- ($(#2)+({(#5)*cos(#3)},{(#5)*sin(#3)})$) arc (#3:#4:#5); }
    
    % Draw annulus
    \centerarc[color=colorA](0,0)(120:240:\R) % Region A
    \centerarc[color=colorB](0,0)(60:120:\R) % Region B top
    \centerarc[color=colorB](0,0)(240:300:\R) % Region B bottom
    \centerarc[color=colorC](0,0)(-60:60:\R) % Region C
    \fill[gray!30] (0,0) circle (\r);
    \draw (0,0) circle (\R);
    \draw (0,0) circle (\r);
    
    % Draw separations
    \foreach \angle in {60, 120, 240, 300} {
        \draw[black] ({\r*cos(\angle)},{\r*sin(\angle)}) -- ({\R*cos(\angle)},{\R*sin(\angle)});
    }
    
    % Labels for regions
    \node at (0, \Rmid) {B};
    \node at ({-(\Rmid)}, 0) {A};
    \node at (\Rmid, 0) {C};
    \node at (0, {-(\Rmid)}) {B};
    
\end{tikzpicture}
    \caption{Regions in the Levin-Wen tripartition \cite{levinDetectingTopologicalOrder2006}. The CMI $I_{A:C|B}$ in this configuration is used to calculate the topological entanglement entropy and the topological entanglement of formation (See Def. \ref{def:TEF}).}
    \label{fig:levin_wen}
\end{figure}

To start, we first define the mixed convex roof conditional mutual information (mcoCMI), denoted by $I^{\sqcup}_{A:C|B}(\rho)$,\footnote{For the mcoCMI $I^{\sqcup}_{A:C|B}(\rho)$, we adapt the notation used in \cite{szalay_multipartite_2015} to denote the convex roof extension of $f$ by $f^\cup$.} as follows:
\begin{definition}[Mixed convex roof conditional mutual information (mcoCMI)] \label{def:mcoCMI}
    The \textit{mixed convex roof conditional mutual information} $I^{\sqcup}_{A:C|B}(\rho)$ of a density matrix $\rho$ acting on $\Hilb = \Hilb_A \otimes \Hilb_B \otimes \Hilb_C \otimes \Hilb_{E}$, is the minimum over all mixed-state decompositions $\{p_i, \rho_i\}$ of $\rho = \sum_i p_i \rho_i$ of the average CMI among the states in each decomposition. That is, 
    \begin{equation}
        I^{\sqcup}_{A:C|B}(\rho) \defeq \min_{\{p_i, \rho_i \}} \sum_i p_i I_{A:C|B}(\Tr_E[\rho_i]).
    \end{equation}
\end{definition}

\begin{definition}[Topological entanglement of formation (TEF)]
\label{def:TEF}
    The \textit{topological entanglement of formation} $\gamma_F$ of $\rho$ is half of its mcoCMI $I^{\sqcup}_{A:C|B} = 2 \gamma_F$ in the Levin-Wen configuration (See Fig. \ref{fig:levin_wen}). It is the minimum over all decompositions of the global state $\rho$ of the average topological entanglement entropy.
\end{definition}

As mentioned earlier, the term ``of formation'' is named after the entanglement of formation, which is similarly defined as the (mixed) convex roof extension of the entanglement entropy\footnote{Due to the concavity of the von Neumann entropy, the entanglement of formation can be defined as a minimization over all pure or mixed decompositions of its argument. The same cannot be said about the mcoCMI.}. The above definition is closely related to the convex-roof extension of quantum conditional mutual information (``co(QCMI)'') studied in \cite{wang_analog_2024} as an analog of the TEE for mixed states. The only difference is that the minimization in the co(QCMI) is over all \textit{pure-state} decompositions $\{p_i, \ket{\psi_i}\}$ of $\rho = \sum_i p_i \ketbra{\psi_i}{\psi_i}$, a subset of the decompositions considered in the mcoCMI. We use the mixed convex roof extension here primarily because we can prove a lower bound to $\gamma_F$ based on the anomalous braiding of 1-form symmetries, which automatically implies a lower bound to the co(QCMI).

We state some useful properties of the mcoCMI below, leaving other properties and their proofs to Appendix \ref{appendix:further_props_mcoCMI}.

\begin{enumerate}
    \item \textit{Positivity}: $I^{\sqcup}_{A:C|B}(\rho) \geq 0$, by strong subadditivity.
    \item \textit{Convexity}: $I^{\sqcup}_{A:C|B}(p \rho + (1-p) \sigma) \leq p I^{\sqcup}_{A:C|B}(\rho) + (1-p) I^{\sqcup}_{A:C|B}(\sigma)$. \label{item:convex}
    \item \textit{Reduction to CMI for pure states}: $I^{\sqcup}_{A:C|B}(\ketbra{\psi}{\psi}) = I_{A:C|B}(\ketbra{\psi}{\psi})$. \label{item:CMI_pure-states}
    \item \textit{Upper bound by CMI}: $I^{\sqcup}_{A:C|B}(\rho) \leq I_{A:C|B}(\rho)$. \label{item:CMI_upper_bound}
    \item \textit{Monotonicity under strictly local mixed-unitary channels}: $I^{\sqcup}_{A:C|B}(\E_{\text{m}U}(\rho)) \leq I^{\sqcup}_{A:C|B}(\rho)$ for $\E_{\text{m}U} = \sum_i p_i U_i (\cdot) U_i^\dagger$, where all $U_i = U_A^{(i)} \otimes U_B^{(i)} \otimes U_C^{(i)} \otimes U_E^{(i)}$ act strictly locally. \label{item:monotonic_mixed_unitary}
\end{enumerate}

By definition, all of the properties above are valid for the TEF $\gamma_F$ as well. In particular, property \ref{item:CMI_pure-states} implies the TEF equals the TEE for pure states, and, more generally, property \ref{item:CMI_upper_bound} guarantees the TEF is always less than or equal to the TEE. Property \ref{item:monotonic_mixed_unitary} implies that the TEF cannot increase under on-site Pauli noise. In fact, all qubit-to-qubit unital channels are mixed unitary channels~\cite{landau_birkhoffs_1993, watrous_theory_2018}. Despite this, allowing for classical communication can increase the TEF. An example is given by an SPT cluster state in 2d protected by a $\mathbb{Z}_2$ 0-form $\times ~\mathbb{Z}_2$ 1-form symmetry. This is a short-range entangled state with zero $\gamma_F$. However, it's well-known that, under single-site measurement and single-site unitary operations conditioned on the measurement results, it can be turned into a toric-code ground state \cite{rbh_toric_2005}, which exhibits a $\mathbb{Z}_2$ topological order with a non-zero $\gamma_F$.

With the preceding definitions in hand, we first prove a lower bound for the TEE of states with strong-weak anomalous symmetries, and then use it to show a lower bound for the TEF in the case of purely strong symmetries.

\begin{theorem}\label{thm:CMI_bound-sw}
    Suppose $\rho$ a state strongly (weakly) symmetric under one-form symmetries corresponding to anyons labels in $\mathcal{S}_\rho$ (in $\mathcal{W}_\rho \supset \mathcal{S}_\rho$), where the weak symmetries are on-site up to an LU $U$: $\forall b \in \mathcal{W}_\rho,\ W_b(\gamma) = U^ \dagger \prod_{x \in \gamma} w^{(b)}_x U$. Then the TEE $\gamma$ is bounded from below by $\frac{1}{2}\log n$, where $n = \rk ([S_{ab}]_{a \in \mathcal{S}_\rho, b\in \mathcal{W}_\rho})$ is the number of weak symmetry anyons $b \in \mathcal{W}_\rho$ having unique braiding phases with the strong symmetry anyons $a \in \mathcal{S}_\rho$.
\end{theorem}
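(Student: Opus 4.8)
The plan is to reduce the mixed-state statement to a pure-state entropy inequality in the Levin–Wen geometry, where the tripartition $ABC$ forms an annulus with $A$ and $C$ the two arcs separated by $B$ (Fig.~\ref{fig:levin_wen}). The key geometric input is that, since $A$ and $C$ sit on opposite sides of the annulus, there are loops $\gamma$ (resp.\ $\tau$) that can be ``cut'' into two halves by removing a small disk: one half passing through $A$ and one through $C$, with the cuts happening inside the $B$ region. Concretely, for a strong anyon $a$ and a weak anyon $b$ I would place the string operators so that the portion of $W_a$ that will be manipulated lives in $ABC$ and the portion of $W_b$ wraps the annulus, and arrange that the only crossing relevant to braiding occurs where the strings overlap inside $AB$ or $BC$. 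The nontrivial braiding phase $S_{ab}$ then obstructs the state, restricted to a single pure component, from being a product across the cut, and this obstruction is quantitatively $\log$ of the number of distinct phases — i.e.\ $\log n$ worth of correlation — which feeds into the CMI.

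The key steps, in order, are: (i) decompose $\rho = \sum_i p_i \ketbra{\psi_i}{\psi_i}$; by strong-symmetry inheritance (Lemma~\ref{lemma:decomposition_inheritance}) each $\ket{\psi_i}$ carries all the strong one-form symmetries $W_a(\ell)$, $a\in\mathcal{S}_\rho$, with eigenvalue $1$, and the on-site weak symmetries $\prod_x w^{(b)}_x$ (conjugated by $U$) are genuine weak symmetries of $\rho$ though not necessarily of each $\ket{\psi_i}$. (ii) For a fixed component $\ket{\psi_i}$, use the Levin–Wen geometry to build, from the strong loop operators restricted to half-loops and the weak onsite operators restricted to arcs, a family of operators acting on $AB$, $BC$, $B$ alone, whose algebra contains the braiding phases $S_{ab}$; the rank $n$ of the braiding matrix $[S_{ab}]$ counts independent ``order parameters'' distinguishing superselection-like sectors on the annulus. (iii) Invoke the Kim–Levin / Levin-type argument (as cited, \cite{kim_universal_2023, levin_physical_2024}): the existence of $n$ such operators, together with the deformability of the string operators guaranteed by the symmetry and the locality of the FDLU representatives, forces $I_{A:C|B}(\ketbra{\psi_i}{\psi_i}) \geq \log n$ for each $i$. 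A clean way to package this is to note that restricting $W_a(\gamma)$ by taking an inner product against a reference product state in the complementary region (exactly the $K_a(\gamma)$ construction from the proof of Lemma~\ref{lemma:BE}) produces operators on $B$ that fail to commute by $S_{ab}$; the dimension of the representation of this algebra must be at least $n$, which lower-bounds $S_B$ relative to $S_{AB}$ and $S_{BC}$ in the requisite way. (iv) Since $I_{A:C|B}(\ketbra{\psi_i}{\psi_i}) \geq \log n$ for every $i$, average: $\sum_i p_i I_{A:C|B}(\ketbra{\psi_i}{\psi_i}) \geq \log n$. Because this holds for every pure decomposition and hence is a bound on the co(QCMI), and because the \emph{mixed} convex roof only enlarges the set of decompositions, one does \emph{not} immediately get $\gamma_F \geq \frac12 \log n$ — but one does get the weaker statement claimed here, namely a bound on the \emph{TEE} $\gamma = \frac12 I_{A:C|B}(\rho)$ itself, via the fact that $I_{A:C|B}(\rho) \geq$ (something controlled by the component bound) using concavity/SSA properties of the CMI on the specific annular geometry, or more directly because the weak symmetry of $\rho$ already pins the relevant correlations at the level of $\rho$ without decomposing. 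I would actually prove the cleanest route: show directly that $\rho$ itself, being weakly symmetric under the $w^{(b)}$ and strongly under the $W_a$, satisfies $I_{A:C|B}(\rho) \geq \log n$ by the operator-algebra argument applied to $\rho$ rather than to components — weak symmetry suffices because the Levin–Wen CMI argument only needs the operators to preserve $\rho$ up to conjugation, and conjugation is what weak symmetry provides.

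I expect the main obstacle to be step (iii)/(iv): making the counting of the braiding matrix rank $n$ translate into a sharp $\log n$ lower bound on the CMI, rather than merely a positive bound. The subtlety is that $[S_{ab}]_{a\in\mathcal{S}_\rho,b\in\mathcal{W}_\rho}$ is a matrix of $U(1)$ phases, not a $0/1$ object, so ``rank'' must be interpreted correctly (rank over what ring? the claim is that $n$ is the number of weak anyons with distinct braiding-phase vectors against the strong anyons), and one must extract from it a set of commuting operators with $n$ distinct joint eigenvalues whose support straddles the cut in the annulus in the right way. The honest technical core is: given operators $\{K_b\}$ on $B$ realizing a projective representation of an abelian group with commutation phases encoded by $S_{ab}$, the minimal dimension of the ambient space is the index of the kernel of the associated alternating form, which is exactly $n$; then a standard argument (cf.\ Kim–Levin) bounds $S_{AB}+S_{BC}-S_B-S_{ABC} \geq \log n$. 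I would also need to be careful that the weak-symmetry hypothesis ``on-site up to a \emph{fixed} LU $U$'' is used to conjugate everything into the genuinely-onsite frame so that the arc restrictions of $\prod_x w^{(b)}_x$ are honestly local operators on the respective regions; this is where the hypothesis $W_b(\gamma) = U^\dagger \prod_{x\in\gamma} w^{(b)}_x U$ is essential and cannot be weakened to a general FDLU string operator.
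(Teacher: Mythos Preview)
Your proposal has a genuine gap. The route through pure-state decomposition in steps (i)--(iii) does not work, and you already notice why: weak symmetries are \emph{not} inherited by components $\ket{\psi_i}$. Since the theorem is about strong--weak braiding, you need the weak anyons to be symmetries of the state you are working with; on a single $\ket{\psi_i}$ the weak strings are neither deformable nor on-site symmetries, so you cannot run any version of the Kim--Levin argument there. The pivot in (iv) to ``work directly on $\rho$'' is the right instinct, but the operator-algebra sketch you give (projective representation of operators $\{K_b\}$ on $B$, bound via minimal representation dimension) is not the argument that actually succeeds, and it is not clear how to make it quantitative for mixed $\rho$.

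The paper's proof is conceptually different and more concrete. One \emph{creates} a family of states $\rho_b \defeq \Tr_{\comp{R}}[W_b' \rho (W_b')^\dagger]$ on the annulus $R=ABC$ by threading an \emph{open} weak-anyon string $W_b'$ into the hole, for each $b$ in a set of $n$ weak anyons with linearly independent braiding vectors. Then one verifies three properties: (a) the $\rho_b$ have mutually orthogonal supports, because a strong loop $W_a$ around the annulus gives each $\rho_b$ the strong charge $S_{ab}$; (b) they are locally indistinguishable on any contractible subregion, because the weak string $W_b'$ can be deformed away using \emph{weak} symmetry of $\rho$; (c) they are homentropic, because the on-site form $W_b' = U^\dagger \prod_x w_x^{(b)} U$ lets the action on $R$ be made on-site (after contracting $A$ to $A'$ and stripping off gates of $U$ inside $A$). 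With these in hand, for $\lambda = \sum_c p_c \rho_c$ one computes
\[
I_{A:C|B}(\rho) = I_{A:C|B}(\lambda) + S(\lambda) - S(\rho) \geq S(\lambda) - S(\rho) \geq H(p),
\]
using local indistinguishability in the first equality, strong subadditivity in the first inequality, and orthogonality plus homentropy in the second. Taking $p$ uniform gives $\log n$. Note the roles: weak anyons are \emph{creators} (open strings into the hole), strong anyons are \emph{detectors} (closed loops around the annulus), and the on-site hypothesis on the weak symmetry is used only for homentropy, not for commutation relations on $B$.
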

The proof follows essentially the same argument presented in \cite{kim_universal_2023, levin_physical_2024} for the lower bound to the TEE of the toric code phase or more general pure-state phases described by TQFT. The crucial difference compared to the pure state argument is that, for mixed states, the open strings of weak anyons $W_b', b \in \mathcal{W}_\rho,$ are used to create particles in the hole of the annulus, while the strong symmetry loops $W_a, a \in \mathcal{W}_\rho$, detect them by winding around when they braid nontrivially, $S_{ab} \neq 1$.

\begin{figure}[t]
    \centering
    \begin{tikzpicture}
    \def\R{3}
    \def\r{1.2}
    \def\Rmid{0.5*\R+0.5*\r}
    \def\Rprime{0.66*\R+0.33*\r}
    \def\rprime{0.66*\r+0.33*\R}
    \def\RA{(0.85*\R+0.15*\r)}
    \def\RWa{0.55*\R}
    \def\centerarc[#1](#2)(#3:#4:#5)% Syntax: [draw options] (center) (initial angle:final angle:radius)
    { \draw[#1] ($(#2)+({(#5)*cos(#3)},{(#5)*sin(#3)})$) arc (#3:#4:#5); }
    
    % Draw annulus
    \fill[gray!40] (0,0) circle (\R);
    \fill[white] (0,0) circle (\r);
    \draw (0,0) circle (\R);
    \draw (0,0) circle (\r);
    
    % Draw separations
    \foreach \angle in {60, 120, 240, 300} {
        \draw[black] ({\r*cos(\angle)},{\r*sin(\angle)}) -- ({\R*cos(\angle)},{\R*sin(\angle)});
    }
    
    % Labels for regions
    \node[font=\Large] at ({\RA*cos(140)}, {\RA*sin(140)}) {A};
    \node at ({-(\Rmid)}, 0) {A'};
    \node[font=\Large] at (0, {-(\Rmid)}) {B};
    \node[font=\Large] at (0, \Rmid) {B};
    \node[font=\Large] at (\Rmid, 0) {C};
    
    % Inner dashed A' region
    % \draw[dashed] (0,) arc (120:240:\Rprime);
    % \draw[dashed] (0,0) arc (120:240:\rprime);
    \centerarc[dashed](0,0)(120:240:\Rprime)
    \centerarc[dashed](0,0)(120:240:\rprime)
    
    % Red cubic curve (W_b')
    \draw[red, very thick] (-3.5,1) .. controls (-2,-1) and (-1,2) .. (0,0);
    \fill[red] (-3.5,1) circle (0.1) node[above] {$W_b'$};
    \fill[red] (0,0) circle (0.1);
    
    % Blue elliptical curve (W_a)
    \draw[blue, very thick] (0,0) ellipse ({1.1*\Rprime} and {\rprime});
    \node[blue] at (\RWa, \RWa) {$W_a$};
    
\end{tikzpicture}
    \caption{Configuration of regions and Wilson lines used in the proof of  \ref{thm:CMI_bound-sw}. Importantly, the strong one-form symmetry $W_a$ inside the annulus $R=ABC$ detects the action of the open string operator $W_b'$, if $S_{ab} \neq 1$.}
    \label{fig:TEF_proof}
\end{figure}

\begin{proof}

Consider an annular region $R$ with a linear size much larger than the depth of $U$ and divided into three regions $ABC$ in the Levin-Wen configuration (See Fig. \ref{fig:TEF_proof}). We will construct a family of states $\{ \rho_i \}$ in $R$ satisfying \cite{sutter_universal_2016}:
\begin{enumerate}
    \item \textit{Orthogonal supports}, $F(\rho_i, \rho_j) = \delta_{ij}$ ($F$ being the fidelity).
    \item \textit{Locally indistinguishable}, $\Tr_{\bar{G}} \rho_i = \Tr_{\bar{G}} \rho_j$ for simply connected regions $G$ in the annulus. 
    \item \textit{Homentropic}, $S(\rho_i) = S(\rho_j)$.
\end{enumerate}

We construct this family of states by first choosing a set $\mathcal{W}^{(n)} \subseteq \mathcal{W}_\rho$ of $n = \rk ([S_{ab}]_{a \in \mathcal{S}_\rho, b\in \mathcal{W}_\rho})$ weak anyons labels so that the phase vectors $\{[S_{ab}]_{a\in \mathcal{S}_\rho} \mid b \in \mathcal{W}^{(n)}\}$ are linearly independent. Then, by applying a weak $b \in \mathcal{W}^{(n)}$ anyon open string $W_b'$ that terminates inside the inner circle of the annulus to the state $\rho$, and then restricting the resulting state to the annulus $R$, we have a family of states $\rho_b \defeq \Tr_{\comp{R}}[W_b' \rho (W_b')^\dagger]$ with the desired properties, for which we argue below.

For the orthogonality property, first note that the charge of the state $\rho_b$ under a strong symmetry loop $W_a(\gamma)$ that goes around the annulus is the braiding phase $S_{ab}$, since
\begin{equation}
    W_a \rho_b = S_{ab} \Tr_{\comp{R}}[W_b' W_a \rho (W_b')^\dagger] =  S_{ab} \rho_b.
\end{equation}
From the definition of $\mathcal{W}^{(n)}$, for any two weak anyons $b, c \in \mathcal{W}^{(n)}$, there exists at least one strong anyon $a \in \mathcal{S}_\rho$ such that $S_{ab} \neq S_{ac}$. Since the unequal strong symmetry charges $S_{ab}$ and $S_{ac}$ are inherited to all states in the decompositions of $\rho_b$ and $\rho_c$, respectively, then their supports are necessarily orthogonal.

For the local indistinguishability of $\rho_b$ in contractible regions $G$ inside the annulus, we use the fact that the weak open string $W_b'(\gamma)$ that creates a $b$ anyon inside the annulus can be deformed to another open string $W_b'(\tilde\gamma)$ whose support does not intersect $G$, while having the same action on $\rho$. Hence,
\begin{equation}
    \Tr_{\comp{G}}[\rho_b] = \Tr_{\comp{G}}[W_b'(\tilde\gamma) \rho W_b'(\tilde\gamma)^\dagger] = \Tr_{\comp{G}}[\rho]
\end{equation}

Finally, to argue that all states $\rho_b$ have the same entropy, we will first make the seemingly artificial assumption that $U$ acts trivially on $A$, which we will later lift to complete the proof. In such case, $W_b'(\gamma) = U^\dagger W_{0,b}'(\gamma) U$ acts on-site on $A$ if $\gamma$ is a curve passing through $A$ (See Fig. \ref{fig:TEF_proof}). Because of the on-site action, the entropy $S(\rho_b)$ of $\rho_b$ in $R = ABC$ is the same as the original entropy of $\rho_e \equiv \rho$, in which no string is applied. 

With the three properties above, we can bound the TEE by the Shannon entropy $H(p)$ of an arbitrary probability distribution $(p_c)_{c\in \mathcal{W}^{(n)}}$ over the states $\rho_c$. To that end, we consider the ensemble $\lambda \defeq \sum_c p_c \rho_c$ and compute:
\begin{align}
    I_{A:C|B}(\rho) & = I_{A:C|B}(\lambda) + S(\lambda) - S(\rho) \\
    & \geq S(\lambda) - S(\rho) \\
    & \geq H(p) + \sum_b p_b [S(\rho_b) - S(\rho)] \\ 
    & = H(p), \label{eq:fine_tuned_TEE_proof}
\end{align}
where in the first line we used that $\lambda$ is locally indistinguishable from $\rho$, in the second line we used strong subadditivity, in the third line we used that the $\{ \rho_b\}$ states are orthogonal, and finally, we used that $W_b'$ acts on-site on $R$ for $S(\rho_b) = S(\rho)$. Since the probability distribution $p$ is arbitrary, we can take the uniform $p_c = 1/n$, giving $I_{A:C|B}(\rho) \geq \log n$, as expected.

Now, let us lift the assumption that $U$ acts trivially in $A$ by considering an annular region $R'=A'BC$ similar to the original one $R$, but with $A$ contracted radially to a smaller region $A' \subset A$ (See Fig. \ref{fig:TEF_proof}). 
We then gather the gates from $U$ that act deep inside $A$ in a unitary $V$ so that $U' \defeq U V^\dagger$ acts trivially on a curve $\gamma$ passing through $A'$. We also define a new state $\rho' \defeq V \rho V^\dagger$, which is weakly symmetric under $W_b'(\gamma) \defeq (U')^\dagger \prod_{x \in \gamma} w_x^{(b)} U'$. Since the entanglement entropy of a subregion does not change when a unitary acts strictly inside it or on its complement, then the CMI of $\rho$ is the same as the CMI of $\rho'$:
\begin{equation}\label{eq:remove_unentangling_gates}
    I_{A:C|B}(\rho) = I_{A:C|B}(\rho').
\end{equation}
Now, we reduce $A$ to $A'$. Since $I_{A:C|B} - I_{A':C|B} = I_{A\setminus A' : C | A', B}$, then by strong subadditivity, we have 
\begin{equation}\label{eq:contract_region_SSA}
    I_{A:C|B}(\rho') \geq I_{A':C|B}(\rho').
\end{equation}
Since $W'_b(\gamma)$ acts on-site in $A'$, the conditions for Eq. \eqref{eq:fine_tuned_TEE_proof} apply for $\rho'$, and we have 
\begin{equation}\label{eq:contracted_TEE_proof}
    I_{A':C|B}(\rho') \geq \log n.
\end{equation}
Finally, by chaining together Eqs. \eqref{eq:remove_unentangling_gates}, \eqref{eq:contract_region_SSA} and \eqref{eq:contracted_TEE_proof}, we arrive at the desired lower bound for the TEE $\gamma$ of $\rho$: $\gamma = \frac{1}{2} I_{A:C|B}(\rho)\geq \frac{1}{2} \log n$.
\end{proof}

Due to the inheritance property of strong symmetries, we can leverage Theorem \ref{thm:CMI_bound-sw} to bound the TEF of states with anomalous strong symmetries:
\begin{corollary}\label{cor:CMI_bound-ss}
    Suppose $\rho_0$ a state strongly symmetric under an anomalous one-form symmetry with on-site action $W^{(0)}_a(\gamma) = \prod_{x \in \gamma} w^{(a)}_x \in G_{\rho_0}$, corresponding to a (2+1)-D topological order with abelian anyons $a \in \mathcal{S}_\rho$. then the TEF $\gamma_F$ of any other state $\rho$ in the same mixed state phase of $\rho_0$ is bounded from below by $\frac{1}{2}\log n$, where $n = \rk ([S_{ab}]_{a,b \in \mathcal{S}_\rho})$ is the number of strong symmetry anyons $b \in \mathcal{W}_\rho$ having unique braiding phases with other strong symmetry anyons $a \in \mathcal{S}_\rho$, including possibly themselves.
\end{corollary}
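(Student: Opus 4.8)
The plan is to bootstrap Theorem~\ref{thm:CMI_bound-sw} from a single state to the whole phase using symmetry pullback, and then average it over decompositions to obtain a bound on the convex roof. The key observation is that Theorem~\ref{thm:CMI_bound-sw} lower bounds $I_{A:C|B}$ of \emph{every} state carrying an anomalous $1$-form symmetry that is on-site up to a single LU; so if I can arrange that every component of every decomposition of $\rho$ (tensored with a trivial ancilla) carries such a symmetry with braiding-rank $n$, then the average CMI, and hence $\gamma_F$, is bounded below by $\tfrac12\log n$.

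First I would use Definition~\ref{def:phase_of_matter}: since $\rho$ and $\rho_0$ lie in the same phase there is an SLC $\mathcal{F}=\sum_i p_i\mathcal{F}_i$ with $\mathcal{F}(\rho)=\rho_0$. Fixing the Stinespring-dilated LU $U$ of any $\mathcal{F}_i$ on $\Hilb_Q\otimes\Hilb_A$, Lemma~\ref{lemma:symmetry_pullback} gives a charge-preserving injection $\mathcal{F}^*(g)=U^\dagger(g\otimes\one_A)U$ from $G_{\rho_0}$ to $G_{\rho\otimes\ketbra{\bf 0}{\bf 0}_A}$. Applying $\mathcal{F}^*$ curve-by-curve to the on-site string operators $W^{(0)}_a(\gamma)=\prod_{x\in\gamma}w^{(a)}_x$ produces string operators $W_a(\gamma)\defeq U^\dagger\big(W^{(0)}_a(\gamma)\otimes\one_A\big)U$ under which, on contractible loops, $\rho\otimes\ketbra{\bf 0}{\bf 0}_A$ is strongly symmetric; these are manifestly ``on-site on the enlarged lattice up to the single LU $U$'', exactly the hypothesis format of Theorem~\ref{thm:CMI_bound-sw}. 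By Lemma~\ref{lemma:anomaly_invariance} the braiding matrix $[S_{ab}]_{a,b\in\mathcal{S}_\rho}$ is unchanged by $\mathcal{F}^*$, so its rank remains $n$.

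Next I would expand the TEF using Definitions~\ref{def:TEF} and~\ref{def:mcoCMI}: $2\gamma_F(\rho\otimes\ketbra{\bf 0}{\bf 0}_A)=\min_{\{q_i,\sigma_i\}}\sum_i q_i\,I_{A:C|B}(\Tr_E[\sigma_i])$, the minimum over decompositions $\rho\otimes\ketbra{\bf 0}{\bf 0}_A=\sum_i q_i\sigma_i$ on the full enlarged lattice (which includes the hole of the annulus, as required by the proof of Theorem~\ref{thm:CMI_bound-sw}). For each component, Lemma~\ref{lemma:decomposition_inheritance} guarantees $\sigma_i$ inherits the strong symmetry under every $W_a(\gamma)$ with the same trivial charge; taking the pulled-back anyon set $\mathcal{S}_\rho$ as both the strong and weak set in Theorem~\ref{thm:CMI_bound-sw} yields $I_{A:C|B}(\Tr_E[\sigma_i])=I_{A:C|B}(\sigma_i)\geq\log n$. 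Averaging and minimizing gives $\gamma_F(\rho\otimes\ketbra{\bf 0}{\bf 0}_A)\geq\tfrac12\log n$. Finally I would discard the ancilla: from $\rho\otimes\ketbra{\bf 0}{\bf 0}_A=\sum_i q_i\sigma_i$ the kernel argument of Lemma~\ref{lemma:decomposition_inheritance} gives $\supp\sigma_i\subseteq\supp\rho\otimes\C\ket{\bf 0}_A$, forcing $\sigma_i=\sigma_i'\otimes\ketbra{\bf 0}{\bf 0}_A$ with $\sum_i q_i\sigma_i'=\rho$; since a pure product factor adds nothing to any entropy, $I_{A:C|B}(\Tr_E[\sigma_i])$ is unchanged, so the mcoCMI is insensitive to the trivial ancilla and the bound descends to $\gamma_F(\rho)\geq\tfrac12\log n$.

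The main obstacle I anticipate is the bookkeeping that makes ``each $\sigma_i$ satisfies Theorem~\ref{thm:CMI_bound-sw}'' airtight: one must check that pullback leaves the symmetry \emph{exactly} of the form $U^\dagger\cdot(\text{on-site string})\cdot U$ for one LU $U$ --- precisely where the assumption that $\rho_0$'s symmetry is on-site, together with the single-LU form of $\mathcal{F}^*$, is indispensable --- and that the rank $n$ is genuinely a phase invariant, which is what Lemma~\ref{lemma:anomaly_invariance} supplies. A secondary, more routine point is the product-ancilla invariance of the mcoCMI used in the last step; although the support argument above makes it elementary, it is worth isolating as a lemma since the ancilla of Lemma~\ref{lemma:symmetry_pullback} is distributed over all four Levin-Wen regions $A,B,C,E$ of Fig.~\ref{fig:levin_wen}.
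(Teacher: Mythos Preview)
Your proposal is correct and follows essentially the same strategy as the paper: pull back the on-site strong symmetries of $\rho_0$ through an SLC to $\rho\otimes\ketbra{\bf 0}{\bf 0}_A$, use inheritance so that every component of every decomposition carries the anomaly, apply Theorem~\ref{thm:CMI_bound-sw} termwise with $\mathcal{S}_\rho=\mathcal{W}_\rho$, and average. Your treatment is in fact more careful than the paper's on two points the paper leaves implicit---that the pulled-back operators are exactly ``on-site up to a single LU'' (so Theorem~\ref{thm:CMI_bound-sw} applies verbatim) and that the trivial ancilla can be stripped from the mcoCMI via the support argument---so nothing is missing.
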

\begin{proof}
    Let $\rho$ be a state in the same mixed-state phase of $\rho_0$. Each state $\rho_i$ of any decomposition of $\rho = \sum_i p_i \rho_i$ inherits the anomalous symmetries $W^{(0)}_a$ of $\rho_0$ via symmetry pullback. Such states satisfy the conditions of Theorem \ref{thm:CMI_bound-sw} with $n = \rk ([S_{ab}]_{a,b \in \mathcal{S}_\rho})$. Hence,
    \begin{align*}
        2 \gamma_F & = \min_{\{p_i, \rho_i\}} \sum_i p_i I_{A:C|B}(\Tr_E[\rho_i]) \\
        & \geq \min_{\{p_i, \rho_i \}} \sum_i p_i \log n \\
        & \geq \log n.
    \end{align*}
\end{proof}

\section{Example: toric code under local noise}
\label{sec:example_toric_code}

We now apply the formalism of (strong and weak) anomalous higher-form symmetries described in section \ref{sec:anomaly_def} to the toric code state under different decoherence noises. The choice of the toric code is due to its analytical simplicity and extensive understanding in the literature, including recent works on mixed states phases of matter. We will argue here that, not only the long-range entanglement of the toric code under weak Pauli-$X$ and $Z$ noise is guaranteed by the 1-form anomaly, but also that the value of its TEF is exactly known by combining the results of section \ref{sec:long-range_TEF} with the explicit decompositions presented in \cite{wang_analog_2024}. The toric code under $ZX$ dephasing and the intrinsically mixed-state TO it generates at high noise strength will also be discussed.

First, we briefly review the definition and basic properties of the (two-dimensional square-lattice) toric code \cite{kitaev_faulttolerant_2003}. The toric code is the subspace of states stabilized by the star $A_v$ and plaquette $B_p$ operators defined as

\def\starterm{\begin{tikzpicture}[scale=0.2, baseline=-0.5ex]
    \draw[] (0, -3) -- (0, 3); % Vertical line
    \draw[] (-3, 0) -- (3, 0); % Horizontal line
    
    \node[red] at (0, 1.5) {Z}; % Top
    \node[red] at (1.5, 0) {Z};  % Right
    \node[red] at (0, -1.5) {Z}; % Bottom
    \node[red] at (-1.5, 0) {Z};  % Left
\end{tikzpicture}}

\def\plaqterm{\begin{tikzpicture}[scale=0.2,baseline=-0.5ex]
    \draw[] (-1.5, -1.5) -- (-1.5, 1.5);
    \draw[] (-1.5, 1.5) -- (1.5, 1.5);
    \draw[] (1.5, 1.5) -- (1.5, -1.5);
    \draw[] (1.5, -1.5) -- (-1.5, -1.5);
    
    \node[blue] at (0, 1.5) {X}; % Top
    \node[blue] at (1.5, 0) {X};  % Right
    \node[blue] at (0, -1.5) {X}; % Bottom
    \node[blue] at (-1.5, 0) {X};  % Left
\end{tikzpicture}}

\begin{equation}
    A_v = \starterm, \qquad B_p = \plaqterm.
\end{equation}

Equivalently, it is the ground state subspace of the stabilizer Hamiltonian 
\begin{equation}
    \label{eq:TC_Hamiltonian}
    H_{TC} = - \sum_v A_v - \sum_p B_p.    
\end{equation}
Importantly, it is degenerate when $H_{TC}$ is embedded in a base space with nontrivial topology, such as a torus. However, for now, we will focus on local properties of the toric code, so one can assume $H_{TC}$ is defined on a plane. In that case, the ground state is unique, and we denote it by $\ket{\text{TC}}$. From the stabilizers constraints, it follows that $\ket{\text{TC}}$ is the superposition of all loops in the $Z$ basis:
\begin{equation}
    \ket{\text{TC}} \propto \sum_{\text{loops } \ell} \ket{\ell}
\end{equation}

The one-form symmetries of $H_{TC}$ are generated by loops of Pauli-$X$ operators in the direct lattice, and by loops of Pauli-$Z$ operators in the dual lattice. When opened, the two string operators create so-called electric ($e$) and magnetic ($m$) anyons, respectively, which exhibit nontrivial mutual braiding $S_{em} = -1$ but trivial self-braiding (and thus bosonic statistics). When brought together, the $e$ and $m$ anyons form a fermionic anyon $f = e m$.

One of the most remarkable properties of the toric code is its stability to perturbations. These can assume many forms, but perturbations via local noise channels are the ones relevant to the present discussion. In particular, if a noise channel $\E$ is applied to the toric code ground state, and the resulting state $\E(\ketbra{\text{TC}}{\text{TC}})$ is in the same long-range entanglement phase, then by invariance of the anomaly under symmetry pullback, we know that $\E(\ketbra{\text{TC}}{\text{TC}})$ still has strong one-form symmetries corresponding to $e$, $m$ and $f$ particles, along with their expected braiding and self-statistics. Hence, by Theorem \ref{thm:LRBE}, $\E(\ketbra{\text{TC}}{\text{TC}})$ possess long-range bipartite entanglement. 

Note that the line of reasoning above does not prove that the toric code phase is stable against sufficiently weak perturbations, but rather explains the invariance of its topological properties assuming the perturbed state is two-way connected to the clean state. To say more about the noisy state, we will now focus on two particular noise models: dephasing under Pauli-X and Z operators (Sec. \ref{sec:pauli-x_z_dephasing}), and ZX dephasing (Sec. \ref{sec:zx_dephasing}). The former has been thoroughly studied in the literature from the point of view of loss of quantum memory \cite{dennis_topological_2002}, and the latter sparked recent interest as it gives rise to an intrinsic mixed-state phase at high noise strength \cite{wang_intrinsic_2025, ellison_classification_2025, sohal_noisy_2025, li_how_2024}. In both cases, our main contribution will be to reinterpret their phase diagrams in light of one-form anomalies, the symmetry pullback mechanism, and the TEF.

\subsection{Pauli-X and Z dephasing}
\label{sec:pauli-x_z_dephasing}

The most well-studied noise model is dephasing against Pauli-$Z$ and $X$ operators, where, independently on each site, the $X$ unitary is applied with probability $p_X$ and the $Z$, with probability $p_Z$, resulting in the channel
\begin{align}
    \E(p_X, p_Z) & \defeq \bigotimes_{e} \E_e(X, p_X) \circ \E_e(Z, p_Z),
\end{align}
where $\E_e(U, p) [\rho] \defeq (1-p) \rho + p U_e \rho U_e^\dagger$, is a mixed-unitary channel acting on the edge $e$. When acting on the toric code ground state $\ket{\text{TC}}$, it results in
\begin{equation}
    \rho_{p_X, p_Z} \defeq \E(p_X, p_Z)[\ketbra{\text{TC}}{\text{TC}}].
\end{equation}
Any probability $0 \leq p_{X},p_Z \leq 1$ can be considered, but we will limit ourselves to $0 \leq p_{X},p_Z \leq 1/2$, since the other cases are equivalent up to global Pauli rotations.

\subsubsection{Phase diagram}

We will now identify the phase diagram of the states $\rho_{p_X,p_Z}$ in the $(p_X, p_Z)$ phase space, culminating in Fig. \ref{fig:phase_diagram}.

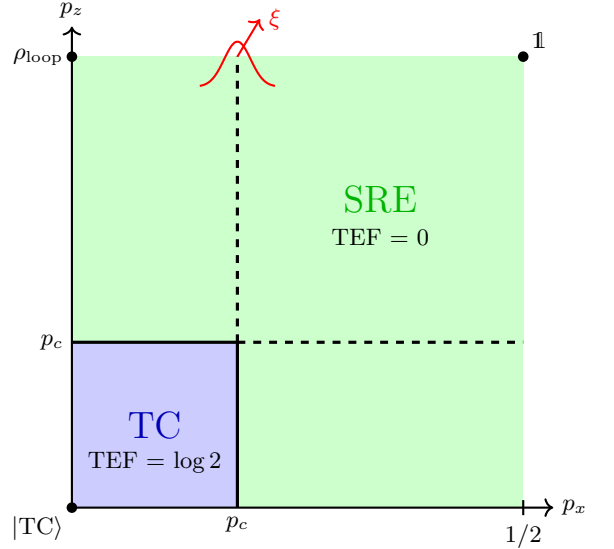
\begin{figure}[t]
    \centering
    \begin{tikzpicture}

% Define coordinates and lengths
\def\xmax{6}
\def\ymax{6}
\def\pc{2.2}

% Fill regions - all non-TC regions with green background
\fill[green!20] (0,\pc) rectangle (\xmax,\ymax); % Upper-left region
\fill[green!20] (\pc,0) rectangle (\xmax,\pc);   % Lower-right region
\fill[green!20] (\pc,\pc) rectangle (\xmax,\ymax); % Upper-right region

% Fill TC region with blue
\fill[blue!20] (0,0) rectangle (\pc,\pc);

% Draw axes
\draw[thick, ->] (0,0) -- (\xmax+0.4,0) node[right] {$p_x$};
\draw[thick, ->] (0,0) -- (0,\ymax+0.4) node[above] {$p_z$};

% Draw p=1/2 ticks
\draw[thick] (\xmax, 0.1) -- (\xmax, -0.1) node[below] {$1/2$};
% \draw[thick] (0.1, \ymax) -- (-0.1, \ymax) node[left] {$1/2$};

% Draw horizontal line - solid from (0,pc) to (pc,pc), dashed from (pc,pc) to (xmax,pc)
\draw[very thick] (0,\pc) -- (\pc,\pc);
\draw[very thick, dashed] (\pc,\pc) -- (\xmax,\pc);

% Draw vertical line - solid from (pc,0) to (pc,pc), dashed from (pc,pc) to (pc,ymax)
\draw[very thick] (\pc,0) -- (\pc,\pc);
\draw[very thick, dashed] (\pc,\pc) -- (\pc,\ymax);

% Add labels for axes
\node[below] at (\pc,0) {$p_c$};
\node[left] at (0,\pc) {$p_c$};

% Add states
\fill (0,0) circle (2pt) node[below left] {$\ket{\text{TC}}$};
\fill (0,\ymax) circle (2pt) node[left] {$\rho_{\text{loop}}$};
\fill (\xmax,\ymax) circle (2pt) node[above right] {$\one$};

% Add text for regions using calc library for centralization
\node[green!70!black, font=\Large] at (0.5* \pc + 0.5*\xmax,0.5* \pc + 0.5*\xmax) {SRE};
\node[black, font=\small] at (0.5* \pc + 0.5*\xmax,0.5* \pc + 0.5*\xmax - 0.5) {TEF = 0};

\node[blue!70!black, font=\Large] at (0.5* \pc, 0.5* \pc) {TC};
\node[black, font=\small] at (0.5* \pc, 0.5* \pc - 0.5) {TEF = $\log 2$};

% Draw small CMI axis
\draw[red, thick, ->] (\pc, \ymax) -- (\pc + 0.3, \ymax + 0.5) node[right] {$\xi$};

% Draw gaussian 
\begin{scope}[shift={(\pc,\ymax-0.4)}]
\draw[red, thick, smooth, domain=-0.5:0.5, samples=20] 
        plot (\x, {0.6*exp(-(4*\x)^2 )});
\end{scope}

% \begin{scope}[shift={(0.5*\pc+0.5*\xmax, \pc)}]
%     \def\Csize{1}

%     % Background square with rounded corners
%     \fill[blue!40, rounded corners] (-\Csize,-\Csize) rectangle (\Csize,\Csize);
    
%     % Outer circle (Region B)
%     \node[draw, circle, fill=green!50, minimum size=1.5cm] (B) at (0,0) {};
    
%     % Inner circle (Region A)
%     \node[draw, circle, fill=red!50, minimum size=\Csize] (A) at (0,0) {A};
    
%     % Arrow with ends
%     \draw[<->, thick] (A) -- node[midway, above left=-1mm] {d} ++(0.9,0.9);
    
%     % Labels
%     \node at (-0.55*\Csize,0.55*\Csize) {B};
%     \node at (-0.8*\Csize,0.8*\Csize) {C};
% \end{scope}
\end{tikzpicture}
    \caption{Phase diagram of the toric code state under Pauli-$X$ and $Z$ dephasing. in blue, the toric code phase (TC), and in green, the short-range entangled (SRE) phase. The Markov length $\xi$ diverges at the dashed and solid lines inside the diagram. However, only the solid ones signal a transition from long-range to short-range entanglement.
    }
    \label{fig:phase_diagram}
\end{figure}

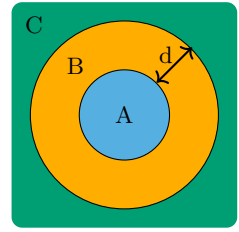
\begin{figure}
    \centering
    \begin{tikzpicture}
    \definecolor{colorA}{HTML}{55AFE0}
    \definecolor{colorB}{HTML}{FFAE00}
    \definecolor{colorC}{HTML}{009E73}

    % Background square with rounded corners
    \fill[colorC, rounded corners] (-1.5,-1.5) rectangle (1.5,1.5);
    
    % Outer circle (Region B)
    \node[draw, circle, fill=colorB, minimum size=2.5cm] (B) at (0,0) {};
    
    % Inner circle (Region A)
    \node[draw, circle, fill=colorA, minimum size=1.2cm] (A) at (0,0) {A};
    
    % Arrow with ends
    \draw[<->, thick] (A) -- node[midway, above left=-1mm] {d} ++(0.9,0.9);
    
    % Labels
    \node at (-0.65,0.65) {B};
    \node at (-1.2,1.2) {C};
\end{tikzpicture}
    \caption{Configuration of regions $A,B$ and $C$ for the calculation of the ``Markov length'' CMI $I_{A:C|B} \sim \exp(-d/\xi)$. Crucially, $B$ separated $A$ and $C$ by a distance $d$.}
    \label{fig:CMI_markov}
\end{figure}

Due to the topological nature of the toric code, we expect that, up to finite decoherence probabilities $p_X, p_Z$, the noisy toric code state $\rho_{p_X, p_Z} \defeq \E(p_X,p_Z) (\ketbra{\text{TC}}{\text{TC}})$ should remain in the same mixed-state phase. We can guarantee this by employing the methods described in \cite{sang_stability_2024}. There, it is proven that a finite-time local Lindbladian evolution $\rho(t) = e^{\int \mathcal{L}(t) dt} \rho_0$ can be reversed if the CMI $I_{A:C|B}$ between regions $A$ and $C$ separated by $B$ with width $d$ (See Fig. \ref{fig:CMI_markov}) decays exponentially throughout the evolution: $I_{A:C|B}[\rho(t)] = O(\exp(-d/\xi))$. The correlation length $\xi$ is called the \textit{Markov length}, and it further shown that, for the $Z$-dephased toric code (so $p_X = 0$), it diverges only at $p_c \approx 0.11$. We can thus conclude that there are at most two phases of states $\rho_{0, p_Z}$, $0 \leq p_Z \leq 1/2$, possibly separated at $p_c$. Since the ``loop soup'' state at $p_Z = 1/2$ is fully separable:
\begin{equation}\label{eq:loop_soup_state}
     \rho_{0, 1/2} = \rho_{\text{loop}} \propto \sum_{\text{loops } \ell} \ketbra{\ell}{\ell},
\end{equation}
then the entire phase at $p > p_c$ is also trivial according to Def. \ref{def:phase_of_matter}.

Furthermore, we can characterize the two phases from the point of view of anomaly. For $0 \leq p_Z < p_c$, the anomalous strong 1-form symmetries of the toric code ground state, corresponding to the anyon data $\{1, e, m, f\}$, are pulled back to the entire phase. From Theorem \ref{thm:LRBE}, we can deduce from the strong-strong anomaly that the toric code phase is long-range bipartite entangled. On the other hand, for $p_c < p_Z \leq 1/2$, the $W_e$ and $W_f$ symmetries become weak, and only $W_m$ (and, trivially, $W_1 = \one$) remains strong.

We can extend this result to the entire phase diagram $0 \leq p_x, p_Z \leq 1/2$ by noticing that the $Z$ and $X$ errors decouple, and the Markov CMI satisfies
\begin{equation}
    \label{eq:CMI_decoupling}
    I_{A:C|B}(\rho_{p_X, p_Z}) = I_{A:C|B}(\rho_{0, p_Z}) + I_{A:C|B}(\rho_{p_X, 0}).
\end{equation}
See Appendix \ref{appendix:noisy_TC_decoupling} for a detailed derivation. From the equation above, we know that the Markov length diverges exactly at $p_X = p_c$ or $p_Z = p_c$, separating the phase space into four disjoint regions (See Fig. \ref{fig:phase_diagram}). Again, this can indicate up to four different phases, but in this case there are only two: the toric code phase and the trivial SRE phase. Indeed, the phase at $p_c < p_X, p_Z \leq 1/2$ is also trivial because $\rho_{1/2,1/2} \propto \one$ is the maximally mixed state.

One might ask, then, if the transition lines separating the two trivial regions (dashed in Fig. \ref{fig:phase_diagram}) signal any meaningful change in the system that is not related to long-range entanglement. Indeed, we will argue in \ref{sec:classical_and_quantum_memories} that the transition is between a system displaying topological, but classical memory to one without a topological memory. Although this transition has already been pointed out in the literature \cite{bao_mixedstate_2023, wang_intrinsic_2025, liReplicaTopologicalOrder2024, zhang_strongweak_2024}, our contribution is to connect the classical memory to the anomaly between strong and weak one-form symmetries. Furthermore this shows the need for a more refined classification that distinguishes phases exhibiting quantum, classical and no memory, which will be further explored in another paper \cite{sang_mixedstate_2025}. 

\subsubsection{TEF}

In the last section, we determined the entire phase diagram of the toric code under $X$ and $Z$ noise. Here, we will use similar techniques to show how the TEF can be calculated.

We start with the toric code ground state $\ket{\text{TC}}$ at $p_X=p_Z=0$. Since it is a pure state, the TEF equals the TEE, which can be explicitly calculated to be $\log 2$. This value remains constant in the entire TC phase because it cannot increase, as the noise is on-site (See property \ref{item:monotonic_mixed_unitary} in Sec. \ref{sec:long-range_TEF}), nor decrease, as pulling back the strong $e$ and $m$ one-form symmetries from $\ket{\text{TC}}$ implies a lower bound of $\log 2$ due to their anomaly (See Corollary $\ref{cor:CMI_bound-ss}$). 

Outside the TC phase, our lower bound for the TEF gives $0$, since there is no guarantee of an anomaly between two strong symmetries. At $p_X = 0$ line, for example, the one-form symmetry of the $Z$ operators remains strong throughout, but the $X$ one-form symmetry becomes weak for any $p_Z > 0$. Moreover, showing that the TEF for $p_Z > p_c$ is exactly zero would mean that not even emergent strong symmetries with nontrivial braiding can be constructed.

To confirm that the TEF is indeed zero, we must show a particular decomposition of $\rho_{p_X, p_Z}$ into states with zero TEE. This was pursued in Refs. \cite{wang_analog_2024, chen2024separability} for the case $p_X = 0$ by considering the decomposition $\rho = \sum_{z} \ketbra{\Psi_z}{\Psi_z}$, where $\ket{\Psi_z} = \sqrt{\rho} \ket{z}$ and $\ket{z}$ is a $Z$-basis product state. In \cite{wang_analog_2024}, the expectation value of a non-contractible Pauli-$Z$ loop operator and the Rényi-2 version of the TEE for each state $\ket{\Psi_z}$ were related to observables in the RBIM, strongly suggesting a transition from a topological phase at $p < p_c$ to a trivial phase at $p > c$. In particular, the Rényi-2 TEE showed a jump from $\log(2)$ to $0$, independently of the ensemble state $\ket{\Psi_z}$. Later, this was confirmed by numerical studies \cite{chen2024separability}.  

More generally, the preceding calculation of the TEF for $p_X = 0$ and $p_Z > p_c$ determines the TEF for the whole trivial phase, i.e. even for $p_X > 0$. That is because the state $\rho_{p_X, p_Z}$ can be reached from $\rho_{0, p_Z}$ by applying the on-site Pauli-$X$ noise $\bigotimes_e \E_e(X, p_X)$: since the TEF cannot increase under on-site Pauli noise, it remains zero for $(p_X, p_Z) \in [0,1/2] \times (p_c, 1/2]$. By exchanging $X$ with $Z$ in the reasoning above, we reach the same conclusion for the entire trivial phase.

We end this section by noting that, even though the TEF of the trivial phase is zero, the TEE for $p_X=0$ (or $p_Z=0$) is lower bounded by $\log 2$ due to the braiding between the strong $W_m$ ($W_e$) and the weak $W_e$ ($W_m$) symmetries (See Theorem \ref{thm:CMI_bound-sw}).

\subsection{ZX dephasing - fermionic intrinsically mixed-state topological order}
\label{sec:zx_dephasing}

In the preceding section, we studied the phase diagram of the toric code under Pauli-X and Z dephasing. There, at a sufficiently high noise strength, the system transitions from a topological ordered phase to a trivial, SRE phase, due to the decoherence of $e$ and/or $m$ anyons. Recently \cite{wang_intrinsic_2025}, a new pattern of decoherence was proposed that changes the picture above. If, instead, the fermionic \emph{strong} one-symmetry is preserved, then long-range entanglement is still guaranteed from the nontrivial self-statistics. As we will see, this is accomplished by dephasing with a two-body operator $Z_e X_{e+\delta}$, where $\delta$ can be taken to be $\delta = (-1/2,1/2)$. Namely, the noise channel is
\begin{align}
    \label{eq:ZX_decoherence_channel}
    \E^{ZX}(p) & \defeq \prod_{e} \E^{ZX}_e (p),
\end{align}
where $\E_e^{ZX}(p) [\rho] \defeq (1-p) \rho + p Z_e X_{e+\delta} \rho X_{e+\delta} Z_{e}$. When acting on the toric code ground state $\ket{\text{TC}}$, it results in
\begin{equation}
    \rho^{ZX}_{p} \defeq \E^{ZX}(p)[\ketbra{\text{TC}}{\text{TC}}].
\end{equation}
We will only consider $p \leq 1/2$, since other values are equivalent up to a global Pauli-$Y$ rotation.

As alluded to earlier, the $ZX$ dephasing maintains the \textit{strong} fermionic one-form symmetry $W_f(\gamma) \defeq \prod_{e \in \gamma} X_e Z_{e+\delta}$, since $[W_f(\gamma), Z_e X_{e+\delta}]=0$ for all edges $e$. As expected, it can be checked explicitly that $W_f$ creates excitations with fermionic self-statistics, as defined in Sec. \ref{sec:anomaly-top_twist}. Thus, from Theorem \ref{thm:LRBE}, $\rho^{ZX}_p$ has long-range bipartite entanglement for any $p \in [0,1/2]$.

There is, however, a phase transition at $p_c \approx 0.11$ that separates the toric code phase at $p < p_c$ from an intrinsically mixed-state topological ordered (imTO) phase at $p > p_c$. First, let us motivate why a transition out of the toric code phase is expected, and then we will characterize the $p > p_c$ phase.

By mapping the coherent information to observables of the RBIM, Ref. \cite{wang_intrinsic_2025} was able to show a transition from a system supporting a quantum memory to one at $p > p_c$ having lost the quantum memory. The Markov length is also expected to diverge only at $p_c$\footnote{More specifically, by viewing the toric code as being stabilized by $A_v$ and $C_v = A_v B_{v-\delta}$, the $ZX$ noise only decoheres the $A_v$ stabilizers, and in the same way a $X$-dephasing noise of the same strength $p_X=p$ would. This gives rise to the same Markov CMI behavior.}. These statements already give strong evidence for a phase transition at $p_c$. 

We can learn more about the phase for $p > p_c$ by analyzing the maximally decohered state at $p = 1/2$. It can be checked that it is the maximally mixed state with strong fermionic one-form symmetry $W_f$, and thus stabilized only by the star-plaquette stabilizers $C_v = A_v B_{v-\delta}$. In particular, it cannot have other strong symmetries braiding nontrivially with $f$ (such as the ones of $e$ and $m$ anyons), which remains true for all states in the $p > p_c$ phase due to symmetry pullback. This has been termed an intrinsically mixed-state topological order because pure and gapped ground states are expected to be described by a \textit{modular} TQFT, which prohibits transparent anyons that do not braid with any other anyon (except for the identity particle) \cite{simon_topological_2023}.

In the imTO phase, the TEE is lower bounded by $\frac{1}{2} \log 2$, due to the braiding of the strong symmetry $W_f$ with the weak symmetries, $W_e$ and $W_m$ (See Theorem \ref{thm:CMI_bound-sw}). The TEF, on the other hand, has no positive lower bound for the TEF coming from corollary \ref{cor:CMI_bound-ss}, since the fermion $f$ braids trivially with itself. If one restricts the decompositions of the mixed convex roof $\gamma_F = \min_{\{p_i, \rho_i\}} \sum_i p_i \gamma(\rho_i)$ to gapped ground states $\rho_i = \ketbra{\text{GS}}{\text{GS}}$ described by modular TQFTs, then the restricted TEF $\gamma_{F,\text{gapped}}$ would be lower bounded by $\log 2$. This is because of the modularity condition discussed earlier. By strong symmetry inheritance, all such states would be symmetric under the fermionic one-form symmetry $W_f$, which by modularity would imply the existence of another anyon $a$ with which $f$ braids nontrivially, thus giving a TEE $\gamma(\rho_i) \geq \log 2$. 

We cannot, however, rule out the possibility of more exotic states (e.g. gapless states) having the strong fermionic one-form symmetry but zero TEE. Thus, we leave the determination of the TEF of the imTO phase as an open question.

A computable quantity that is believed to capture topological order in mixed states is the topological entanglement negativity (TEN) \cite{Lu_finite_T_TO_2020}. For states $\rho$ with area-law entanglement, it is the universal constant $\gamma_N$ appearing in the entanglement negativity in a region $A$, $E_N(\rho) = \alpha | \partial A| - \gamma_N$. The TEN of the state $\rho_{p=1/2}$ can be computed to be $\frac{1}{2}\log 2$ or $\log 2$, depending on the region $A$. See Appendix \ref{appendix:negativity} for the detailed calculation for the $ZX$-dephased toric code and Kitaev's honeycomb model.

\subsection{Classical and quantum memories}
\label{sec:classical_and_quantum_memories}

Here, we briefly mention the connection between topological quantum and classical memories and one-form anomalies. We do not aspire to tackle the connection in its full generality, but rather to showcase it via the examples discussed earlier in Secs. \ref{sec:pauli-x_z_dephasing} and \ref{sec:zx_dephasing}, while hinting at a general structure.

\subsubsection{Quantum memory and strong-strong anomaly}
\label{sec:quantum_memory}

When wrapped around a torus, the ground state subspace of the toric code Hamiltonian (Eq. \ref{eq:TC_Hamiltonian}) is 4-dimensional. This degeneracy can be found by counting the number of independent $A_s$ and $B_p$ stabilizers, but, more importantly to us, it can also be viewed as a consequence of $e$ and $m$ one-form symmetries on non-contractible loops around the torus. More precisely, if $W_e(\ell_{h(v)})$ and $W_m(\ell_{h(v)})$ are the one-form symmetries of the $e$ and $m$ anyons, respectively, over a non-contractible loop $\ell_{h(v)}$ in the horizontal (vertical) direction, then braiding implies
\begin{align}
    \label{eq:non_contractible_algebra}
    \begin{split}
    \{W_e(\ell_h), W_m(\ell_v) \} & = \{W_e(\ell_v), W_m(\ell_h) \} = 0, \\
    [W_e(\ell_h), W_e(\ell_v)] & = [W_m(\ell_h), W_m(\ell_v)] = 0, \\
    [W_e(\ell_h), W_m(\ell_h)] & = [W_e(\ell_v), W_m(\ell_v)] = 0.
    \end{split}
\end{align}
These are equivalent to the algebraic relations followed by Pauli-$X$ and $Z$ logical operators on two qubits. Indeed, one possible assignment is 
\begin{align}
    \label{eq:Pauli-Z_X_assignment}
    \begin{split}
    X_1 \defeq W_e(\ell_h)&, X_2 \defeq W_e(\ell_v), \\
    Z_1 \defeq W_m(\ell_v)&, Z_2 \defeq W_m(\ell_h). 
    \end{split}
\end{align}

Up to this point, the discussion did not require the one-form anomalous symmetries (over contractible loops) to be strong symmetries of a particular state. If this happens for both symmetries with nontrivial-braiding -- which we shall term strong-strong anomaly -- then the non-contractible loop operators are deformable. This further implies that the logical states are locally indistinguishable. The argument goes as follows: suppose $\ket{\psi_{00}}$ is a toric code state satisfying $Z_1 \ket{\psi_{00}}= Z_2 \ket{\psi_{00}} = \ket{\psi_{00}}$. By acting on $\ket{\psi_{00}}$ with $X_1$ and $X_2$, we can get the three other eigenstates of $Z_1$ and $Z_2$, such as $\ket{\psi_{10}} \defeq X_1 \ket{\psi_{00}}$. To see that $\ket{\psi_{10}}$ and $\ket{\psi_{00}}$ are locally indistinguishable, before taking the reduced density matrix of $\ket{\psi_{10}}$ over a small region $A$, we deform $X_1$ to not pass through $A$. In this way, 
\begin{align}
\Tr_{A^c}\ketbra{\psi_{10}}{\psi_{10}} & = \Tr_{A^c}[X_1 \ketbra{\psi_{00}}{\psi_{00}} X_1^\dagger] \label{eq:CM_LI_proof_1} \\
& = \Tr_{A^c}\ketbra{\psi_{00}}{\psi_{00}}. \label{eq:CM_LI_proof_2} 
\end{align}

The same is true for all other eigenstates constructed from $\ket{\psi_{00}}$ of any other logical unitary, and also for mixtures of those. For example, $\ket{\phi_{++}} = \frac{1}{2} \sum_{a,b=0}^1 \ket{\psi_{ab}}$, where $\ket{\psi_{ab}} = X_1^a X_2^b \ket{\psi_{00}}$, is an eigenvector of $X_1$ and $X_2$ with eigenvalue $+1$. Since $X_1 = W_e(\ell_h)$ and $X_2 = W_e(\ell_v)$ are deformable due to the strong symmetry of the $m$ anyon, then $\ket{\phi_{++}}$ is locally indistinguishable from the other eigenstates of $X_1$ and $X_2$. Furthermore, $\ket{\phi_{++}}$ is locally indistinguishable to $\ket{\psi_{00}}$ due to the strong symmetry of the $e$ anyon:
\begin{align}
    \Tr_{A^c}\ketbra{\phi_{++}}{\phi_{++}} & = \frac{1}{4} \sum_{\substack{a,b,\\a',b'}} \Tr_{A^c} [X_1^a X_2^b \ketbra{\psi_{00}}{\psi_{00}} X_2^{b'} X_1^{a'}] \\
    & = \frac{1}{4} \sum_{a,b} \Tr_{A^c} [X_1^a X_2^b \ketbra{\psi_{00}}{\psi_{00}} X_2^{b} X_1^{a}] \\
    & = \Tr_{A^c} \ketbra{\psi_{00}}{\psi_{00}}
\end{align}
where we have used that $X_1$ and $X_2$ can be deformed to have support outside $A$, and that the off-diagonal terms in the sum are zero due to the strong $Z_1$ or $Z_2$ symmetries (supported outside $A$ as well).

Having argued for locally indistinguishability, we now turn out attention to quantum memory. Famously, logical quantum information encoded in the toric code can be decoded back after the application of sufficiently weak and local noise channels \cite{dennis_topological_2002}. Because of this, we say it forms a topological quantum memory, and the traditional point of view is through error correction. Here, we present another interpretation based on strong anomalous symmetries of even the noisy states, if they are in the same LRE phase. In this interpretation, the logical information can be decoded because there exist strong symmetries coming from symmetry pullback, which form a basis for the space of all logical operators and whose expectation values are the same as the ones for the clean system. Thus, in principle, the two-qubit quantum state can be read out by measuring their expectation values.

Even though the arguments were presented just for the anomaly of the toric code, we claim that it generalizes to any abelian topological order. The algebraic relations between non-contractible loop operators, such the ones in Eq. \eqref{eq:non_contractible_algebra}, will change due to different braiding phases, but the fact that they form a representation for a logical Hilbert space of locally indistinguishable states that preserve information in the same phase of matter remains.

\subsubsection{Classical memory and strong-weak anomaly}
\label{sec:classical_memory}

If one of the one-form symmetries is weak, how does the picture above change? Here, we argue that there is only a topological \textit{classical} memory, instead of a quantum memory. More precisely, only a discrete set of mutually orthogonal states (and their mixtures) are guaranteed to be locally indistinguishable, instead of an entire subspace. Moreover, we show that, because weak symmetries cannot in general be pulled back via SLCs, then having a classical memory is also not a property of the whole long-range entanglement phase of matter. At the same time, we briefly discuss how the phase invariance property of the memory can be recovered under a more refined notion of phase equivalence.

For concreteness, let us focus on the ``loop soup'' state $\rho_{\text{loop}} \propto \sum_{\text{loops }\ell} \ketbra{\ell}{\ell}$, where the sum is over contractible loops only (See Eq. \ref{eq:loop_soup_state}). The magnetic one-form symmetry operators $W_m(\gamma) = \prod_{i \in \gamma} Z_i$ over contractible and non-contractible loops $\gamma$ in the dual lattice form the strong symmetry group of $\rho_{\text{loop}}$, whereas $W_e(\tau)$ is only a weak symmetry (for contractible loops $\tau$).

Similarly to the previous section, we can define states with different strong symmetry charges of $Z_1 = W_m(\ell_v)$ and $Z_2 = W_m(\ell_h)$ by acting with $X_1 = W_e(\ell_h)$ and $X_2=W_e(\ell_v)$. For example, $\rho_{\text{loop}}^{01} \defeq X_2 \rho_{\text{loop}} X_2$ has $Z_2 \rho_{\text{loop}}^{01} = - \rho_{\text{loop}}^{01}$. The proof that $\rho_{\text{loop}}^{(10)}$ and $\rho_{\text{loop}}^{(00)} = \rho_{\text{loop}}$ are locally indistinguishable is the same as in Eqs. \eqref{eq:CM_LI_proof_1} and \eqref{eq:CM_LI_proof_2}, since the action of $X_2$ by conjugation is deformable due to the weak symmetry of contractible $W_e$ loop operators.

The difference to strong-strong anomaly is that the states $\rho_{\text{loop}}^{(ab)} = X_1^a X_2^b \rho_{\text{loop}} X_2^b X_1^b$, $a,b \in \{0,1\}$, and their convex combinations are the only states locally indistinguishable to $\rho_{\text{loop}}$. For example, if one tries to construct an eigenstate of $X_1$ and $X_2$ by projecting $\rho_{\text{loop}}$ onto the symmetric subspace $X_1 = X_2 = +1$:
\begin{align}
    \rho_{\text{loop}}^{(++)} \propto (\one + X_1) (\one + X_2) \rho_{\text{loop}} (\one + X_2) (\one + X_1),
\end{align}
one does not arrive at a locally indistinguishable state, since the non-contractible curves of the $X_1$ and $X_2$ projectors above cannot be deformed.

Also unlike the strong-strong anomaly case, the classical memory above is not invariant throughout the long-range entanglement phase. Indeed, the loop soup state is in the trivial phase, and thus two-way connected to, e.g., the product state $\ket{00\cdots 0}$, which has no other state locally indistinguishable to it. 

This can be explained as the failure of the weak $W_e$ symmetry to be pulled back from $\rho_{\text{loop}}$ to $\ket{00\cdots0}$. In a future work \cite{sang_mixedstate_2025},
it will be shown that the weak symmetries survive under restricted class of channels in a generalized sense. Namely, they call also be ``pulled back'' under Lindbladian evolutions that maintain a finite Markov length. Crucially, the braiding between strong and weak symmetries will also be shown to be invariant under this generalized symmetry pullback. One of the consequences is that the phase diagram of the toric code under Pauli-$X$ and $Z$ dephasing (See Fig. \ref{fig:phase_diagram}) will have a finer characterization, as the phase of both $\rho_{0,1/2} = \rho_{\text{loop}}$ and $\rho_{1/2,0}$ will separate from the totally trivial phase of $\rho_{1/2,1/2}=\one$. 

\section{Generalization to higher-form symmetries in higher dimensions}
\label{sec:higher-form_higher-dim}

In Sec. \ref{sec:long-range_bip_ent}, we saw how the anomaly of one-form symmetries in (2+1)-D systems constrains the strongly symmetric states to be long-range bipartite entangled. In a previous work \cite{lessa_mixedstate_2024}, a similar result was found: the anomaly of ordinary $0$-form symmetries in $(d+1)$-D systems imply long-range $(d+2)$-partite entanglement. The goal of this section is to unify and generalize these results. For concreteness, we consider a class of anomaly in bosonic systems, with $\Z_{n_1}^{(p_1)}\times\Z_{n_2}^{(p_2)}\times \cdots \times \Z_{n_M}^{(p_M)}$ symmetry, where $p_i\geq0$ labels the form of the $i$'th symmetry. The 't Hooft anomaly can be characterized as a response term in the partition function $Z=e^{iS[\{a_i\}]}$, where $\{a_i\}$ represents the background gauge field of the symmetry that lives in a $(d+1)+1$ dimensional bulk -- the anomalous theory can be viewed as the boundary of this symmetry-protected topological (SPT) bulk state. In particular, we consider the following type of bulk term:
    \begin{equation}
    \label{eq:cup}
        S = \Theta\int_{\mathcal{M}_{d+2}} a_1^{(p_1)} \cup a_2^{(p_2)} \cup \cdots \cup a_k^{(p_k)},
    \end{equation}
where $a_i^{(p_i)}\in H^{(p_i+1)}(\mathcal{M}_{d+2},\Z_{n_i})$ is a $(p_i+1)$-cocycle corresponding to a $p_i$-form symmetry, and $k = d + 2 - \sum_i p_i \geq 2$. The cup product $\cup$ is a natural product operation for discrete cocycles, analagous to the role of wedge product $\wedge$ for differential forms. $\Theta$ takes a discrete value compatible with the quantization of all the gauge fields involved. The term Eq.~\eqref{eq:cup} describes a large class (but not all) of bosonic anomaly. We shall call such anomaly ``of order $k$'' if $k\geq2$ gauge fields are involved in the response term. For zero-form symmetries, $k=d+2$. For ordinary topological orders whose excitations include gauge charge and gauge flux, $k=2$. As an example, the $2d$ toric code has a one-form $\Z_2^e$ symmetry and a one-form $\Z_2^m$ symmetry. So the response term is simply $S=\pi\int_{\mathcal{M}_4}a^{(1)}_e\cup a^{(1)}_m$.
Our main statement in this Section is the following conjecture:
\begin{conjecture}\label{conj:higher-dim_higher-form}
    Consider a $(d+1)$-D phase of matter with an anomaly of order $k$ described by Eq.~\eqref{eq:cup}. Then, any strongly symmetric mixed state in this phase has long-range $k$-partite entanglement.
\end{conjecture}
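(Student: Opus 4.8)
We outline a possible strategy for proving Conjecture~\ref{conj:higher-dim_higher-form}, following the two-part logic of Section~\ref{sec:long-range_bip_ent}: show that the order-$k$ anomaly is invariant under symmetry pullback, hence a property of the whole long-range entanglement phase; show that no fully $k$-separable state can carry it; and combine the two exactly as in Theorem~\ref{thm:LRBE}.

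The essential preliminary --- and, I expect, the main obstacle --- is to convert the bulk response term Eq.~\eqref{eq:cup} into a microscopic operator statement generalizing the braiding and hopping relations of Section~\ref{sec:anomaly_def}. For each $i$ one has a $p_i$-form symmetry operator $W_i(\Sigma_i)$ supported on a closed $(d-p_i)$-dimensional submanifold $\Sigma_i$ of space, and the anomaly should assert: there exists a configuration $(\Sigma_1,\dots,\Sigma_k)$ whose Poincar\'e-dual cocycles pair nontrivially under the cup product of Eq.~\eqref{eq:cup}, concentrated in a bounded region $N$ sitting at the mutual overlap of $k$ spatial regions, such that a local reordering of the gates of the $k$ operators inside $N$ multiplies the product $W_1(\Sigma_1)\cdots W_k(\Sigma_k)$ by a phase $\omega=e^{2\pi i\Theta/\ell}\neq1$ while leaving it otherwise unchanged (using deformability of higher-form symmetry operators on a symmetric state). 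For $k=2$ this is exactly the braiding identity $W_b(\tau)\infty W_a(\gamma)=S_{ab}W_b(\tau)W_a(\gamma)$; for the all-$0$-form case $k=d+2$ it is the truncated-symmetry (``anomaly-inflow'') algebra of \cite{lessa_mixedstate_2024}; for intermediate $k$ it should be a genuinely order-$k$ relation, e.g. a nontrivial $k$-fold commutator or central extension among the $W_i(\Sigma_i)$. Establishing this relation at the finite-depth-circuit level for general $(p_i,d,k)$ --- it is the operator-algebraic face of the higher-group/anomaly-inflow structure --- is what keeps the statement a conjecture; the mixed $0$-form/$1$-form anomaly of Section~\ref{sec:set} is a natural first test case.

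Granting this relation, pullback invariance mirrors Lemma~\ref{lemma:anomaly_invariance} verbatim: under an SLC $\mathcal{E}$ each generator is dressed as $\mathcal{E}^*(W_i(\Sigma_i))=U^\dagger(W_i(\Sigma_i)\otimes\one)U$ by gates of the dilating finite-depth local unitary $U$ that cancel pairwise away from $\Sigma_i$, so the reordering inside $N$ --- and hence $\omega$ --- is untouched, and $\omega$ is an invariant of the whole phase. The incompatibility with separability generalizes Lemma~\ref{lemma:BE}. Take a $k$-partition $R_1,\dots,R_k$ of space realizing the overlap pattern above, and suppose for contradiction that $\rho=\sum_\alpha p_\alpha\,\rho_\alpha^{(1)}\otimes\cdots\otimes\rho_\alpha^{(k)}$ is fully $k$-separable and strongly symmetric under the anomalous symmetries. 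By strong-symmetry inheritance (Lemma~\ref{lemma:decomposition_inheritance}), every pure product state $\ket{\phi_\alpha^{(1)}}\cdots\ket{\phi_\alpha^{(k)}}$ in a decomposition is a simultaneous eigenvector of every $W_i(\Sigma_i)$, hence of every reordered product $W_{\pi(1)}(\Sigma_{\pi(1)})\cdots W_{\pi(k)}(\Sigma_{\pi(k)})$ with one and the same eigenvalue $\mu=\prod_i\mu_i$, independent of the permutation $\pi$. Now contract the factors on $k-1$ of the regions against product vectors of nonzero overlap: since the $W_i$ are finite-depth circuits and the state factorizes, this yields for each $i$ a restricted operator $K_i$ on the surviving region that reproduces $W_i(\Sigma_i)$ in the interior of $N$, so the $K_i$ inherit the local order-$k$ reordering relation, and reordering $K_1\cdots K_k$ inside $N$ multiplies it by $\omega$. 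But every reordered product of the $K_i$ is a restriction of the corresponding reordered product of the $W_i$, all of which act on the symmetric product state with the same eigenvalue $\mu$; hence $\mu=\omega\mu$, forcing $\omega=1$, a contradiction. It is genuine $k$-separability --- rather than a coarser partition --- that permits these $k-1$ independent projections while leaving enough of the junction $N$ intact to detect $\omega$; merging any two of the regions would collapse part of the corner structure and destroy the relation, which is why the entanglement one detects is $k$-partite.

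Combining the two facts as in Theorem~\ref{thm:LRBE}: if a strongly symmetric $\rho$ in the given phase were prepared from a $k$-separable $\sigma$ by an SLC $\mathcal{E}=\sum_i p_i\mathcal{E}_i$, then $\sigma\otimes\ketbra{\mathbf 0}{\mathbf 0}_a$ would inherit the order-$k$ anomaly of $\rho$ by pullback, contradicting the separability obstruction; hence $\rho$ is long-range $k$-partite entangled. A complementary, quantitative route --- worth developing in parallel --- would define a $k$-partite analogue of the mcoCMI (Def.~\ref{def:mcoCMI}), a nested conditional mutual information over the $k$ regions in a generalized Levin-Wen configuration, prove a $\tfrac12\log(\cdot)$ lower bound on it from $\omega\neq1$ via the orthogonal/locally-indistinguishable/homentropic family construction of Theorem~\ref{thm:CMI_bound-sw}, and observe that it vanishes on $k$-separable states --- thereby also yielding a multipartite ``topological entanglement of formation'' obeying the same bound.
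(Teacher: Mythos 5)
This statement is a conjecture, and the paper does not prove it either; what the paper offers is a heuristic strategy, so the right comparison is between two incomplete strategies. The paper's route is \emph{dimension reduction}: order the $p_i$ from largest to smallest, restrict the theory to a submanifold of dimension $d+2-p_1$ on which the first higher-form symmetry acts as a $0$-form symmetry, iterate until all symmetries are $0$-form, and then invoke the multipartite-entanglement results for $0$-form anomalies of Ref.~\cite{lessa_mixedstate_2024}. The paper is explicit about where this breaks: one must restrict not just the action but the symmetry operators and the strongly symmetric state to submanifolds while preserving strong symmetry and the anomaly, the conclusion depends delicately on the partition, and some anomalies (e.g.\ the fermionic $\pi\int a^{(1)}\cup a^{(1)}$) trivialize under reduction and need a separate exchange-based argument. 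Your route is different and in some ways more uniform: you propose to encode the order-$k$ anomaly directly as a local gate-reordering relation with phase $\omega\neq 1$ at a $k$-fold junction, then replay Lemma~\ref{lemma:anomaly_invariance}, Lemma~\ref{lemma:BE} and Theorem~\ref{thm:LRBE} verbatim. This would sidestep the trivialization problem (it treats the fermionic case and the braiding case on the same footing) and you correctly identify that establishing the order-$k$ operator relation for general $(p_i,d,k)$ is the main missing ingredient. Interestingly, the paper's worked example in Sec.~\ref{sec:set} is a hybrid of the two strategies: it reduces to a loop $\ell$ and then exploits the local commutation obstruction $X_\circ W_{CZ}(\gamma)X_\circ = Z^\bullet_{\partial\gamma}W_{CZ}(\gamma)$, ultimately deriving a long-range connected correlator $\langle Z_b Z_c\rangle_c=1$ that contradicts tripartite uncorrelatedness.

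Beyond the gap you already flag, there is a second soft spot specific to your separability step. In Lemma~\ref{lemma:BE} the intersection point $p$ is placed \emph{deep inside one region} $B$, so after contracting against a product vector on $A$ the restricted operators $K_a,K_b$ act identically to $W_a,W_b$ in a neighborhood of $p$ and the reordering phase survives. For $k\geq 3$ the junction $N$ necessarily straddles all $k$ regions (it sits at their mutual corner), so after projecting $k-1$ factors against product vectors the gates being reordered act across the very boundaries you have projected out, and it is not clear that the restricted $K_i$ still satisfy the order-$k$ relation with the same $\omega$. The paper's example avoids this by converting the anomaly into a two-point correlation function between points $b,c$ lying in \emph{different} regions, rather than a phase detected at a single multi-region junction; your sketch would need an analogous mechanism (or a proof that enough of $N$ survives the projections) before the contradiction $\mu=\omega\mu$ goes through.
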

The above conjecture naturally unifies the main results in this work ($k=2$) and Ref.~\cite{lessa_mixedstate_2024} ($k=d+2$). For many (but not all) anomalies of the form Eq.~\eqref{eq:cup}, the argument for this conjecture proceeds by dimension reduction as follows. First, arrange $p_i$ from largest to smallest, $p_1 \geq p_2 \geq \cdots \geq p_k$, and consider a $k$-partition $\mathcal{P}$. Then, restrict the action $S$ to a submanifold $\mathcal{N}$ of dimension $d + 2 - p_1$ intersecting all regions of the partition $\mathcal{P}$. On $\mathcal{N}$, the first higher-form symmetry can be made to act globally, i.e. as a 0-form symmetry, and the action reduces to
\begin{equation}
    S_{\mathcal{N}} = \Theta \int_{\mathcal{N}} a_1^{(0)} \cup a_2^{(p_2)} \cup \cdots \cup a_k^{(p_k)},
\end{equation}
where we have used that the other $p$-form symmetries on $\mathcal{M}_{d+2}$ (i.e., $a_2^{(p_2)}, \ldots, a_k^{(p_k)}$) are still $p$-form for generic intersections between the $(d-p)$-dimensional submanifolds where they act and $\mathcal{N}$. 

Continue this procedure until every symmetry is reduced to zero-form, in which case the spacetime effective action of the gauged SPT bulk is reduced to a manifold of dimension $d_{\text{eff}} + 2$, where $d_{\text{eff}} = d - \sum_i p_i$, that is still $k$-partioned. Now, we can use the result of multipartite entanglement of zero-form anomaly treated in \cite{lessa_mixedstate_2024}. There, it was proven for $d_{\text{eff}} \leq 1$ and argued for $d_{\text{eff}} > 1$ that any strongly symmetric state is $k$-partite entangled for certain $k$-partitions, with $k = d_{\text{eff}} + 2 = d + 2 - \sum_i p_i$. 

One of the difficulties of making this argument more precise is to be able to reduce not only the effective bulk action, but also the symmetry operators and their symmetric states to lower-dimensional submanifolds, while maintaining the strong symmetry and its anomaly. In \cite{lessa_mixedstate_2024}, a similar dimension reduction procedure was achieved in (1+1)-D by utilizing technical results regarding the restriction of LUs. In higher dimensions, the same reduction was accomplished at the cost of restricting the class of symmetry actions. In both cases, there are also subtle dependencies on the partitions, which further complicates the general picture above. Another complication comes from the fact that some nontrivial anomalies become trivial upon dimensional reduction -- for example, the fermionic $1$-form anomaly $\pi\int_{\mathcal{M}_4}a^{(1)}\cup a^{(1)}$, which is nontrivial, becomes $\pi\int_{\mathcal{M}_2}a^{(0)}\cup a^{(0)}$, which is trivial, upon dimension reduction. This means that we need different arguments for such anomaly -- in fact, this is exactly why we need a separate argument (based on exchange instead of braiding) for the fermionic $1$-form symmetry in Sec.~\ref{sec:long-range_bip_ent}. Nevertheless, it is plausible that Conjecture \ref{conj:higher-dim_higher-form} holds for many cases of interest.

Apart from the scenario of $p_i = 1$ in (2+1)-D, treated in the present work in detail, other anomaly patterns can be illustrated via explicit models. For example, conventional SSB in $(d+1)$-D can be viewed as a mutual anomaly between a 0-form symmetry $g\in G \mapsto U_g$ (e.g. $\prod_i X_i$ for $G=\Z_2$) and a $d$-form symmetry $\mathcal{O}_i \mathcal{O}_j$ (e.g. $Z_i Z_j$)\footnote{A $d$-form symmetry can be viewed as a map from $S^0 = \{-1, +1\}$ to a pair of (unitary) operators $(\mathcal{O}_i, \mathcal{O}_j)$ at sites $i$ and $j$.}, in the case where the two-point correlation function is saturated at $\langle \mathcal{O}_i \mathcal{O}_j \rangle = 1$. In this case, the mixed anomaly is the fact that a single $\mathcal{O}_i$ is a charged operator under $U$, and the long-range bipartite entanglement (as $k = 2$) comes from the long-range order of $\langle \mathcal{O}_i \mathcal{O}_j\rangle - \langle \mathcal{O}_i\rangle \langle \mathcal{O}_j \rangle = 1$. In Sec. \ref{sec:set} below, we describe another example in (2+1)-D of a mixed anomaly between two $\Z_2$ 0-form symmetries and one $\Z_2$ 1-form symmetry. We explicitly show via a dimension reduction argument that its strongly symmetric states are long-range tripartite entangled, thus providing further evidence for the conjecture above.

\subsection{\texorpdfstring{$\mathbb{Z}_2^{(0)} \times \mathbb{Z}_2^{(0)} \times \mathbb{Z}_2^{(1)}$ anomaly in $(2+1)$-D with long-range tripartite entanglement.}{Z2 0-form times Z2 0-form times Z2 1-form anomaly in (2+1)-D with long-range tripartite entanglement.}}\label{sec:set}

Here, we present a mixed anomalous phase between 0-form and 1-form symmetries in $(2+1)$-D that satisfies Conjecture \ref{conj:higher-dim_higher-form} explicitly. It has a mutual anomaly between two $\Z_2$ 0-form symmetries and one $\Z_2$ 1-form symmetry, and can be represented in a bipartite lattice $\Lambda = \Lambda_\circ \sqcup \Lambda_\bullet$ as follows: one $0$-form $\Z_2$ symmetry acts as the spin flip operator $X_{\circ} \defeq \prod_{i \in \Lambda_\circ} X_i$ on one sublattice, while the other $\Z_2$ acts on the other sublattice $X_{\bullet} \defeq \prod_{i \in \Lambda_\bullet} X_i$, and the 1-form $\Z_2$ string operators are generated by control-Z gates on neighboring sites, as $W_{CZ}(\gamma) \defeq \prod_{\langle i j \rangle \in \gamma} CZ_{ij}$. Thus, the effective response action of the (3+1)-D SPT bulk is $S = \pi \int_{\mathcal{M}_4} a_\circ^{(0)} \cup a_{\bullet}^{(0)} \cup a_{CZ}^{(1)}$.

\subsubsection{Proof of long-range tripartite entanglement}

\begin{figure}[t]
    \centering
    \def\svgwidth{0.9\linewidth}
    {\large %% Creator: Inkscape 1.4 (1:1.4+202410161351+e7c3feb100), www.inkscape.org
%% PDF/EPS/PS + LaTeX output extension by Johan Engelen, 2010
%% Accompanies image file '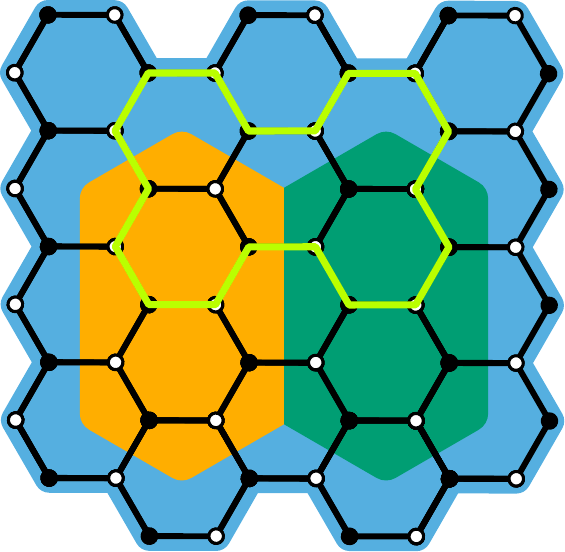' (pdf, eps, ps)
%%
%% To include the image in your LaTeX document, write
%%   \input{<filename>.pdf_tex}
%%  instead of
%%   \includegraphics{<filename>.pdf}
%% To scale the image, write
%%   \def\svgwidth{<desired width>}
%%   \input{<filename>.pdf_tex}
%%  instead of
%%   \includegraphics[width=<desired width>]{<filename>.pdf}
%%
%% Images with a different path to the parent latex file can
%% be accessed with the `import' package (which may need to be
%% installed) using
%%   \usepackage{import}
%% in the preamble, and then including the image with
%%   \import{<path to file>}{<filename>.pdf_tex}
%% Alternatively, one can specify
%%   \graphicspath{{<path to file>/}}
%% 
%% For more information, please see info/svg-inkscape on CTAN:
%%   http://tug.ctan.org/tex-archive/info/svg-inkscape
%%
\begingroup%
  \makeatletter%
  \providecommand\color[2][]{%
    \errmessage{(Inkscape) Color is used for the text in Inkscape, but the package 'color.sty' is not loaded}%
    \renewcommand\color[2][]{}%
  }%
  \providecommand\transparent[1]{%
    \errmessage{(Inkscape) Transparency is used (non-zero) for the text in Inkscape, but the package 'transparent.sty' is not loaded}%
    \renewcommand\transparent[1]{}%
  }%
  \providecommand\rotatebox[2]{#2}%
  \newcommand*\fsize{\dimexpr\f@size pt\relax}%
  \newcommand*\lineheight[1]{\fontsize{\fsize}{#1\fsize}\selectfont}%
  \ifx\svgwidth\undefined%
    \setlength{\unitlength}{270.84216261bp}%
    \ifx\svgscale\undefined%
      \relax%
    \else%
      \setlength{\unitlength}{\unitlength * \real{\svgscale}}%
    \fi%
  \else%
    \setlength{\unitlength}{\svgwidth}%
  \fi%
  \global\let\svgwidth\undefined%
  \global\let\svgscale\undefined%
  \makeatother%
  \begin{picture}(1,0.97689896)%
    \lineheight{1}%
    \setlength\tabcolsep{0pt}%
    \put(0,0){\includegraphics[width=\unitlength,page=1]{SET_trip-ent_proof.pdf}}%
    \put(0.5006207,0.80875421){\color[rgb]{0.72941176,1,0}\makebox(0,0)[t]{\lineheight{1.25}\smash{\begin{tabular}[t]{c}$W_{CZ}(\ell)$\end{tabular}}}}%
    \put(0.32425974,0.31459158){\makebox(0,0)[t]{\lineheight{1.25}\smash{\begin{tabular}[t]{c}\textbf{B}\end{tabular}}}}%
    \put(0.67787779,0.31489293){\makebox(0,0)[t]{\lineheight{1.25}\smash{\begin{tabular}[t]{c}\textbf{C}\end{tabular}}}}%
    \put(0.14425365,0.82756166){\makebox(0,0)[t]{\lineheight{1.25}\smash{\begin{tabular}[t]{c}\textbf{A}\end{tabular}}}}%
    \put(0.43972573,0.98964862){\makebox(0,0)[t]{\lineheight{1.25}\smash{\begin{tabular}[t]{c}$X_\bullet$\end{tabular}}}}%
    \put(0.55800777,0.98964862){\makebox(0,0)[t]{\lineheight{1.25}\smash{\begin{tabular}[t]{c}$X_\circ$\end{tabular}}}}%
  \end{picture}%
\endgroup%
}
    \caption{Illustration for the proof of tripartite entanglement for the $\mathbb{Z}_2^{(0)} \times \mathbb{Z}_2^{(0)} \times \mathbb{Z}_2^{(1)}$ described in Sec. \ref{sec:set}. There, the space is partitioned into regions $ABC$, and the $\Z_2^{(1)}$ one-form symmetry $W_{CZ}(\ell)$ intersects all of them. The mutual anomaly with the two $\Z_2^{(0)}$ symmetries, $X_{\circ}$ and $X_\bullet$, which act in the white ($\circ$) and black ($\bullet$) sublattices, respectively, can be seen from their commutation relations when restricting $W_{CZ}(\ell)$ to $BC$. }
    \label{fig:SET_trip-ent_proof}
\end{figure}

Using that $CZ_{ij} X_i CZ_{ij}= X_iZ_j$, we can check that $X_{\circ}$ and $X_{\bullet}$ commute with $W_{CZ}(\gamma)$ for any loop $\gamma$. However, when $\gamma$ is a open curve, $W_{CZ}(\gamma)$ fails to commute with both $X_{\circ}$ and $X_\bullet$ at the boundary of $\gamma$:
\begin{align}
    X_{\circ} W_{CZ}(\gamma) X_\circ & = Z^{\bullet}_{\partial \gamma} W_{CZ}(\gamma) \label{eq:anomalous_algebra_circ} \\
    X_{\bullet} W_{CZ}(\gamma) X_\bullet & = Z^{\circ}_{\partial \gamma} W_{CZ}(\gamma).
\end{align}
Importantly, the boundary operator $Z^{\circ}_{\partial \gamma}$ ($Z^{\bullet}_{\partial \gamma}$) is a product of Pauli-$Z$ gates near each endpoint of $\gamma$ that anticommute with the $\Z_2$ 0-form symmetry $X_\circ$ (with $X_\bullet$) of the opposite sublattice. This local obstruction to commutation at the boundary of restricted symmetries is the manifestation of the mutual anomaly between the three symmetries.

As predicted by Conjecture \ref{conj:higher-dim_higher-form}, any strongly symmetric state $\rho$ under this anomalous symmetry has to be $k=3$-partite entangled. We will prove this by following the dimension reduction argument presented above. First, by contradiction, we assume a strongly symmetric tripartite separable state $\rho$ under a tripartition $A|B|C$ depicted in Fig. \ref{fig:SET_trip-ent_proof}. By strong symmetry inheritance, we can assume $\rho$ is a pure state equal to $\ket{A}\ket{B}\ket{C}$. Then, we restrict it to a loop $\ell$ intersecting all three regions, resulting in an effective one-dimensional state $\rho_{\ell} = \rho_\ell^{A} \otimes \rho_\ell^B \otimes \rho_\ell^C$ that has no correlations between $A$, $B$ and $C$. Since $W_{CZ}(\ell)$ acts only on $\ell$, it remains a strong (0-form) symmetry of $\rho_\ell$, but $X_\circ$ and $X_\bullet$ turn into weak symmetries that act, say, on even and odd sites. Importantly though, the mixed anomaly between the strong and weak symmetries implies that $\rho_\ell$ cannot be tripartite uncorrelated. Otherwise, the $X_\text{even} \defeq X_\circ |_\ell$ symmetry restricted to, say, $BC$ would still be a weak symmetry of $\rho_\ell^B \otimes \rho_\ell^C$. This, together with the mixed anomaly with the $W_{CZ}(\ell)$ strong symmetry, would imply 
\begin{align}
    1 & = \Tr[\rho_\ell X_{\circ}|_{BC} W_{CZ}(\ell) X_{\circ}|_{BC} W_{CZ}(\ell)] \\
    & = \Tr[\rho_\ell Z_b Z_c] \\
    & = \langle Z_b Z_c \rangle_c,
\end{align}
where in the last line we used that $\langle Z_i \rangle = 0$ due to the weak $X_{\text{even}}$ and $X_{\text{odd}}$ symmetries, and $b$ and $c$ are \emph{odd} sites near the boundaries of $B$ and $C$ with $A$, respectively. The nontrivial connected correlation function $\langle Z_bZ_c \rangle_c = 1$ is in contradiction with the absence of correlations between regions $A$, $B$ and $C$ \footnote{It may happen that one particular choice of which on-site symmetry ($X_{\text{even}}$ or $X_{\text{odd}}$) to restrict and to which region leaves endpoints $b$ and $c$ inside a single region. If that happens, a different choice will separate them, and thus our conclusion remains.}, QED.

Finally, to argue long-range entanglement, we use that the anomaly remains invariant under symmetry pullback. For one-form anomalies in 2d, this means braiding and self-statistics, and here, it is the fact that $X_\circ W_{CZ}(\gamma)X_{\circ} W_{CZ}(\gamma)$ is the product of operators at $\partial\gamma$ that are individually odd under $X_\bullet$, and vice-versa by exchanging the sublattices. Then, following a similar argument as above, we conclude that $\rho$ exhibits long-range tripartite entanglement\footnote{For the tripartite entanglement argument, we used the fact that $X_\circ$ and $X_\bullet$ turn into weak symmetries of the reduced density matrix on $\ell$. This can fail for the pulled-back symmetries, but we can still reach the same conclusion but working with the 2d regions $A,B$ and $C$ directly, without reducing to the $1d$ loop $\ell$.}.

\subsubsection{Bipartite separability and other symmetric states}

Even though we showed that the mixed anomaly between $X_\circ$, $X_\bullet$ and $W_{CZ}(\ell)$ implies \textit{tripartite} entanglement, one may ask whether \textit{bipartite} entanglement could also be guaranteed by the anomly. If so, the prediction from Conjecture \ref{conj:higher-dim_higher-form} for these particular symmetries would not be tight. In this section, we answer the above negatively by constructing symmetric pure states separable under a restricted set of bipartitions. Afterwards, we present other pertinent examples.

Specifically, if $A$ is a connected region whose edges to its complement $B = A^c$ are only through boundary sites of the same sublattice (so either $\partial A \subset \Lambda_\circ$ or $\partial A \subset \Lambda_\bullet$), then the following bipartite separable state is symmetric:
\begin{align}
\label{eq:SET_2-sep_state}
    \begin{split}
    \ket{\psi_{\text{2-sep}}} \propto 
        & (\one + \lambda_\circ X_{\circ,A})(\one + \lambda_\bullet X_{\bullet,A}) \ket{\alpha; \alpha_{\partial A} = \mu_A}_A \otimes \\
        & (\one + \lambda_\circ X_{\circ,B})(\one + \lambda_\bullet X_{\bullet,B}) \ket{\beta; \beta_{\partial B} = \mu_B}_B,
    \end{split}
\end{align}
where $\lambda_{\circ,\bullet}^2 = 1$, and $\ket{\alpha; \alpha_{\partial A} = \mu_A}_A$ is any $Z$-basis product state in $A$ having all boundary qubits set to $\ket{\mu_A}$, $\mu_A \in \{0,1\}$ and $W_{CZ}(\ell) = +1$ for all loops $\ell$ in $A$ (similarly for $\ket{\beta; \beta_{\partial B} = \mu_B}_B$). One example of an allowed region $A$ (or $B$) is a triangular region with sides inclined at $0$, $120$ and $240$ degrees with respect to the vertical direction (See Fig. \ref{fig:SET_bip_sep}). 

\begin{figure}[t]
    \centering
    \def\svgwidth{0.9\linewidth}
    {\large %% Creator: Inkscape 1.4 (1:1.4+202410161351+e7c3feb100), www.inkscape.org
%% PDF/EPS/PS + LaTeX output extension by Johan Engelen, 2010
%% Accompanies image file '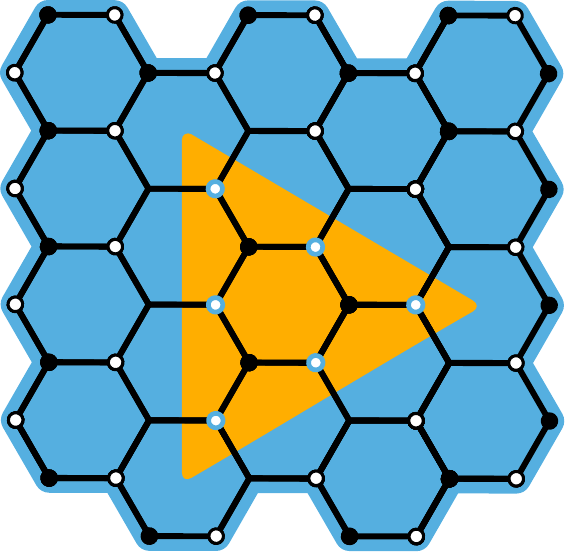' (pdf, eps, ps)
%%
%% To include the image in your LaTeX document, write
%%   \input{<filename>.pdf_tex}
%%  instead of
%%   \includegraphics{<filename>.pdf}
%% To scale the image, write
%%   \def\svgwidth{<desired width>}
%%   \input{<filename>.pdf_tex}
%%  instead of
%%   \includegraphics[width=<desired width>]{<filename>.pdf}
%%
%% Images with a different path to the parent latex file can
%% be accessed with the `import' package (which may need to be
%% installed) using
%%   \usepackage{import}
%% in the preamble, and then including the image with
%%   \import{<path to file>}{<filename>.pdf_tex}
%% Alternatively, one can specify
%%   \graphicspath{{<path to file>/}}
%% 
%% For more information, please see info/svg-inkscape on CTAN:
%%   http://tug.ctan.org/tex-archive/info/svg-inkscape
%%
\begingroup%
  \makeatletter%
  \providecommand\color[2][]{%
    \errmessage{(Inkscape) Color is used for the text in Inkscape, but the package 'color.sty' is not loaded}%
    \renewcommand\color[2][]{}%
  }%
  \providecommand\transparent[1]{%
    \errmessage{(Inkscape) Transparency is used (non-zero) for the text in Inkscape, but the package 'transparent.sty' is not loaded}%
    \renewcommand\transparent[1]{}%
  }%
  \providecommand\rotatebox[2]{#2}%
  \newcommand*\fsize{\dimexpr\f@size pt\relax}%
  \newcommand*\lineheight[1]{\fontsize{\fsize}{#1\fsize}\selectfont}%
  \ifx\svgwidth\undefined%
    \setlength{\unitlength}{270.84216261bp}%
    \ifx\svgscale\undefined%
      \relax%
    \else%
      \setlength{\unitlength}{\unitlength * \real{\svgscale}}%
    \fi%
  \else%
    \setlength{\unitlength}{\svgwidth}%
  \fi%
  \global\let\svgwidth\undefined%
  \global\let\svgscale\undefined%
  \makeatother%
  \begin{picture}(1,0.97689896)%
    \lineheight{1}%
    \setlength\tabcolsep{0pt}%
    \put(0,0){\includegraphics[width=\unitlength,page=1]{SET_bip_sep.pdf}}%
    \put(0.50379048,0.41709399){\makebox(0,0)[t]{\lineheight{1.25}\smash{\begin{tabular}[t]{c}\textbf{B}\end{tabular}}}}%
    \put(0.14425365,0.82756166){\makebox(0,0)[t]{\lineheight{1.25}\smash{\begin{tabular}[t]{c}\textbf{A}\end{tabular}}}}%
    \put(0,0){\includegraphics[width=\unitlength,page=2]{SET_bip_sep.pdf}}%
  \end{picture}%
\endgroup%
}
    \caption{Configuration of a triangular bipartition $A|B$ for which there are symmetric and bipartite separable states $\ket{\psi_{\text{2-sep}}}$ (See Eq.~\ref{eq:SET_2-sep_state}). The boundary qubits of each region, constrained to the $\ket{0}$ state, are highlighted by an outline with the color of the opposite region.}
    \label{fig:SET_bip_sep}
\end{figure}

The state $\ket{\psi_{\text{2-sep}}}$ is symmetric under $X_\circ$ and $X_\bullet$ by construction, and under $W_{CZ}(\ell)$ because it passes through the restricted $\Z_2^{(0)}$ operators (say, $X_{\circ, A}$) as follows:
\begin{equation}
    W_{CZ}(\ell) X_{\circ, A} = Z^{\bullet}_{\ell \cap \partial B} X_{\circ, A} W_{CZ}(\ell),
\end{equation}
where we assumed $\partial A \subset \Lambda_\circ$ (the other case is analogous).
Since $Z^{\bullet}_{\ell \cap \partial B}$ has an even number of $Z$ operators in the same sublattice $\Lambda_\bullet$, then it commutes with both $X_{\bullet,B}$ and $X_{\circ, B}$ and stabilizes $\ket{\beta; \beta_{\partial B} = \mu_B}$.

The existence bipartite-separable but tripartite-entangled anomalous states occurs for the 0-form anomaly in (1+1)-D as well \cite{lessa_mixedstate_2024}. However, a difference between the $\mathbb{Z}_2^{(0)} \times \mathbb{Z}_2^{(0)} \times \mathbb{Z}_2^{(1)}$ anomaly in (2+1)-D and, for example, the $\mathbb{Z}_2^{(0)}$ CZX anomaly in (1+1)-D is that the latter exhibits an orthonormal basis of bipartite separable states, while the former does not, due to the extensive number of constraints coming from the one-form symmetry condition $\forall \ell, W_{CZ}(\ell) = +1$. Indeed, if there were such a basis, the maximally mixed symmetric state\footnote{The subscript ``$\infty$'' in the maximally mixed symmetric state $\rho_{\infty}$ comes from viewing it as the infinite-temperature state in the canonical ensemble.}
\begin{equation}
\label{eq:MMS_Z2_cubed}
\rho_{\infty} \propto (\one + X_\circ) (\one + X_\bullet) \prod_{\hexagon} (\one + W_{CZ}(\hexagon)),
\end{equation}
being the mixture of all symmetric states, would be bipartite separable, at least under the triangular bipartitions. However, we show in Appendix \ref{appendix:ent_neg_MMS} that $\rho_\infty$ is entangled by calculating its negativity.

The final example of a symmetric state under the $\mathbb{Z}_2^{(0)} \times \mathbb{Z}_2^{(0)} \times \mathbb{Z}_2^{(1)}$ symmetry comes from a recent work \cite{zhang_longrange_2024} that studied the properties of the pure state
\begin{equation}
    \ket{\psi} \propto \prod_{\hexagon}(\one + W_{CZ}(\hexagon)) \ket{+}.
\end{equation}
It was named ``SPT soup'' as it can also be written as the superposition of 1d cluster states (with $\mathbb{Z}_2 \times \mathbb{Z}_2$ SPT orders) over all loops in a sea of $\ket{+}$ states. From the discussion above, the anomalous symmetries of $\ket{\psi}$ imply it is long-range tripartite entangled. In fact, an even stronger statement can be made for $\ket{\psi}$: \cite{zhang_longrange_2024} showed that it has $Z_iZ_j$ two-point correlation function decaying algebraic as $\langle \psi |Z_i Z_j| \psi\rangle \sim 1/|i-j|$, if $i$ and $j$ are in the same sublattice.

\section{Outlook}

We end with some discussions on open questions and future directions:

\begin{enumerate}
    \item In Sec. \ref{sec:TO_mixed-state_phases}, we propose a new definition of (long-range entanglement) phase of matter based on stochastic local channels. Compared to the more widely used definition using local channels, SLCs can create long-range correlation. That is why, for example, long-range correlated classical states become trivial under the new definition. However, we are not aware of a \textit{topologically ordered} phase that sensitively depends on whether one chooses SLCs or LCs as the set of free operations. More broadly, is there a (physically relevant) nontrivial phase in the LC definition that becomes trivial in the SLC definition? Moreover, can two nontrivial phases under LCs become one phase under SLCs?

    \item Here, we focused on one-form symmetries of topologically ordered systems. However, the formalism of higher-form symmetries is widely applicable to other systems, such as symmetry-protected topological phases and even gauge theories, such as Maxwell's electromagnetism \cite{gaiotto_generalized_2014, mcgreevy_generalized_2023}. Going further into generalized symmetries, non-invertible symmetries such as the Kramers-Wannier duality transformation are of relevance as well \cite{shao_whats_2024}. Such non-invertible symmetries are also relevant for more general (non-Abelian) topological orders. What can we say about the mixed states with such symmetries? 
    
    \item We also focused on entanglement features induced by higher-form symmetries, which possess a rich theory. However, estimating them either numerically or experimentally remains challenging. For example, the topological entanglement of formation (Def. \ref{def:TEF}) is not only nonlinear in the density matrix, but also involves a minimization over decompositions of the mixed state, which are hard tasks even on a computer. On the other hand, the more accessible linear expectation values of local observables suffer from behaving smoothly even when passing through a mixed-state phase transition \cite{fan_diagnostics_2024}. A middle ground is to consider correlation functions of string operators that are only quadratic in the density matrix. This approach was taken in \cite{zhang_strongweak_2024} by studying measures of strong-to-weak SSB, which distinguish different mixed-state phases.

    \item As discussed in Sec. \ref{sec:zx_dephasing}, the TEF of the fermionic imTO phase could not be determined, since $W_f$ is the unique strong symmetry, and it doe not braid non-trivially with itself. However, by restricting to decompositions of the imTO state into pure states described by modular TQFT, we can lower bound this modified TEF by $\log 2$. Is the same valid for the TEF over all decompositions?
    
    \item Related to the above, are there topologically ordered mixed states with a strong one-form symmetry that does not braid with any other symmetry, weak \textit{and} strong? For example, even though the imTO phase of the ZX-dephased TC has a strong fermionic one-form symmetry that braids with the weak $e$ and $m$ anyons, there is no reason to expect the same behavior for the entire LRE phase, since only the strong symmetry is necessarily pulled back. On the other hand, we are not aware of any explicit example for the question above.
\end{enumerate}

\begin{acknowledgments}
We thank Tyler Ellison, Subhayan Sahu, Tarun Grover for illuminating discussions. L.A.L. acknowledges support from the Natural Sciences and Engineering Research Council of Canada (NSERC) under Discovery Grant No. RGPIN-2020-04688 and No. RGPIN-2018-04380.  This work was also supported by an Ontario Early Researcher Award.  T.-C.L. acknowledges the support of the RQS postdoctoral fellowship through the National Science Foundation (QLCI grant OMA-2120757). S.S. was supported by the SITP postdoctoral fellowship at Stanford University. Research at Perimeter Institute is supported in part by the Government of Canada through the Department of Innovation, Science and Industry Canada and by the Province of Ontario through the Ministry of Colleges and Universities.

\end{acknowledgments}

\bibliography{bibliography}

\appendix

\section{Further properties of SLCs}\label{appendix:further_props_SLC}
\subsection{Pure state phases}\label{appendix:pure_state_phases}
In this appendix, we derive the condition for two pure states to be in the same long-range entanglement phase of matter (Def.\ref{def:phase_of_matter}). The derivation follows and generalizes the one presented in the Appendix A.1 of~\cite{sang2024mixed}.

We start by making the following two observations about quantum channels and their actions:
\begin{itemize}
    \item \textbf{Observation 1:} Let $\E=\sum_i p_i \E_i$ be a convex sum of several quantum channels. If $\E[\myketbra{\psi_1}]=\myketbra{\psi_2}$ holds for a pair of pure states $\ket{\psi_1}$ and $\ket{\psi_2}$, then it also holds that $\E_i[\myketbra{\psi_1}]=\myketbra{\psi_2}$ for any $i$.
    \item \textbf{Observation 2:} Let $\E[\cdot]=\Tr[U(\myketbra{0}_a\otimes(\cdot))]$ be a quantum channel in its Stinespring dilated form. If $\E[\myketbra{\psi_1}]=\myketbra{\psi_2}$ holds for a pair of pure states $\ket{\psi_1}$ and $\ket{\psi_2}$, then it holds that $U(\ket{\psi_1}\otimes \ket{0}_a) = \ket{\psi_2}\otimes\ket{\phi}_a$ for some pure state $\ket{\phi}$.
\end{itemize}

The first observation follows from that pure states are extremal points in the convex set of density matrices. The second observation holds because if a bipartite pure state ($U(\ket{\psi_1}\otimes \ket{0}_a)$ in this context) has a pure reduced state on a party, then the bipartite state must be a product state. 

Suppose a pair of pure states $\ket{\psi_1}$ and $\ket{\psi_2}$ are in the same long-range entanglement phase. By definition, this implies the existence of a pair of SLCs $\E_1=\sum_i p_i \E_{1, i}$ and $\E_1=\sum_j q_j \E_{2, j}$ such that:
\begin{equation}
\begin{aligned}
    &\myketbra{\psi_2} = \E_1[\myketbra{\psi_1}]\\
    &\myketbra{\psi_1} = \E_2[\myketbra{\psi_2}]
\end{aligned}
\end{equation}

By applying the first observation above, we conclude that
\begin{equation}
\begin{aligned}
    &\myketbra{\psi_2} = \E_{1, i}[\myketbra{\psi_1}]\\
    &\myketbra{\psi_1} = \E_{2, j}[\myketbra{\psi_2}]\ \ \forall i,j
\end{aligned}
\end{equation}

We then focus on only one particular choice of the pair $(i,j)$, and let $U_1$ and $U_2$ be LUs that dilate $\E_{1, i}$ and $\E_{1, j}$. By using the second observation above, we obtain
\begin{equation}
\begin{aligned}
    &U_1 (\ket{\psi_1}\otimes \ket{\bf 0}_a)= \ket{\psi_2}\otimes \ket{\phi_1}_a\\
    &U_2 (\ket{\psi_2}\otimes \ket{\bf 0}_a)= \ket{\psi_1}\otimes \ket{\phi_2}_a
\end{aligned}
\end{equation}
for some states $\ket{\phi_1}$ and $\ket{\phi_2}$. 

Recalling that $U_1$ and $U_2$ are LUs, we conclude that relations above imply that $\ket{\psi_1}$ ($\ket{\psi_2}$) is in the same pure-state phase as $\ket{\psi_2}\otimes\ket{\phi_2}$ ($\ket{\psi_1}\otimes\ket{\phi_1}$).  Due to transitivity of phase equivalence, we further conclude $\ket{\psi_1}$ is in the same phase as $\ket{\psi_1}\otimes\ket{\phi_1}\otimes\ket{\phi_2}$. 

To proceed, we need to assume that there is no ``catalyst" effect in phase equivalence relation: If $\ket{a}$ and $\ket{b}$ cannot be LU connected to each other, then neither does the pair $\ket{a}\otimes\ket{x}$ and $\ket{b}\otimes\ket{x}$, for any state $\ket{x}$. This appears to hold for any known pure state phases of matter. Given the assumption, $\ket{\phi_1}\otimes\ket{\phi_2}$ must be in the trivial phase of matter. Thus $\ket{\phi_1}$ belongs to an invertible phase, \textit{e.g.} Chern insulator, and $\ket{\phi_2}$ is $\ket{\phi_1}$'s inverse.  

To summarize and conclude, two pure states $\ket{\psi_1}$ and $\ket{\psi_2}$ are in the same entanglement phase following the Def.\ref{def:phase_of_matter} if and only if $\ket{\psi_1}$ and $\ket{\psi_2}\otimes\ket{\phi}$ are in the same pure state phase (via local unitaries), for a state $\ket{\phi}$ in an invertible phase.

\subsection{\texorpdfstring{Deriving the SLC decomposition Eq.~\ref{eq:SLC_decomposition}}{Deriving the SLC decomposition}}
\label{appendix:SLC_decomp_derivation}
Suppose $\E=\sum_{i=1}^n p_i \E_i$ is an SLC acting on a system $Q$, with each $\E_i$ being an LC. By definition, each $\E_i$ admits a decomposition:
\begin{equation}
    \E_i[\sigma] = \Tr_a(U_i(\sigma\otimes\ketbra{\bf 0}{\bf 0}_a)U_i)
\end{equation}
with each $U_i$ being a local unitary circuit acting jointly on the ancillary system $a$ and the original system $Q$. We make the technical assumption that all the $U_i$s have the same circuit structure, e.g. the brick-wall circuit structure. For any $\{U_i\}$, this can always be achieved by reorganizing gates within each $U_i$. We draw an 1D $U_i$ below for the sake of illustration: 
\begin{equation*}
    \includegraphics[width=7cm]{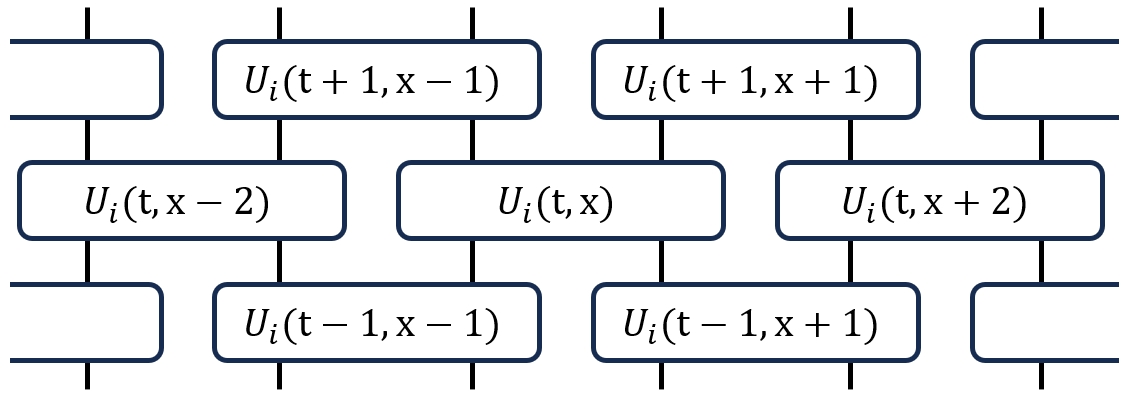}
\end{equation*}
where each $U_i(t,x)$ is a local unitary gate in the circuit $U_i$, with $t$ and $x$ labeling layer and location of the gate. 

Now we are ready to construct the decomposition Eq.\ref{eq:SLC_decomposition}. First, we introduce one $n$-level ancillary qudit to each lattice site $x$ (referred to as $b_x$, and $b=b_1 b_2 ... b_L$), which are initialized in:
\begin{equation}
    \rho_b = \sum_i p_i \ketbra{iii \ldots}{iii \ldots}_b
\end{equation}
Intuitively, $b$ is a `flag' that indicates the circuit which classical seed $i$ is sampled. Next we construct an LU $\tilde U$ that acts on $Q\cup a\cup b$. The $\tilde U$ has the same circuit structure as each $U_i$ (see the illustrating figure above), but with each gate at $(t,x)$ replaced by:
\begin{equation}
    U_i(t,x)\longrightarrow \tilde U (t,x)\equiv\sum_i \ketbra{i}{i}_{b_x}\otimes U_i(t,x)
\end{equation}
It is straightforward to verify that:
\begin{equation}
    \E[\sigma]=\Tr_{ab}[\tilde U(\sigma\otimes\ketbra{0}{0}_a\otimes \rho_b)\tilde U^\dagger]
\end{equation}
which is what we aim to prove (Eq.\ref{eq:SLC_decomposition}).

\subsection{Locality properties}
\label{appendix:locality_SLC}

To discuss the locality properties of SLCs, we employ the framework developed in \cite{piroli_quantum_2020}. There, the authors classify channels based on their locality features.  

\begin{definition}[Causality-preserving (CP) channels]\label{def:CP}
    A channel $\E$ is \textit{causality-preserving} if for any region $A$ and $X_A \in \mathfrak{A}_A$ supported there, $\E^\dagger(X_A) \in \mathfrak{A}_{\bar{A}_r}$, where $\bar{A}_r$ is the expansion of $A$ by a radius $r$ independent of $A$. The smallest of such $r \in \N$ for all $A$ is called the range of $\E$ and denoted by $\range(\E)$.
\end{definition}

\begin{definition}[Locality-preserving (LP) channels]\label{def:LP}
    A channel $\E$ is \textit{locality-preserving} if for all regions $A$ and $B$ such that $\bar{A} \sqcup \bar{B}$ partitions the entire space, where $\bar{A} = A \sqcup a$ and $\bar{B} = B \sqcup b$ are extensions of $A$ and $B$ by a finite radius then
    \begin{equation}
        \Tr_{a,b}[\E(\rho_{\bar{A}} \otimes \rho_{\bar{B}})] = \sigma_A \otimes \sigma_B
    \end{equation}
    where $\sigma_A = \Tr_{a, \bar{B}}\left[\E(\rho_{\bar{A}} \otimes \tau_{\bar{B}}) \right]$ and $\sigma_B = \Tr_{\bar{A}, b}\left[\E(\tau_{\bar{A}} \otimes \rho_{\bar{B}})\right]$ only depend on $\rho_{\bar{A}}$ and $\rho_{\bar{B}}$ respectively, as $\tau_{\bar{B}}$ and $\tau_{\bar{A}}$ are arbitrary states.
\end{definition}
An equivalent characterization of LP maps in the Heisenberg picture was proven in \cite{piroli_quantum_2020}:
\begin{theorem}\label{thm:LP_alternate_def}
    A channel $\E$ is locality-preserving iff for all local regions $A$ and $B$, if the support of $\E^\dagger(X_A)$ and $\E^\dagger(Y_B)$ are disjoint, then $\E^\dagger(X_A Y_B) = \E^\dagger(X_A) \E^\dagger(Y_B)$.
\end{theorem}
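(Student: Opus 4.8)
The plan is to prove Theorem~\ref{thm:LP_alternate_def} by translating the Schr\"odinger-picture factorization of Definition~\ref{def:LP} into the Heisenberg picture through the duality $\Tr[\E(\rho)M]=\Tr[\rho\,\E^\dagger(M)]$, working within the class of causality-preserving channels (the ambient setting of \cite{piroli_quantum_2020}). The first step I would carry out is to upgrade the clause ``$\sigma_A$ depends only on $\rho_{\bar{A}}$'' of Definition~\ref{def:LP} into a support statement: pairing that clause with an observable $M$ supported in $A$ shows that $\Tr[(\rho_{\bar{A}}\otimes\tau_{\bar{B}})\,\E^\dagger(M)]$ is independent of $\tau_{\bar{B}}$ for every $\rho_{\bar{A}}$, and expanding $\E^\dagger(M)$ in a Hilbert--Schmidt operator basis on $\bar{B}$ with the identity singled out forces $\E^\dagger(M)=(\cdot)_{\bar{A}}\otimes\one_{\bar{B}}$. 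Taking $\bar{A}$ to be the minimal admissible thickening of $A$, this says $\E^\dagger$ sends any operator localized in a region $S$ to an operator localized within a fixed radius $r$ of $S$ (with $r$ at most the collar width); in particular locality-preservation entails causality-preservation.

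Next I would rewrite Definition~\ref{def:LP} as an operator identity. Fixing a coarse bipartition $\bar{A}\sqcup\bar{B}$ of the whole lattice with cores $A_0,B_0$ and finite collars $a,b$, I pair the factorization $\Tr_{ab}[\E(\rho_{\bar{A}}\otimes\rho_{\bar{B}})]=\sigma_{A_0}\otimes\sigma_{B_0}$ with an arbitrary product observable $M_{A_0}\otimes N_{B_0}$. Using that $\sigma_{A_0}$ and $\sigma_{B_0}$ are themselves the reduced outputs of $\E$ acting on $\rho_{\bar{A}}$ and on $\rho_{\bar{B}}$ separately, and that by the first step $\E^\dagger(M_{A_0}\otimes\one)$ and $\E^\dagger(\one\otimes N_{B_0})$ are supported on the disjoint regions $\bar{A}$ and $\bar{B}$ (hence commute), the paired identity becomes, after invoking that product density matrices $\rho_{\bar{A}}\otimes\rho_{\bar{B}}$ linearly span all operators on $\Hilb_{\bar{A}}\otimes\Hilb_{\bar{B}}$ in finite local dimension, the operator identity
\begin{equation}
    \E^\dagger(M_{A_0}\otimes N_{B_0}) = \E^\dagger(M_{A_0}\otimes\one)\,\E^\dagger(\one\otimes N_{B_0})
\end{equation}
holding for every coarse bipartition and all $M_{A_0},N_{B_0}$. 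In words: $\E$ is locality-preserving precisely when $\E^\dagger$ is multiplicative on products of operators supported on opposite sides of every such bipartition.

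The two implications then follow by matching this reformulation with the stated hypothesis. For $(\Leftarrow)$: if $\E^\dagger(X_AY_B)=\E^\dagger(X_A)\E^\dagger(Y_B)$ whenever the images $\E^\dagger(X_A),\E^\dagger(Y_B)$ have disjoint supports, then for any coarse bipartition the operators $M_{A_0}\otimes\one$ and $\one\otimes N_{B_0}$ qualify, since causality-preservation puts their images in $\bar{A}$ and $\bar{B}$; the displayed identity holds, and reversing the pairing of the previous paragraph recovers $\Tr_{ab}[\E(\rho_{\bar{A}}\otimes\rho_{\bar{B}})]=\sigma_{A_0}\otimes\sigma_{B_0}$ with the required dependences, so $\E$ is locality-preserving. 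For $(\Rightarrow)$: given $X_A,Y_B$ with $\supp\E^\dagger(X_A)\cap\supp\E^\dagger(Y_B)=\emptyset$, I would choose a coarse bipartition $\bar{A}'\sqcup\bar{B}'$ in which $\bar{A}'$ contains $A$ (inside its core) together with $\supp\E^\dagger(X_A)$, and $\bar{B}'$ contains $B$ together with $\supp\E^\dagger(Y_B)$; the displayed identity then yields $\E^\dagger(X_AY_B)=\E^\dagger(X_A)\E^\dagger(Y_B)$, using $X_AY_B=X_A\otimes Y_B$ since $A$ and $B$ are disjoint.

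The step I expect to be the main obstacle is the geometric bookkeeping in the $(\Rightarrow)$ direction: one must verify that an admissible coarse bipartition -- two thickened regions with finite collars that jointly tile the lattice, with $X_A$ and $Y_B$ honestly supported on the respective cores -- can be chosen so as to separate $A$ from $B$ \emph{and} $\supp\E^\dagger(X_A)$ from $\supp\E^\dagger(Y_B)$ at the same time. This is immediate when $A$ and $B$ are separated by more than $2r$ (all four supports are then automatically disjoint), and the general disjoint-support case is handled by enlarging the cores appropriately; everything else -- the duality pairing, the spanning argument, and the passage from locality- to causality-preservation -- is essentially mechanical.
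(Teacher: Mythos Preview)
The paper does not supply its own proof of Theorem~\ref{thm:LP_alternate_def}: it simply states the result and attributes the proof to \cite{piroli_quantum_2020}. So there is nothing in the manuscript to compare your argument against line by line.

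That said, your proposal is a faithful reconstruction of the argument one finds in \cite{piroli_quantum_2020}. The three pieces you isolate --- (i) extracting causality-preservation from the ``$\sigma_A$ depends only on $\rho_{\bar A}$'' clause via pairing with local observables, (ii) rewriting Definition~\ref{def:LP} as the Heisenberg-picture multiplicativity $\E^\dagger(M_{A_0}\otimes N_{B_0})=\E^\dagger(M_{A_0})\,\E^\dagger(N_{B_0})$ through the spanning of $\rho_{\bar A}\otimes\rho_{\bar B}$, and (iii) matching that identity with the disjoint-support hypothesis --- are exactly the skeleton of the original proof. You are also right that the whole discussion lives inside the class of causality-preserving channels, which is the ambient assumption in \cite{piroli_quantum_2020} and is needed in your $(\Leftarrow)$ direction to place $\E^\dagger(M_{A_0})$ inside $\bar A$. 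The one genuinely nontrivial point is the geometric bookkeeping you flag at the end: in the $(\Rightarrow)$ direction one must exhibit an admissible coarse bipartition whose cores contain $A$ and $B$ and whose thickened halves separate $\supp\E^\dagger(X_A)$ from $\supp\E^\dagger(Y_B)$; this is where the ``finite radius'' quantifier in Definition~\ref{def:LP} does real work, and your sketch of how to handle it (enlarging cores, using the range bound from causality-preservation) is correct in spirit but would need to be spelled out carefully to be a complete proof.
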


Local channels as defined in Section \ref{sec:LRE_vs_LRC} are both CP and LP. Every SLC $\E_{\rm SLC} = \sum_i p_i \E_i$ is still CP because each LC $\E_i$ in the mixture is CP with uniformly bounded range. However, there are SLCs that create long-range classical correlations, and thus are not LP (see Example \ref{example:SLC_not_LP}). Nevertheless, SLCs don't create long-range entanglement, so they preserve the separability of bipartite states, as encoded in the following definition:
\begin{definition}[Separability-preserving (SP) channels]\label{def:SP}
    Under the same notation of Definition \ref{def:LP}, a channel $\E$ is \textit{separability-preserving} if for all states $\rho_{\bar{A}}$ and $\rho_{\bar{B}}$ there exist families of states $\sigma^{(i)}_A$ and $\sigma^{(i)}_B$ such that
    \begin{align}
        \Tr_{a,b}[\E(\rho_{\bar{A}} \rho_{\bar{B}})] & = \sum_i p_i \sigma_A^{(i)} \otimes \sigma_B^{(i)} \text{ and} \\
        \sum_i p_i \sigma_A^{(i)} & = \Tr_{a, \bar{B}}\left[\E(\rho_{\bar{A}} \otimes \tau_{\bar{B}}) \right] \\
        \sum_i p_i \sigma_B^{(i)} & = \Tr_{\bar{A}, b}\left[\E(\tau_{\bar{A}} \otimes \rho_{\bar{B}}) \right],
    \end{align}
    where $\tau_{\bar{A}}$ and $\tau_{\bar{B}}$ are arbitrary states. 
\end{definition}
Note that the starting state $\rho_{\bar{A}}\rho_{\bar{B}}$ can be replaced by a separable state with classical correlations, \textit{mutatis mutandis}.

Not only SLCs are separability-preserving, but they also preserve area-law entanglement when acting on separable states:
\begin{theorem}
    For $\E$ an SLC and $\rho_{AB}$ a separable state between regions $A$ and $B = A^c$, then $\E(\rho_{AB})$ satisfies an \textit{area-law of entanglement}:
    \begin{equation}
        E^F_{A:B}(\E(\rho_{AB})) = O(|\partial A|),
    \end{equation}
    where $E^F_{A:B}$ is the entanglement of formation between $A$ and $B$.
\end{theorem}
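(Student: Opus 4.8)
The plan is to reduce the statement, via convexity of the entanglement of formation, to a single local channel acting on a pure product state, and then to control the entanglement it produces using the lightcone of the dilating finite-depth circuit.

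First I would expand both inputs. Since $\rho_{AB}$ is separable, write $\rho_{AB} = \sum_k q_k\, \rho_A^{(k)} \otimes \rho_B^{(k)}$, and since $\E$ is an SLC, write $\E = \sum_i p_i\, \E_{{\rm LC},i}$, so that $\E(\rho_{AB}) = \sum_{i,k} p_i q_k\, \E_{{\rm LC},i}\!\big(\rho_A^{(k)} \otimes \rho_B^{(k)}\big)$. Spectrally decomposing each $\rho_A^{(k)}$ and $\rho_B^{(k)}$ further rewrites every summand as a mixture of $\E_{{\rm LC},i}$ applied to pure product states $\ketbra{\phi_A}{\phi_A} \otimes \ketbra{\phi_B}{\phi_B}$. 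Because $E^F_{A:B}$ is convex (combine the optimal pure-state decompositions of the summands), it then suffices to establish a \emph{uniform} bound $E^F_{A:B}\big(\E_{{\rm LC},i}(\ketbra{\phi_A}{\phi_A} \otimes \ketbra{\phi_B}{\phi_B})\big) \le C\,|\partial A|$, valid for every LC $\E_{{\rm LC},i}$ in the ensemble and every pure product state, with $C$ depending only on the local Hilbert space dimension $d$ and on a common bound for the ranges/lightcone radii of the channels in the ensemble (finite by definition of an SLC), but not on $i$, $k$, or the state.

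Next I would use the Stinespring dilation of Definition~\ref{def:LC}: $\E_{{\rm LC},i}(\cdot) = \Tr_a\big[U(\cdot \otimes \ketbra{\bf 0}{\bf 0}_a)U^\dagger\big]$ with $U$ a finite-depth local unitary on the enlarged lattice, and split the ancilla $a = a_A \sqcup a_B$ according to whether its sites sit over $A$ or over $B$. Form the pure state $\ket{\Psi} \defeq U\big(\ket{\phi_A}_A \ket{\phi_B}_B \ket{\bf 0}_a\big)$, so that $\E_{{\rm LC},i}(\ketbra{\phi_A\phi_B}{\phi_A\phi_B}) = \Tr_a \ketbra{\Psi}{\Psi}$. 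The key estimate is that $\ket{\Psi}$ has area-law Schmidt rank across the cut $(A a_A)\,|\,(B a_B)$: the pre-$U$ state is a product across this cut, and $U$, being a finite-depth circuit of lightcone radius $r$, can only entangle the $O(r\,|\partial A|)$ sites within distance $r$ of the $A|B$ interface, so the Schmidt rank of $\ket{\Psi}$ across $(A a_A)\,|\,(B a_B)$ is at most $\chi \defeq d^{\,c\, r\,|\partial A|}$, i.e.\ $\log\chi = O(|\partial A|)$ uniformly. Finally, measuring both ancilla registers in a product orthonormal basis $\{\ket{\alpha}_{a_A}\otimes\ket{\beta}_{a_B}\}$ yields a pure-state ensemble $\E_{{\rm LC},i}(\ketbra{\phi_A\phi_B}{\phi_A\phi_B}) = \sum_{\alpha,\beta} p_{\alpha\beta}\ketbra{\chi_{\alpha\beta}}{\chi_{\alpha\beta}}$ with $\ket{\chi_{\alpha\beta}} \propto (\bra{\alpha}_{a_A}\otimes\bra{\beta}_{a_B})\ket{\Psi}$ a state on $\Hilb_A \otimes \Hilb_B$; since projecting part of each side of a bipartite pure state cannot raise its Schmidt rank, each $\ket{\chi_{\alpha\beta}}$ has Schmidt rank $\le \chi$ across $A|B$, hence $S_A(\ketbra{\chi_{\alpha\beta}}{\chi_{\alpha\beta}}) \le \log\chi$, and therefore $E^F_{A:B}\big(\E_{{\rm LC},i}(\ketbra{\phi_A\phi_B}{\phi_A\phi_B})\big) \le \sum_{\alpha,\beta} p_{\alpha\beta}\log\chi = \log\chi = O(|\partial A|)$. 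Combined with the first reduction this gives $E^F_{A:B}(\E(\rho_{AB})) = O(|\partial A|)$.

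I expect the main obstacle to be making the lightcone estimate of the third paragraph fully rigorous and, crucially, \emph{uniform} over the ensemble: one must block the dilating circuits into a common brick-like form, argue that only gates straddling the $A|B$ interface contribute to the Schmidt rank across $(A a_A)\,|\,(B a_B)$, count them ($O(r|\partial A|)$ of them), and check that the resulting constant $C$ depends only on $d$ and on the uniform range bound guaranteed by the definition of an SLC. Everything else --- convexity of $E^F$, the Stinespring reduction, the measure-the-environment step, and monotonicity of Schmidt rank under one-sided projections --- is routine.
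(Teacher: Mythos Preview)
Your proof is correct and follows essentially the same route as the paper's: convexity of $E^F$ reduces the problem to a single LC acting on a product state, and then the Stinespring dilation together with the finite lightcone of the dilating LU yields the area-law bound. The only minor difference is that you spell out the lightcone Schmidt-rank estimate and construct the pure-state decomposition explicitly by measuring the ancilla in a product basis, whereas the paper simply invokes monotonicity of $E^F$ under the local partial trace $\Tr_{A'B'}$ and asserts that LUs preserve area-law entanglement from a product input.
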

\begin{proof}
    From the convexity of the entanglement of formation, we may assume $\rho_{AB}$ is uncorrelated, $\rho_{AB} = \rho_{A} \rho_{B}$, without loss of generality. Furthermore,
    \begin{align}
        E^F_{A:B}(\E(\rho_{AB})) \leq \sum p_i E^F_{A:B}(\E_i(\rho_{AB})).
    \end{align}
    By analyzing each LC term above, we use the fact that LCs preserve area-law of entanglement:
    \begin{align}
        E^F_{A:B}(\E_i(\rho_{AB})) & = E^F_{A:B}(\Tr_{A'B'}[U_i \rho_{A} \rho_{B} \otimes \ketbra{\bf 0}{\bf 0}_{A'B'} U_i^\dagger]) \\
        & \leq E^F_{AA':BB'}(U_i \tilde{\rho}_{A A'} \otimes \tilde{\rho}_{B B'} U_i^\dagger) \\
        & = O(|\partial A|),
    \end{align}
    where $\tilde{\rho}_{A A'} = \rho_{A} \otimes \ketbra{\bf 0}{\bf 0}_{A'}$ and similarly for $\tilde{\rho}_{B B'}$
\end{proof}

\begin{figure}[ht]
    \centering
    \begin{tikzpicture}
    % CP class
    \filldraw[fill=gray!10, rounded corners] (0,-0.5) rectangle (7.5,5);
    \node at (7, 4.7) {\textbf{CP}};
    \fill (0.5, 4.5) circle (0.05) node[right] {Example \ref{example:CP_not_SP}};

    % SP class
    \filldraw[fill=gray!20, rounded corners] (0.5,-0.25) rectangle (7,4);
    \node at (6.5, 3.7) {\textbf{SP}};
    \fill (1, 3.5) circle (0.05) node[right] {Example \ref{example:SP_not_SLC_LP}};

    % SLC class
    \filldraw[fill=green!20, rounded corners, fill opacity=0.5] (1,0.2) rectangle (5,3);
    \node at (1.7, 2.7) {\textbf{SLC}};
    \fill (1.5, 1.8) circle (0.05) node[right] {Ex. \ref{example:SLC_not_LP}};
    \fill (1.5, 1.2) circle (0.05) node[right] {Ex. \ref{example:SLC_not_LP_piroli}};
    % LP class
    \filldraw[fill=red!20, rounded corners, fill opacity=0.5] (3,0) rectangle (6.5,2.8);
    \node at (6, 2.5) {\textbf{LP}};
    
    % LC class
    \filldraw[fill=yellow!20, rounded corners, fill opacity=0.5] (3.25,0.5) rectangle (4.75,2.5);
    \node at (4, 1.5) {\textbf{LC}};
\end{tikzpicture}
    \caption{Euler diagram representing the membership relations between the classes of channels discussed here. For the definitions of classes \textbf{LC} and \textbf{SLC}, see Defs. \ref{def:LC} and \ref{def:SLC}. For \textbf{CP}, \textbf{LP} and \textbf{SP}, see Defs. \ref{def:CP}, \ref{def:LP} and \ref{def:SP}. }
    \label{fig:channel_classes}
\end{figure}
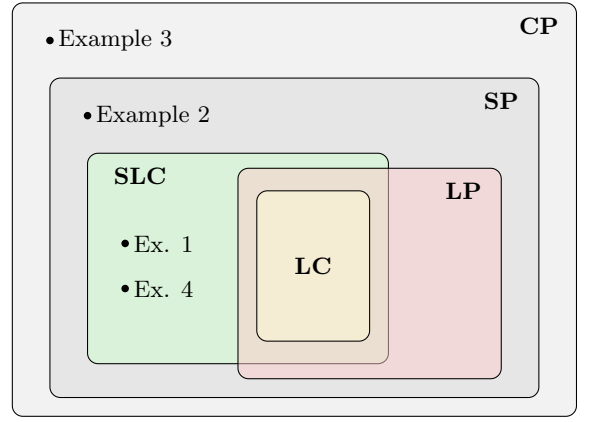

Examples of channels that are causality-preserving but not one of the previous classes can be constructed by employing replace-by-$\sigma$ channels $\mathcal{R}_\sigma(\rho) \defeq \sigma \Tr[\rho]$. $\mathcal{R}_\sigma$ is trivially causality-preserving for all states $\sigma$ because any information of the initial state $\rho$ is discarded. Indeed, $\mathcal{R}_\sigma^\dagger(A) = \one \Tr[\sigma A]$ has empty support for all operators $A$. By choosing a resourceful state $\sigma$ for each class of channels, we get the following examples:
\begin{example}[SLC but not LP]\label{example:SLC_not_LP}
    Take $\sigma_{C}$ to be the equal mixture of $\ket{00\cdots 0}$ and $\ket{11 \cdots 1}$. Since it has long-range classical correlations, then $\mathcal{R}_{\sigma_C}$ is not LP. Nevertheless, it is an SLC since it is a convex combination of the replace-by-0's and replace-by-1's local channels.
\end{example}

\begin{example}[SP but not SLC nor LP]\label{example:SP_not_SLC_LP}
    Take $\sigma_{\rm GHZ}$ to be the GHZ state. Since it becomes fully separable after tracing out one site, $\mathcal{R}_{\sigma_{\rm GHZ}}$ is SP. However, it is not LP because far enough regions of $\sigma_{\rm GHZ}$ maintain classical correlations. Furthermore, it is not SLC because the GHZ state is not in the trivial phase. Otherwise, by pulling back the strong symmetries $\prod_i X_i$ and $Z_i Z_j$ of $\sigma_{\rm GHZ}$ to a product state, it would have long-range connected correlation function $\langle \E^*(Z_i) \E^*(Z_j) \rangle_c \to 1$, which is impossible for product states.
\end{example}

\begin{example}[CP but not SP]\label{example:CP_not_SP}
    Take $\sigma_{\rm Bells}$ to be the tensor product of Bell pairs on antipodal sites on a 1d periodic chain. That is, for even system size $L$, $\sigma_{\rm Bells} = \bigotimes_{i=1}^{L/2} \ket{\Psi_{\rm Bell}}_{i, i+L/2}$. Since $\sigma_B$ has long-range entanglement, $\mathcal{R}_{\sigma_{\rm Bells}}$ is not separability-preserving.
\end{example}

Other examples can be found without resorting to replacement channels, such as
\begin{example}[SLC but not LP \cite{piroli_quantum_2020}]\label{example:SLC_not_LP_piroli}
    The action of flipping all spins with probability half, or doing nothing otherwise, describes an SLC channel $\E(\rho) = \frac{1}{2}\rho + \frac{1}{2}\prod_i X_i \rho \prod_i X_i$ that is not LP, since $\E^\dagger(Z_iZ_j) \neq \E^\dagger(Z_i) \E^\dagger(Z_j) = 0$
\end{example}

The relations between the classes of channels discussed here, along with the examples above, are summarized in Fig. \ref{fig:channel_classes}.

\section{Further properties of the mcoCMI}
\label{appendix:further_props_mcoCMI}

Here we state and prove properties of the mcoCMI (Def. \ref{def:mcoCMI}), and discuss its relation to the co(QCMI).

\begin{enumerate}
    \item \textit{Positivity}: $I^{\sqcup}_{A:C|B}(\rho) \geq 0$. \label{app-item:positivity}
    \item \textit{Convexity}: $I^{\sqcup}_{A:C|B}(p \rho + (1-p) \sigma) \leq p I^{\sqcup}_{A:C|B}(\rho) + (1-p) I^{\sqcup}_{A:C|B}(\sigma)$. \label{app-item:convex}
    \item \textit{Reduction to CMI for pure states}: $I^{\sqcup}_{A:C|B}(\ketbra{\psi}{\psi}) = I_{A:C|B}(\ketbra{\psi}{\psi})$. \label{app-item:CMI_pure-states}
    \item \textit{Upper bound by CMI}: $I^{\sqcup}_{A:C|B}(\rho) \leq I_{A:C|B}(\rho)$. \label{app-item:CMI_upper_bound}
    \item \textit{Monotonicity under strictly local mixed-unitary channels}: $I^{\sqcup}_{A:C|B}(\E_{\text{m}U}(\rho)) \leq I^{\sqcup}_{A:C|B}(\rho)$ for $\E_{\text{m}U} = \sum_i p_i U_i (\cdot) U_i^\dagger$, where all $U_i = U_A^{(i)} \otimes U_B^{(i)} \otimes U_C^{(i)} \otimes U_E^{(i)}$ act strictly locally. \label{app-item:monotonic_mixed_unitary}
    \item \textit{Monotonicity under strictly local channels acting on $A$, $C$ and $E$}: $I^{\sqcup}_{A:C|B}(\E_R(\rho)) \leq I^{\sqcup}_{A:C|B}(\rho)$ for $\E_R$ any channel acting on $R \in \{A, C, E\}$.\footnote{It is not clear to us if the mcoCMI is monotonic under channels acting on the conditioning region $B$.} \label{app-item:monotonic_channels}
    \item \textit{Separability under $A|C$ bipartition}: If $I^{\sqcup}_{A:C|B}(\rho) = 0$, then $\rho_{AC}$ is separable. (The converse is not true: take $\rho = \ketbra{GHZ}{GHZ}$) \label{app-item:AC-separability}
    \item \textit{Zero for fully separable states} If $\rho$ is fully separable on $ABCE$, then $I^{\sqcup}_{A:C|B}(\rho) = 0$. \label{app-item:ABCE-separability}
\end{enumerate}
\begin{proof}
\begin{enumerate}
    \item Follows directly from strong subadditivity, $I(A:C|B) \geq 0$.
    
    \item Follows directly from the mixed convex roof construction: If $\{ (p_i, \rho_i) \}_i$ and $\{ (q_j, \sigma_j) \}_j$ are optimal decompositions of $\rho$ and $\sigma$ respectively, then $\{ (p p_i, \rho_i) \}_i \sqcup \{ ((1-p) q_j, \sigma_j) \}_j$ is a decomposition of $p \rho + (1-p) \sigma$. Hence, 
    \begin{align}
        I^{\sqcup}_{A:C|B}&(p \rho + (1-p) \sigma) \leq \\ 
        & \sum_i p p_i I_{A:C|B}(\rho_i) + \sum_j (1-p) q_i I_{A:C|B}(\sigma_i) \\
        & = p I^{\sqcup}_{A:C|B}(\rho) + (1-p) I^{\sqcup}_{A:C|B}(\sigma).
    \end{align}

    \item Follows from pure states being, by definition, extremal in the convex set of states.
    
    \item Follows directly from taking the trivial decomposition of $\rho = 1 \cdot \rho$.

    \item First note that $I^{\sqcup}_{A:C|B}$ is invariant under strictly local unitaries $U = U_A \otimes U_B \otimes U_C \otimes U_E$, since there is a one-to-one correspondence between decompositions of $\rho$ and $U \rho U^\dagger$ for which the CMI remains invariant. This plus the convexity of the mcoCMI proves the monotonicity.

    \item From the Stinespring dilation theorem, any channel $\E_S$ acting on a quantum system $S$ is of the form
    \begin{equation}
        \E_S(\rho) = \Tr_{S'}[U_{SS'} (\rho \otimes \ketbra{0}{0}_{S'}) U^\dagger_{SS'}],
    \end{equation}
    for $U_{S S'}$ a unitary operator acting on the extended system $SS'$. Thus, to prove monotonicity of $I^{\sqcup}_{A:C|B}$, it suffices to separately consider the operations of 1) Pure state ancilla addition, 2) local unitaries, and 3) tracing out. For the first two types of operations, the monotonicity follows easily. For the tracing out operation, consider a state $\rho$ defined on $AA'BCE$, with optimal decomposition $\rho = \sum_i p_i \rho_i$ for $I^{\sqcup}_{AA':C|B}(\rho)$. Since $\{p_i, \Tr_{A'}[\rho_i] \}$ forms a decomposition of $\Tr_{A'}[\rho]$, then
    \begin{align}
        I^{\sqcup}_{A:C|B}(\Tr_{A'}[\rho]) & \leq \sum_i p_i I_{A:C|B}(\Tr_{A'}[\rho_i]) \\
        & \leq \sum_i p_i I_{AA':C|B}(\rho_i) \\
        & = I^{\sqcup}_{AA':C|B}(\rho).
    \end{align}
    The same calculation is valid if $A$ is replaced by $C$ or $E$.

    \item If $I^{\sqcup}_{A:C|B}(\rho)=0$, then there exist a optimal decomposition $\{ p_i, \rho^{(i)} \}$ satisfying $\forall i,\ I_{A:C|B}(\rho^{(i)}) = 0$. That implies $\rho^{(i)}_{AB}$ is separable \cite{hayden_structure_2004}, which in turn implies $\rho_{AB} = \sum_i p_i \rho_{AB}^{(i)}$ is separable.

    \item Follows directly from the fact that $I_{A:C|B}$ is zero for tensor product states $\rho_A \otimes \rho_B \otimes \rho_C \otimes \rho_E$.
\end{enumerate}
\end{proof}

The pure-state convex roof extension of the CMI (``co(QCMI)''), treated in \cite{wang_analog_2024}, satisfies all of the above, with the exception of properties \ref{app-item:CMI_upper_bound} and \ref{app-item:monotonic_channels}. A counterexample for both relies on the ``GHZ mixture'' states $\rho_p = p \ketbra{GHZ_+}{GHZ_+} + (1-p) \ketbra{GHZ_-}{GHZ_-}$, where $\ket{GHZ_\pm} = \frac{1}{\sqrt{2}} (\ket{000} \pm \ket{111})$. 
By using the ``trivialization'' method described in \cite{zhu_unified_2024} for the numerical estimation of convex roof extended quantities, we verified that the co(QCMI) of $\rho_p$ is strictly greater than its CMI, for $0<p<1$, violating property \ref{app-item:CMI_upper_bound}. We also found an optimal decomposition of $\rho_p$ for the co(QCMI) as follows: $\rho_p = \frac{1}{2} \ketbra{\psi_{\theta^*}}{\psi_{\theta^*}} + \frac{1}{2} \ketbra{\psi_{\pi - \theta^*}}{\psi_{\pi - \theta^*}}$, where $\ket{\psi_\theta} \defeq \cos(\theta/2) \ket{000} + \sin(\theta/2) \ket{111}$ and $\theta^* = \arcsin(1 - 2p)$. Furthermore, by taking a purification $\ket{\phi_p}$ of $\rho_p$ on $ABCE$, it follows that $\text{co(QCMI)}[\rho_p] > I_{A:C|B}(\rho_p) = \text{co(QCMI)}[\ketbra{\phi_p}{\phi_p}]$, thus violating property \ref{app-item:monotonic_channels} for the tracing-out channel $\E_E = \Tr_E$.

We also note that a multipartite generalization of the mixed convex roof extension of the mutual information, $I^{\sqcup}_{A:C}$, where the conditioning region $B$ is empty, was studied in \cite{yang_squashed_2009} from the perspective of it being a good entanglement measure. See also \cite{tucci_entanglement_2002, nagel_another_2003} for the bipartite case.

\section{Decoupling of Pauli-X and Z noise for CSS codes}
\label{appendix:noisy_TC_decoupling}

Here, we prove that entropic measures of the noisy toric code state $\rho_{p_X, p_Z}$ contain separate contributions from the purely $Z$-dephased state $\rho_{0, p_Z}$ and the purely $X$-dephased state $\rho_{p_X,0}$. Namely, we will prove that for any region $A$,
\begin{equation}
    \label{eq:entropy_decoupling}
    S_A(\rho_{p_X, p_Z}) + S_{A}(\rho_{0,0}) = S_A(\rho_{0, p_Z}) + S_A(\rho_{p_X, 0}).
\end{equation}
An analogous relation for the CMI $I_{A:C|B}$ follows directly from the equation above, thus establishing Eq. \eqref{eq:CMI_decoupling}.

What matters for the decoupling of Eq. \eqref{eq:entropy_decoupling} is the fact that the initial state has stabilizers that are only Pauli-$Z$ or $X$ strings, also known as a CSS code \cite{nielsen_quantum_2010}. Thus, we will extend our assumptions to any CSS code (mixed) state $\rho_0$ that is acted upon by a mix of $Z$ and $X$ dephasing:
\begin{equation}
    \label{eq:noisy_CSS_state}
    \rho_{p_X, p_Z} \defeq \left[\bigotimes_{e} \E_e(X, p_X) \circ \E_e(Z, p_Z) \right](\rho_0).
\end{equation}

Since the reduced density matrix of a CSS code is also a CSS code and the noise is on-site (so $\Tr_{A^c}$ commutes with the channel of Eq. \eqref{eq:noisy_CSS_state} above), then we can drop the subregion $A$ in Eq. \eqref{eq:entropy_decoupling} and consider just the global von Neumann entropy, without loss of generality.

Let $G = G_X \times G_Z$ be the $[n,k]$ stabilizer group of $\rho_0$. Then,
\begin{equation}
    \rho_0 = \frac{1}{2^k} \prod_{g_Z \in G_Z} \frac{\one + g_Z}{2} \prod_{g_X \in G_X} \frac{\one + g_X}{2}.
\end{equation}
Each Kraus operator of the noise can either do nothing or flip the sign of a subset of stabilizers of $G_Z$ or $G_X$, exclusive. Thus, we have
\begin{align}
    \begin{split} 
        \rho_{p_X, p_Z} & = \frac{1}{2^k} (\sum_{s_X} \mathbb{P}^X_{p_Z}(s_X) \prod_{g_X \in G_X} \frac{\one + s_X(g_X) \, g_X}{2}) \\
        & \quad \times (\sum_{s_Z} \mathbb{P}^Z_{p_X}(s_Z) \prod_{g_Z \in G_Z} \frac{\one + s_Z(g_Z) \, g_Z}{2}) 
    \end{split} \\
    & =\sum_{s_X, s_Z} \mathbb{P}^X_{p_Z}(s_X) \mathbb{P}^Z_{p_X}(s_Z) \cdot\rho_G(s_X, s_Z).
\end{align}
where $\rho_G(s) = \frac{1}{2^k} \prod_{g \in G} \frac{\one + s(g) g}{2}$ for $s \in \{\pm1\}^{G}$, and $\mathbb{P}^X_{p_Z}(s_X)$ is the probability distribution arising from all the possible ways the Pauli-$Z$ noise operators can flip the sign of the stabilizers of $G_X$ and result in $s_X \in \{\pm1\}^{G_X}$ (similarly with $\mathbb{P}^Z_{p_X}(s_Z)$).

With the expression above, and knowing that the states $\rho_G(s)$ are mutually orthogonal and have the same entropy as $\rho_0$, the von Neummann entropy of $\rho_{p_X, p_Z}$ can be easily calculated:
\begin{align}
    \begin{split}
        S(\rho_{p_X, p_Z}) & = H(\mathbb{P}^X_{p_Z} \times \mathbb{P}^Z_{p_X}) + \\ 
        & \quad \sum_{s_Z, s_X} \mathbb{P}^X_{p_Z}(s_X) \mathbb{P}^Z_{p_X}(s_Z) S(\rho_G(s_X, s_Z))
    \end{split} \\
    & = H(\mathbb{P}^X_{p_Z}) + H(\mathbb{P}^Z_{p_X}) + S(\rho_0).
\end{align}
where $H$ is the classical Shannon entropy. Finally, Eq. \eqref{eq:entropy_decoupling} follows directly from the expression above.

\section{Entanglement negativity of fermionic imTO states}
\label{appendix:negativity}

In the main text, we have shown that the ZX-decohered toric code is long-range entangled due to the strong fermion 1-form symmetry. Here we aim to directly diagnose such long-range entanglement via entanglement negativity, which is a computable measure of mixed-state entanglement defined as \cite{vidal_computable_2002}
\begin{equation}
    E_N(\rho) = \log \norm{\rho^{T_A}}_1,
\end{equation}
where $\rho^{T_A}$ is the partial transpose of $\rho$ in the $A$ region, and $\norm{A}_1 = \Tr \sqrt{A^\dagger A}$ is the trace norm.

While the entanglement negativity for the ZX-decohered toric code model has been computed in Ref.~\cite{wang_intrinsic_2025} using a replica calculation, we provide an alternative derivation based on stabilizer formalism without resorting to a replica trick\footnote{During the preparation of this manuscript, we became aware of a work~\cite{kuno_intrinsic_2025} employing the stabilizer formalism descibed in Sec. \ref{sec:negativity-stab_formalism} to calculate the entanglement negativity of the ZX-decohered toric code.}. Finally, we apply the same method to calculate the entanglement negativity of the maximally mixed state with the fermionic one-form symmetry of Kitaev's honeycomb model.

\subsection{Stabilizer formalism}
\label{sec:negativity-stab_formalism}

Given a stabilizer state $\rho \propto  \prod_{ i } \frac{1+S_i}{2}$ acting on a bipartite Hilbert space $\mathcal{H} = \mathcal{H}_{\mathcal{A}} \otimes  \mathcal{H}_{\mathcal{B}} $, with $\{S_i\}$ being the set of stabilizer generators, we define a commutation matrix $M_{ij} $, which encodes the algebra of stabilizers when restricted to the subregion $\mathcal{A}$: 

\begin{equation}\label{eq:nega_M}
	M_{ij} = \begin{cases}
		0  \quad  \text{for }  [S_i]_{\mathcal{A}}    [S_j]_{\mathcal{A}}   =   [S_j]_{\mathcal{A}}  [S_i]_{\mathcal{A}}  \\
		1  \quad  \text{for }  [S_i]_{\mathcal{A}}    [S_j]_{\mathcal{A}}   =    - [S_j]_{\mathcal{A}}  [S_i]_{\mathcal{A}} 
	\end{cases}
\end{equation}
where $[S_i]_{\mathcal{A}}$ is the operator obtained by only keeping the part of $S_i$ supported on the subregion $\mathcal{A}$. The entanglement negativity is determined by the rank of $M$ (as a matrix on $\Z_2$ field):
\begin{equation}\label{eq:nega}
E_N =  \frac{\text{Rank}(M)}{2} \log 2 
\end{equation}
When all the stabilizers $S_i$ are geometrically local, only the stabilizers around the bipartition boundary can anticommute in a subregion, implying that it suffices to only consider these boundary operators in the matrix $M$ for the negativity calculation. Eq.\ref{eq:nega} was first derived in Refs.~\cite{hsieh_nega_2021,shi2020_nega}, and we present an arguably simpler derivation in Appendix.\ref{appendix:derivation}, built on the method in Ref.\cite{Lu_Vijay_2023_nega}. 

\subsection{ZX decohered toric code}\label{sec:ZX}
Given the 2d toric code described by a vertex term $A_v= \prod_{e\in \partial^T  v  }Z_v$ (i.e. the product of four Pauli-Zs around a vertex $v$) and a plaquette term $B_p=\prod_{e\in \partial p}X_e$ (the product of four Pauli-Xs around a plaquette $p$), its ground subspace density matrix is a stabilizer mixed state $\propto  \prod_{v} \frac{1+A_v}{2} \prod_{p}  \frac{1+B_p}{2}$. Under the maximal decoherence by the ZX noise channel ($p=1/2$ in \eqref{eq:ZX_decoherence_channel}), the decohered mixed state is the ground-subspace density matrix of the following stabilizer Hamiltonian

\begin{equation}
 \includegraphics[width=7cm]{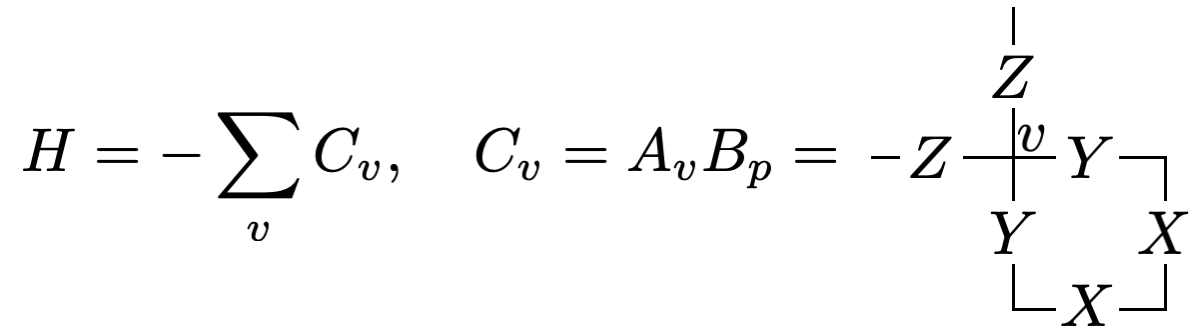}
\end{equation}
Namely, the local stabilizer $C_s$ is the product of two neighboring $A_v, B_p$ stabilizers. Below we utilize Eq.\ref{eq:nega} to calculate entanglement negativity by considering two types of bipartition boundaries. 

 \textbf{Boundary choice 1}:
We impose the periodic boundary condition along $\hat{x}$ direction and the open boundary condition along $\hat{y}$ direction, so that the 2d lattice has the topology of a cylinder. We divide the system into two subregions using a horizontal cut (the dashed line below) of size $L$:
\begin{equation}
\includegraphics[width=5cm]{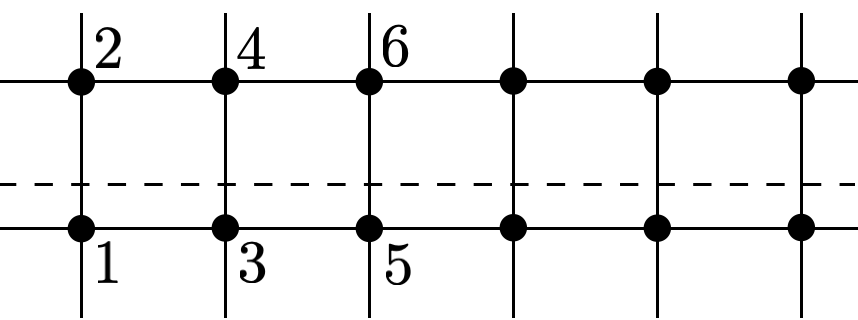} 
\end{equation}
There are $N=2L$ relevant boundary stabilizers labeled by $C_1, C_2 ,\cdots, C_{2L}$, where each $C_i$ stabilizer must anticommute with its two neighboring stabilizers ($C_{i-1}, C_{i+1}$) when restricted to a subregion. This indicates the following commutation matrix 
\begin{equation}\label{eq:commutation_ZX_toric}
M= \begin{pmatrix}
0 & 1 & 0 & 0 & ...& 1\\
1 & 0 & 1 &  0& ... & 0  \\
0 & 1 & 0 & 1 & ... & 0 \\
 \vdots & \vdots &  \vdots &\vdots &  \ddots &  \vdots
\end{pmatrix}	
 \end{equation}
In particular, the commutation matrix $M$ has a rank $2L-2$, indicating the entanglement negativity 
\begin{equation}
E_N = L \log 2  - \log 2,
\end{equation}
We also note that the algebra among those boundary $C_v$ stabilizers is the same as the alternating star and plaquette stabilizers in the un-decohered toric code, which therefore has the same commutation matrix and entanglement negativity. The subleading constant $\log 2$ can be identified as the topological entanglement negativity, which is expected to detect the long-range entanglement.

In fact, the entanglement negativity is $E_N  = L \log 2 - \log 2 $ for any noise rate $p$, which has been observed in Ref.\cite{wang_intrinsic_2025} using a replica calculation. Here we provide a simple derivation by properly choosing the stabilizers that describe the toric code.  To begin, consider the un-decohered toric code,  we multiply each plaquette stabilizer $B_p$ below the entanglement cut by its upper-left star stabilizer $A_s$. Hence, the set of stabilizers in the lower subregion can be equivalently given by the set  $(A_sB_p, B_p)$.  On the other hand, for the upper subregion,  we multiply each star stabilizer $A_s$ above the entanglement cut by its lower-right plaquette stabilizer $B_p$, and hence,  the set of stabilizers in the upper subregion can be equivalently given by the set  $(A_s,  A_sB_p)$.  This redefinition of the stabilizers has the advantage that every boundary stabilizer along the entanglement cut is the composite stabilizer $A_sB_p=C_s$.  The entanglement negativity is solely determined by these $2L$ stabilizers, and importantly,  these $2L$ stabilizers are immune to decoherence since they commute with the ZX noise operator. This simple understanding implies that entanglement negativity does not depend on $p$ for this choice of entanglement cut. 

 \textbf{Boundary choice 2}: 
Here we discuss the results of entanglement negativity when considering the following bipartition boundary: 
\begin{equation}\label{eq:nega_1}
\includegraphics[width=5cm]{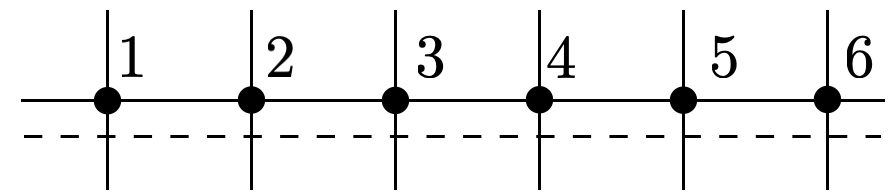} 
\end{equation}
There are only $N=L$ boundary stabilizers labeled by $C_1, C_2, \cdots , C_L$, where each $C_i$ stabilizer must anticommute with its two neighboring stabilizers ($C_{i-1}, C_{i+1}$) when restricted to a subregion. In this case, the rank of the  commutator matrix is $L-2$ for even $L$ and $L-1$ for odd $L$, implying

\begin{equation}
E_N  =  \begin{cases}
\frac{L}{2} \log 2 -\log 2 \quad \text{for even } L  \\
\frac{L}{2}  \log  2  -\frac{\log 2}{2}  \quad \text{for odd } L,  
\end{cases}
\end{equation}
Again, by identifying the subleading constant as topological negativity. One sees that it is always non-zero but the value depends on the parity of $L$.

\subsection{Honeycomb model}\label{sec:honeycomb}
Here we consider a mixed state that also features the fermion 1-form symmetry, and shows that it has the same structure of entanglement negativity as in the ZX-dephased toric code. Specifically, we consider a honeycomb lattice with each vertex accommodating a qubit. We define a mixed state $\rho \propto \prod_p \frac{1+B_p}{2}$, where $B_p$ is the plaquette operator in the Kitaev's honeycomb model \cite{kitaevAnyonsExactlySolved2006}, i.e. the product Pauli-X,-Y,-Z around a plaquette $p$: 

\begin{equation}
\includegraphics[width=3cm]{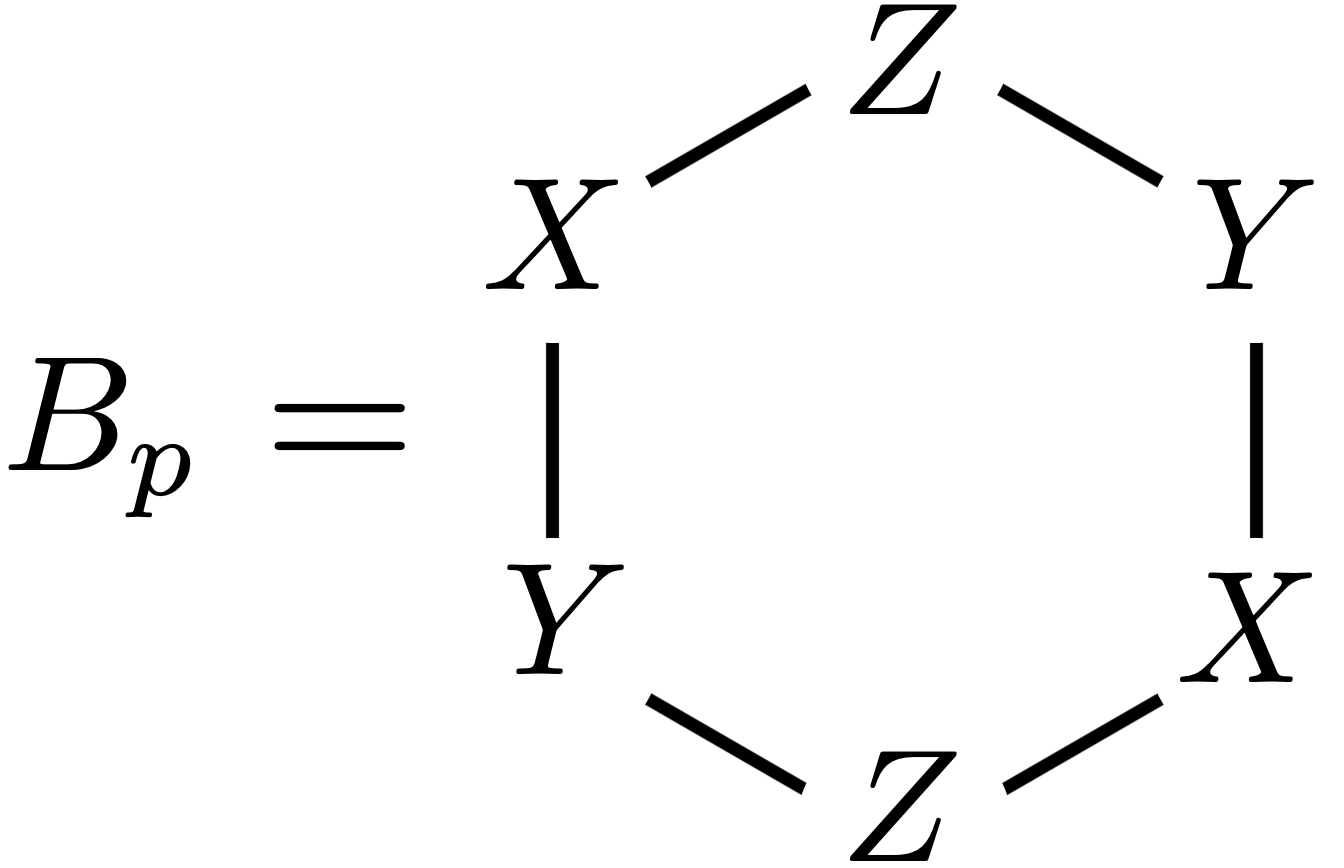} 
\end{equation}
These $B_p$ operators can be regarded as the fermion 1-form symmetry generators. Since the mixed state $\rho$ has these strong anomalous symmetries, it is long-range entangled based on the discussion in the main text. Below we present the result of entanglement negativity. In particular, the mixed state features non-zero topological entanglement negativity, whose values depend on the choice of the bipartition boundary. Specifically, we impose the periodic boundary condition along $\hat{x}$ direction and the open boundary condition along $y$ direction, so that the 2d lattice has the topology of a cylinder. 

 \textbf{Boundary choice 1}: The first choice of the bipartition boundary is depicted as follows:  

\begin{equation}
\includegraphics[width=5cm]{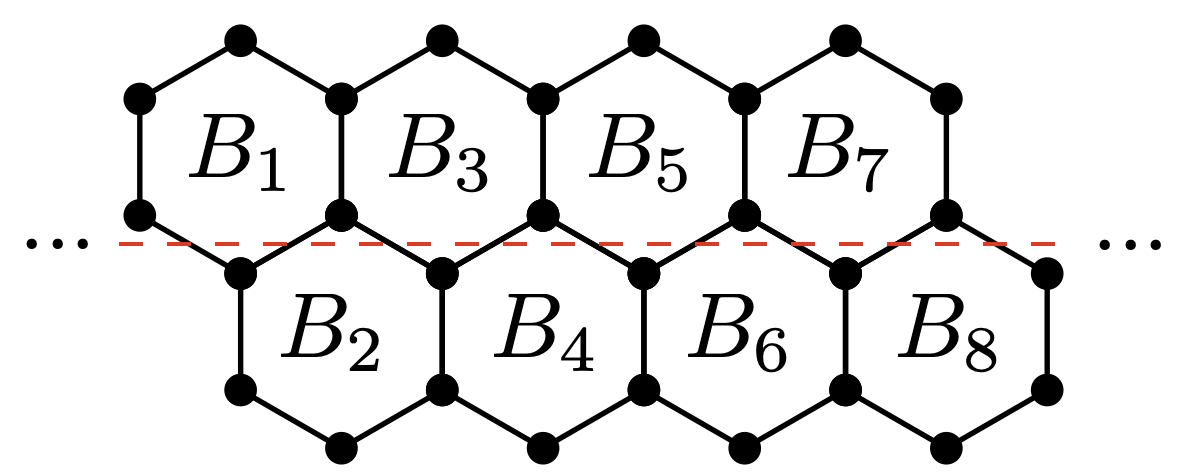} 
\end{equation}
where there are $2L$ boundary $B_p$ stabilizers along the bipartition boundary of size $L$. Their commutation relations in the subregion are exactly given by Eq.\ref{eq:commutation_ZX_toric}, indicating the entanglement negativity is $E_N = (L-1) \log 2$. The subleading constant $\log 2 $ may be identified as topological negativity.

 \textbf{Boundary choice 2}:
 The second choice of the bipartition boundary is depicted as follows:  
\begin{equation}
\includegraphics[width=5cm]{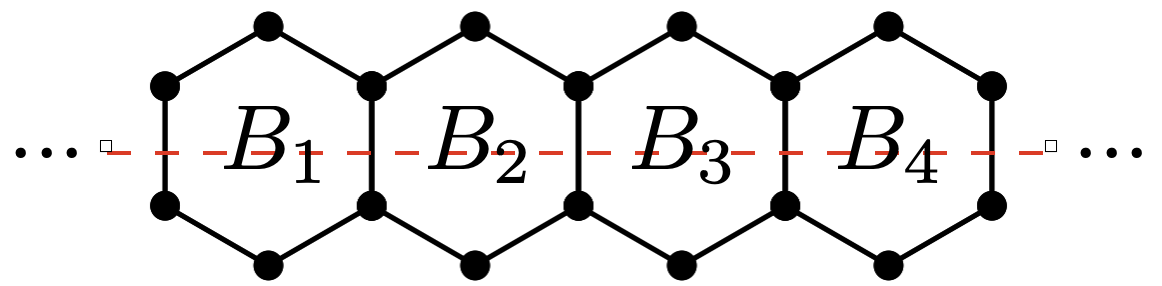} 
\end{equation}

There are only $L$ boundary stabilizers labeled by $B_1, B_2, \cdots , B_L$, where each $B_i$ stabilizer must anticommute with its two neighboring stabilizers ($B_{i-1}, B_{i+1}$) when restricted to a subregion. In this case, the rank of the  commutator matrix is $L-2$ for even $L$ and $L-1$ for odd $L$, implying

\begin{equation}
E_N  =  \begin{cases}
\frac{L}{2} \log 2 -\log 2 \quad \text{for even } L  \\
\frac{L}{2}  \log  2  -\frac{\log 2}{2}  \quad \text{for odd } L,  
\end{cases}
\end{equation}
By identifying the subleading constant as topological negativity. One sees that it is always non-zero but the value depends on the parity of $L$.

\subsection{\texorpdfstring{Derivation of Eq.\ref{eq:nega}}{Derivation of the entanglement negativity formula based on the stabilizer formalism}}\label{appendix:derivation} 
Here we present a simple derivation of Eq.\ref{eq:nega}, built on the formalism in Ref.\cite{Lu_Vijay_2023_nega} (see also Refs.\cite{Lu_2019_singularity_nega,Lu_finite_T_TO_2020}). Specifically, we will find that the negativity spectrum is completely given by the wave function of an emergent state (in a fictitious Hilbert space) defined by the commutation matrix $M$. We also note that Eq.\ref{eq:nega} was first derived in Ref.\cite{hsieh_nega_2021,shi2020_nega} with distinct approaches. 

To begin, given a stabilizer mixed state $\rho = \prod_i \frac{1+S_i}{2}$, we expand the projectors in the density matrix: $\rho \propto \prod_i (1+S_i) = \sum_{\sigma_i} \prod_i S_i^{\sigma_i} $ with $\sigma_i = 0 ,1$. Taking a partial transpose on the subregion $A$ gives the matrix 

\begin{equation}
	\rho^{T_A} \propto \sum_{\sigma_i} 
 \prod_i \tilde{S}_i^{\sigma_i} (-1)^{\sum_{i<j} \sigma_i M_{ij} \sigma_j   }, 
\end{equation}
where $\tilde{S}_i = S_i^{T_A}$,  i.e. the partial transpose of $S_i$. Then eigenvalues of $\rho^{T_A}$, i.e. negativity spectrum, can be obtained by choosing $\tilde{S}_i = 1,-1$. Below we will abuse the notation by using $\rho^{T_A}$ to denote the spectrum, as opposed to a matrix. 

To understand the structure of the negativity spectrum, it is useful to introduce a fictitious Hilbert space spanned by the Pauli-Z basis $\ket{ \{ \sigma_i \}  }$. By introducing a trivial product state $\ket{+} \propto \sum_{\sigma} \ket{\{ \sigma_i \}}$, and $\ket{\psi}$ that encodes the sign structure from taking partial transpose: $\ket{\psi}   \propto  \sum_{\sigma} (-1)^{\sum_{i<j} \sigma_i M_{ij} \sigma_j   } \ket{\{ \sigma_i \}}$, the negativity spectrum can be expressed as 

\begin{equation}
	\rho^{T_A}=\bra{+} \prod_i Z_i^{\frac{1-\tilde{S}_i}{2}}   \ket{\psi}, 
\end{equation}
where different eigenvalues are given by choosing different $\{ \tilde{S}_i =\pm 1\}$, which in turn determines the Pauli-Z insertion ($Z_i$ is a Pauli-Z matrix acting on $i$-th qubit) sandwiched between $\ket{+}$ and $\ket{\psi}$ in the fictitious Hilbert space. Note that the above expression for negativity spectrum is properly normalized, i.e. $\sum_{\{\tilde{S}_i\}}  \rho^{T_A} =\sum_{\{\tilde{S}_i\}}   \bra{+} \prod_i Z_i^{\frac{1-\tilde{S}_i}{2}}    \ket{\psi}  =1 $ since the partial transpose does not alter the trace of a matrix. This can be checked as follows

\begin{equation}
\begin{split}
\sum_{\{\tilde{S}_i\}}   \bra{+} \prod_i Z_i^{\frac{1-\tilde{S}_i}{2}}    \ket{\psi}&=   \bra{+} \prod_i (1+Z_i) \prod_{i<j } (\text{CZ}_{ij})^{M_{ij}} \ket{+} \\
		&  =   \bra{+} \prod_i (1+Z_i)  \ket{+} \\
		&=1, 
	\end{split}
\end{equation}
where we have used $\ket{\psi} = \prod_{i<j}   (\text{CZ}_{ij})^{M_{ij}} \ket{+}$, with $\text{CZ}$ being the two-qubit controlled-Z gate. 

Entanglement negativity can be obtained by summing over all absolute values of the negativity spectrum: $E_N = \log \norm{\rho^{T_A}}_1$, where 

\begin{equation}
	\norm{\rho^{T_A}}_1 = \sum_{\{\tilde{S}_i\} }  \abs{\bra{+} \prod_i Z_i^{\frac{1-\tilde{S}_i}{2}}    \ket{\psi}    }.
\end{equation} 
To proceed, we observe that it is the sum of the absolute value of the wave function of $\ket{\psi}$ in the X basis. Since $\ket{\psi}$ is a stabilizer state with the parent Hamiltonian $-\sum_{i}  X_i\prod_{j} Z_j^{M_{ij}}$, one has $\ket{\psi} \propto   \prod_i \frac{ 1+  X_i\prod_{j} Z_j^{M_{ij}} }{2}  \ket{+}  $. By expanding the product, one finds 

\begin{equation}
	\ket{\psi} = \frac{1}{\sqrt{\mathcal{N}}} \sum_{\{ \tilde{S}_i \} }  \psi(\{\tilde{S}_i\})   \ket{\{ \tilde{S}_i\}} 
\end{equation}
where $\mathcal{N}$ is a normalization constant, $ \psi(\{ \tilde{S}_i\}) = 1, -1 , 0$, and $\ket{\{\tilde{S}_i\}}$ denotes a Pauli-X basis state. $\mathcal{N}$ is the number of $\ket{\{\tilde{S}_i \}}$ with non-zero wave function value, which is given by $2^{\text{Rank}(M)  }$. Therefore, 

\begin{equation}
\begin{split}
\sum_{\{\tilde{S}_i\} }  \abs{     \bra{+} \prod_i Z_i^{\frac{1-\tilde{S}_i}{2}}    \ket{\psi}}  &= 2^{\text{Rank}(M)  }  2^{-\frac{1}{2} \text{Rank}(M)  } \\
&=2^{\frac{1}{2} \text{Rank}(M)  }.   
 \end{split}
\end{equation}
and one finds the entanglement negativity $E_N =  \frac{\text{Rank}(M)}{2}\log 2$.

\section{\texorpdfstring{Entanglement negativity of the maximally mixed state in Eq. \eqref{eq:MMS_Z2_cubed}}{Entanglement negativity of the maximally mixed state}}
\label{appendix:ent_neg_MMS}

Here, we compute the negativity $\mathcal{N}(\rho) \defeq (\norm{\rho^{T_A}}_1 - 1)/2$ of the maximally mixed state of Eq. \eqref{eq:MMS_Z2_cubed}. For simplicity of the argument, we consider the maximally mixed state (MMS) with only (strong) joint symmetry $X \defeq X_\circ X_\bullet$ instead of $X_\circ$ and $X_\bullet$ separately. This state, which we denote here by $\rho_\infty$, is the equal-weight mixture of the MMS with $X_\circ = X_\bullet = +1$ and the one with $X_\circ = X_\bullet = -1$. Hence, by the convexity of the negativity \cite{vidal_computable_2002}, it suffices to compute the negativity for $\rho_\infty$ to lower bound the negativity of the original MMS.

In terms of cat states of the $X$ symmetry, we can decompose $\rho_\infty$ into:
\begin{equation}
    \rho_\infty \propto \sum_{\alpha \in B_{CZ}} \ketbra{\alpha^{(+)}}{\alpha^{(+)}},
\end{equation}
where $\ket{\alpha^{(\pm)}} \propto (\one \pm X) \ket{\alpha} = \ket{\alpha} \pm \ket{\overline \alpha}$ is a cat state, with $\ket{\alpha}, \alpha \in \{0,1\}^N$, being a product state in the $Z$-basis, and $B_{CZ} = \{ \alpha \mid \forall \text{ loop }\ell, W_{CZ}(\ell)\ket{\alpha} = \ket{\alpha} \}$. Note that the sum above is overcounting terms, since bitstrings $\alpha$ differing by global bit flip give the same cat state $\ket{\alpha^{(+)}}$. Since the overcounting does not depend on $\alpha$, it will not alter the analysis below.

Taking the partial transpose of the above, we have
\begin{align}
    \rho_\infty^{T_A} \propto & \sum_{\alpha \in B_{CZ}} \ketbra{\alpha^{(+)}}{\alpha^{(+)}}^{T_A} \\
    \begin{split}\label{eq:MMS_PT_decomp}
    = & \sum_{\alpha \in B_{CZ}} \frac{1}{2} (\ketbra{\alpha_A \alpha_B}{\alpha_A \alpha_B} + \ketbra{\overline\alpha_A \alpha_B}{\alpha_A \overline\alpha_B} \\ 
    & \qquad +\ketbra{\alpha_A \overline\alpha_B}{\overline\alpha_A \alpha_B} + \ketbra{\overline\alpha_A \overline\alpha_B}{\overline\alpha_A \overline\alpha_B})
    \end{split}.
\end{align}
We can see from the above that the matrix $\rho_\infty^{T_A}$ is block diagonal with respect to the subspaces $V_{\alpha} \defeq \mathrm{span} \{\ket{\alpha_A \alpha_B}, \ket{\overline\alpha_A \alpha_B}, \ket{\alpha_A \overline\alpha_B}, \ket{\overline\alpha_A \overline\alpha_B}\}$. In particular, the negativity (absolute value of sum of negative eingenvalues) of $\rho_\infty$ is the probability-weighted sum of negativities of each of the block diagonal submatrices.

To find the block matrix of $\rho_\infty^{T_A}$ in the subspace $V_\alpha$, it is crucial to know if $\overline\alpha_A \alpha_B \in B_{CZ}$ or not. If so, then we have to sum the contribution of the term explicit in \eqref{eq:MMS_PT_decomp} with the one coming from $\overline\alpha_A \alpha_B \in B_{CZ}$ (and these two will be the only two distinct contributions for the submatrix in $V_\alpha$), resulting in
\begin{equation}
    \frac{1}{2}\begin{pmatrix}
        1 & 0 & 0 & 0 \\
        0 & 0 & 1 & 0 \\
        0 & 1 & 0 & 0 \\
        0 & 0 & 0 & 1
    \end{pmatrix}
    +
    \frac{1}{2}\begin{pmatrix}
        0 & 0 & 0 & 1 \\
        0 & 1 & 0 & 0 \\
        0 & 0 & 1 & 0 \\
        1 & 0 & 0 & 0
    \end{pmatrix}
    =
    \frac{1}{2}\begin{pmatrix}
        1 & 0 & 0 & 1 \\
        0 & 1 & 1 & 0 \\
        0 & 1 & 1 & 0 \\
        1 & 0 & 0 & 1
    \end{pmatrix},
\end{equation}
where we have written a 4-by-4 submatrix in the $(\ket{\alpha_A \alpha_B}, \ket{\overline\alpha_A \alpha_B}, \ket{\alpha_A \overline\alpha_B}, \ket{\overline\alpha_A \overline\alpha_B})$ basis. This submatrix has only non-negative eigenvalues and, thus, does not contribute to the negativity of $\rho_\infty$. If $\overline\alpha_A \alpha_B \notin B_{CZ}$ instead, then the submatrix is just
\begin{equation}
    \frac{1}{2}\begin{pmatrix}
        1 & 0 & 0 & 0 \\
        0 & 0 & 1 & 0 \\
        0 & 1 & 0 & 0 \\
        0 & 0 & 0 & 1
    \end{pmatrix},
\end{equation}
which has negativity equal to $1/2$. Hence, the negativity of $\rho_\infty$ is
\begin{equation}\label{eq:negativity_condition}
    \mathcal{N}(\rho_\infty) = \frac{1}{2}\mathrm{Prob}(\overline\alpha_A \alpha_B \notin B_{CZ} | \alpha_A \alpha_B \in B_{CZ}),
\end{equation}
where the conditional probability above is over all constrained bitstrings $\alpha = \alpha_A \alpha_B \in B_{CZ}$.

Note that, in the limit of very large regions $A$, the probability above converges to 1, since almost all bitstrings $\alpha \in B_{CZ}$ will satisfy $W_{CZ}(\hexagon) \ket{\overline\alpha_A \alpha_B}= -\ket{\overline\alpha_A \alpha_B}$ for at least one hexagonal loop $\hexagon$ that intersects both $A$ and $A^c$, which is sufficient for $\overline\alpha_A \alpha_B \notin B_{CZ}$. As such,
\begin{align}
    \lim_{|A|\to \infty} \mathcal{N}(\rho_\infty)= \frac{1}{2},
\end{align}
and we conclude that the MMS of Eq. \eqref{eq:MMS_Z2_cubed} is indeed entangled.

\end{document}